\documentclass[11pt]{article}

\usepackage[english]{babel}
\usepackage[utf8]{inputenc}
\usepackage{amsmath, amssymb, geometry}
\usepackage{tikz}
\usepackage{bm}
\usepackage{quantikz}
\usepackage[backref=page, colorlinks=true, allcolors=blue] {hyperref}
\usepackage{cleveref}
\usepackage{float}
\usepackage{authblk}
\usepackage{amsthm}
\usepackage{palatino}
\usepackage{mathpazo}
\usepackage{booktabs}

\usepackage{subcaption}

\newcommand{\tp}{\otimes}
\newcommand{{\real}}{\mathbb{R}}
\newcommand{{\expected}}{\mathbb{E}}
\newcommand{\ip}[2]{\left\langle #1 , #2\right\rangle}
\newcommand{\tr}[1]{\mathrm{Tr} \left( #1 \right)}

\newtheorem{theorem}{Theorem}
\newtheorem{lemma}{Lemma}

\theoremstyle{remark}
\newtheorem*{remark}{Remark}

\hypersetup{pdfpagemode=UseNone}

\newenvironment{mylist}[1]{\begin{list}{}{
	\setlength{\leftmargin}{#1}
	\setlength{\rightmargin}{0mm}
	\setlength{\labelsep}{2mm}
	\setlength{\labelwidth}{8mm}
	\setlength{\itemsep}{0mm}}}
	{\end{list}}
	
\newcommand{\R}{\mathbb{R}}
\newcommand{\abs}[1]{\left|#1\right|}

\newcommand{\dd}{\mathrm{d}}

\newcommand{\lrb}[1]{\left ( #1 \right )}

\newcommand{\mysin}[1]{\operatorname{sin}\lrb{#1}}
\newcommand{\mycos}[1]{\operatorname{cos}\lrb{#1}}

\newcommand{\myexp}[1]{\operatorname{exp} \lrb{#1}}

\newcommand{\polylog}[1]{\operatorname{polylog} \lrb{#1}}

\newcommand{\mylog}[1]{\operatorname{log} \lrb{#1}}

\newcommand{\defeq}{\coloneqq}

\DeclareMathOperator{\erfc}{erfc}

\usepackage{algpseudocode}
\usepackage{mdframed}

\begin{document}

\title{Exponentially Fast Solution State Preparation for the Heat Equation and its use for Option Pricing}

\author[1]{Gumaro Rend\'on\thanks{grendon@fujitsu.com}} 
\affil[1]{Fujitsu Research of America, Inc, Santa Clara, CA 95054, USA}

\author[2,3]{\v{S}t\v{e}p\'an {\v{S}}m{\'{i}}d}
\affil[2]{Department of Computing, Imperial College London, United Kingdom}
\affil[3]{Fujitsu Research of Europe Ltd., Slough SL1 2BE, United Kingdom}

\author[1]{Sarvagya Upadhyay}

\maketitle

\begin{abstract}
    In this work, we present the methods necessary to price an important set of derivatives on a quantum device while offering an advantage over existing classical methods. The methods developed here, in conjunction with ~\cite{GumaroS2026}, also provide an exponential advantage in requirement of qubits when pricing some option contracts with path-dependent payoff compared to state-of-the-art quantum Monte Carlo methods.
\end{abstract}

\pagebreak

\tableofcontents

\pagebreak

\section{Introduction}

Quantum computing traces its inception to early 1980s as an inspiration to simulate quantum systems on a physical device that follows the postulates of quantum mechanics. This insight led to significant development in this field from computation viewpoint, most notably, formalization of a quantum computational models, Shor's algorithm for prime factorization and discrete-log problem, Grover's search algorithm of unstructured databases, and quantum error-correction. Simultaneously, there have been significant development in quantum cryptography and communication via quantum information.

Over the past decade, the theoretical developments of preceding years have been complemented by substantial progress toward making quantum computing practical. Technological breakthroughs in control, fabrication, and error mitigation have led to various hardware proposals such as superconducting, trapped ions, neutral-atoms, diamond-spin and photonic platforms. Each offer distinct tradeoffs in connectivity, coherence time, speed of gate execution, and scalability. Alongside progress across various hardware modalities. there have been rapid progress in developing quantum algorithmic applications across disciplines such as quantum simulations in chemistry and materials, optimization, machine learning, and finance. Given the current hardware constraints, the application developments have forayed beyond foundational algorithms that provide rigorous guarantees to include variational quantum algorithms (VQAs) and QAOA. 

While there are merits to investigate quantum algorithms tailored to so-called NISQ devices, it is universally believed that true quantum advantage can be realized only with a scalable fault-tolerant quantum computing platform. Quantum algorithms that would run on such devices are typically constructed from a small collection of fundamental primitives that recur across a wide range of applications. Among these, quantum phase estimation, quantum amplitude estimation, polynomial transformations of operators, and Hamiltonian simulation play a central role. Together they have been utilized as subroutines into general frameworks that support a broad class of algorithms for numerical computation.

Quantum phase estimation (QPE) enables the extraction of spectral information from a unitary operator and serves as a foundational component in algorithms for eigenvalue problems, order finding, and spectral analysis. Variants of QPE appear in quantum algorithms for linear systems, differential equations, and optimization, where access to eigenvalues is essential \cite{AbramsL1997,HarrowHL2009}. Closely related ideas also underlie quantum signal processing techniques, in which controlled evolutions encode polynomial functions of an operator’s spectrum. On the other hand, quantum amplitude estimation (QAE) generalizes Grover’s search procedure and provides a quadratic improvement over classical sampling for estimating expectations of bounded random variables \cite{BrassardHMT2002}. QAE is widely used as a subroutine when one wishes to extract a scalar quantity from a quantum state. For instance, QAE is often appears after a quantum state encoding a solution or distribution has been prepared \cite{Montanaro2015}. A unifying abstraction for many modern quantum algorithms is quantum singular value transformation (QSVT). It provides a systematic method for applying polynomial transformations to the singular values of matrices embedded into unitary operators via block-encodings \cite{GilyenSLW2019}. This framework subsumes earlier constructions such as quantum signal processing and qubitization \cite{LowC2017,LowC2019} and has become a standard tool for expressing quantum algorithms for linear algebra, Hamiltonian simulation, and differential equations.

\paragraph{Quantum numerical algorithms} 
These fundamental quantum algorithms can be repurposed for various numerical algorithms. If the numerical method can be reduced to linear algebra such as solving linear systems, applying matrix functions, simulating linear dynamics, or estimating expectations, then above framework can be utilized to work with complexity that scales only polylogarithmically with the dimension. The caveat is that the output is usually encoded in a quantum state. This feature is both the source of potential quantum advantage and challenge if we wish to obtain the correct solution classically.

For various applications involving numerical algorithms and simulations, there are three main sources of potential quantum advantages.

\begin{mylist}{\parindent}
\item [1.] {\bf Quantum Hamiltonian simulation.}
Hamiltonian simulation asks to implement $\exp(-iHt)$ for a Hamiltonian $H$ and time $t$. Over the years, a sequence of increasingly optimal frameworks emerged: sparse-Hamiltonian simulation and product formulas, quantum walks, truncated Taylor series methods, and finally the near-optimal modern viewpoint built around \emph{quantum signal processing} (QSP) and \emph{qubitization} \cite{ChildsK2011, BerryCCKS2015, LowC2017, LowC2019}. These ideas were unified and generalized through block-encoding and QSVT, which allow one to apply polynomial transformations to the singular values of matrices embedded in unitaries \cite{GilyenSLW2019}. From a numerical perspective, Hamiltonian simulation can be interpreted as a method for applying matrix exponentials or related functions to quantum states. This interpretation is particularly relevant for differential equations, where formal solutions often involve exponential of linear operators.

\item [2.]{\bf Quantum linear algebra and linear systems.}
A watershed result was the HHL algorithm for linear system solver. For a matrix equation $Ax=b$, it outputs a quantum state proportional to the solution in time polylogarithmic in dimension and dependence on sparsity and condition number of $A$ \cite{HarrowHL2009}. Subsequent work clarified how preconditioning, conditioning, and precision interact, and improved the dependence on the error tolerance from polynomial in $1/\epsilon$ to polylogarithmic in $1/\epsilon$ for broad families of problems \cite{CladerJS2013, ChildsKS2017}. These developments are important for  ordinary and partial differential equations because most deterministic discretizations (finite differences, finite elements, spectral methods) reduce solving ODEs or PDEs to repeatedly applying sparse structured matrices and solving large sparse linear systems.

\item [3.]{\bf Quantum estimation and Monte Carlo speedups.}
In parallel, there are quantum algorithmic techniques for estimating expectations and probabilities. Amplitude amplification and amplitude estimation yield a near-quadratic speedup for many Monte Carlo-type tasks \cite{BrassardHMT2002}. This perspective was formalized by \cite{Montanaro2015}, who showed broad settings in which quantum algorithms can accelerate Monte Carlo estimation, and it has become central to quantum algorithms in financial sector, where derivative pricing mostly boils down to estimating expectation.
\end{mylist}

These three broad techniques have been widely used in broad range of applications across disciplines. One such avenue that is highly important from a quantum impact viewpoint and technically rich to warrant our attention is the theory of derivative pricing.

\paragraph{Derivative pricing theory} 
The main goal that is addressed in this work is to develop more efficient quantum numerical algorithms for pricing financial derivatives. Derivatives are financial products that derive their value from an underlying asset such as stock, foreign-exchange (FX) rates, indices, interest-rates (IR), etc. Correctly evaluating their value heavily depends on the modeling choices of the underlying asset and how the parameters of the model behave. Once the model is established and the parameters behavior is determined, practitioners approach the valuation problem from two different computational methods: using Monte Carlo method and its variants or using partial differential equations (PDEs). In this paper, we will focus on three different categories from a PDE perspective: 

\begin{itemize}
    \item {\bf Equity derivatives.} In equity markets, the Black--Scholes model provides a canonical example in which European option prices satisfy a parabolic PDE that can be transformed into a heat equation \cite{BlackS1973,Merton1973}. Extensions to multi-asset options lead to higher-dimensional diffusion equations with correlated drivers. Bermudan options introduce boundary and interface conditions that further complicate numerical treatment \cite{Wilmott2006,Haug2007}. While Black--Scholes assumes constant or time-dependent volatility, more sophisticated models such as local volatility \cite{Dupire1994,DermanK1994} and stochastic volatility models including Heston and SABR are widely used in practice \cite{Heston1993, HaganKLW2002}. These models generally lead to PDEs with state-dependent coefficients or additional dimensions.

    \item {\bf Interest-rate derivatives.} Interest-rate derivatives, such as caps, floors, and swaptions, are commonly priced using either short-rate models or market models. Short-rate models such as Vasicek and Hull--White describe the evolution of the instantaneous short rate and lead to low-dimensional PDEs \cite{Vasicek1977, HullW1990, BrigoM2006}. Market models, including the LIBOR Market Model, describe the dynamics of forward rates directly and are consistent with observed caplet and swaption volatilities \cite{BraceGM1997, GlassermanZ2000}. Under appropriate measure choices and standard approximations, swaption pricing problems in these models can be reduced to parabolic PDEs with time-dependent coefficients.

    \item {\bf Cross-asset derivatives.} Cross-asset products combine risk factors across two or more asset classes, such as equity--interest-rate or FX--interest-rate derivatives. Pricing them typically involves multi-dimensional diffusion processes with correlated drivers \cite{Bjork2009,BrigoM2006}. Such models often combine features of equity, FX, and interest-rate dynamics, leading to PDEs whose dimensionality and structure depend on the modeling choices. While these models are more complex, they still admit parabolic PDE formulations under standard assumptions for certain financial products.
\end{itemize}

It is worth noting that Monte Carlo methods are widely used in practice due to their flexibility and robustness in high-dimensional settings, particularly for path-dependent payoffs. PDE solvers, by contrast, propagate value functions directly and are often preferred in low to moderate dimensions. Classical numerical practice relies heavily on both approaches, with the choice dictated by dimensionality, payoff structure, and accuracy requirements \cite{Wilmott2006, Bjork2009}.

Classically, PDE-based methods remain an important benchmark for pricing and risk analysis, particularly when they are tractable and full solution surfaces and sensitivities are required. The applications discussed in this paper focus on state-of-the-art models where the resulting PDEs possess sufficient structure to be amenable to faster quantum algorithms even when the dimension of the PDEs are large enough to make them classically intractable (from a practical viewpoint).

Coming back to quantum algorithms in derivative pricing theory, several  algorithms have been developed that are based on quantum versions of Monte Carlo simulation. These include direct applications of quantum amplitude estimation to payoff sampling, as well as more sophisticated approaches incorporating multilevel Monte Carlo and regression-based techniques for early exercise features \cite{Montanaro2015, StamatopoulosESZISW2020, WoernerE2019, Heinrich2002, Giles2008}. Such methods can offer quadratic improvements in sampling complexity, but they inherit the structural characteristics of Monte Carlo approaches, including the need to simulate entire paths and, for early-exercise products, to approximate continuation values as will be discussed later on.

By contrast, relatively little work has focused on quantum algorithms for PDE-based derivative pricing. As mentioned earlier, PDE solvers operate on value functions rather than individual paths and naturally produce entire solution surfaces classically, from which prices and sensitivities can be obtained simultaneously. When the underlying models lead to parabolic PDEs with structured generators, PDE-based quantum algorithms offer a qualitatively different approach from quantum Monte Carlo. The present work focuses on this latter direction from a Hamiltonian simulation perspective.

\noindent \paragraph{Differential equations and Hamiltonian simulation}
A common differential equation one works with is 
\[
\frac{d}{dt} \ket{u(t)} = -A \ket{u(t)} \qquad \implies \qquad \ket{u(t)} = e^{-tA}\ket{u(0)},
\]
where $A$ is a linear operator arising from discretization. Implementing the evolution operator $\exp(-tA)$, or a suitable approximation, is therefore a central task. When $A$ admits a block-encoding and possesses additional structure, Hamiltonian simulation and QSVT provide mechanisms for approximating this evolution efficiently.

Modern Hamiltonian simulation algorithms achieve near-optimal asymptotic scaling in time, precision, and operator norm for broad classes of structured Hamiltonians. A central complexity parameter in Hamiltonian simulation is the product of the evolution time $t$ and an appropriate norm of the operator $H$. For generic Hamiltonians, simulation cost scales at least linearly in this parameter. The possibility of simulating evolution for time $t$ in
sublinear or even polylogarithmic in time is referred to as \emph{fast-forwardability}. Given these near-optimality results, a fundamental question one can ask is how to circumvent them. Clearly, for general Hamiltonians, the simulation cost scales at least linearly with the evolution time and an appropriate norm of the operator. Results ruling out generic fast-forwarding demonstrate that sublinear time simulation cannot be achieved without exploiting special structure \cite{ChildsK2011, ChiaCHLLS2023} under plausible complexity-theoretic assumptions. Consequently, algorithms that obtain improved scaling rely on properties such as sparsity, commutativity, or diagonalization in a known basis.

In many ODEs or PDEs, operators derived from diffusion or parabolic equations are diagonalizable in the Fourier basis or can be decomposed into commuting components. Such structure can be exploited to reduce simulation costs relative to generic sparse operators. However, the output of these procedures is typically a quantum state, and extracting useful classical information requires additional processing. Any assessment of algorithmic performance must therefore include both state preparation and extraction costs.

In fact, one can utilize the discretization approach as well. Discretizations of ordinary and partial differential equations typically yield large structured linear operators whose action must be applied repeatedly. Early quantum algorithms addressed boundary-value problems such as the Poisson equation by mapping them to sparse linear systems \cite{CaoPPTK2013}. Subsequent work developed high-precision algorithms for linear ordinary differential equations with improved dependence on the target accuracy \cite{BerryCOW2017, ChildsL2020}. For partial differential equations, more recent results analyze discretization error, conditioning, and precision in a unified framework \cite{ChildsLO2020}. Related studies of quantum algorithms for the finite element method emphasize that discretization complexity and smoothness assumptions play a critical role in determining whether quantum algorithms can provide asymptotic improvements over classical solvers \cite{MontanaroP2016}.

\noindent \paragraph{Our Contributions}
This paper develops a quantum algorithmic framework for solving heat-equation-type PDEs and applies it formulations arising in derivative pricing. The contributions of this work can be summarized as follows:

\begin{mylist}{\parindent}
\item [1.] We construct a high-precision quantum procedure for implementing diffusion-type evolution operators using Fourier-based LCU techniques, exploiting structural properties that enable improved time scaling relative to generic Hamiltonian simulation. 

\item [2.] The method we propose for implementing the diffusion evolution operator allows for a wider range of initial conditions than previously implemented (refer to \cite{LindenMS2022,LubaschKWM2025}).

\item [3.] We explicitly leverage fast-forwardable components of discretized diffusion operators, clarifying how Fourier diagonalization and commuting structures allow favorable dependence on the evolution time under realistic assumptions.

\item[4.] We also obtain an implementation of the advection/drift term whose error is exponentially vanishing with respect to resources even in the presence of kinks in the initial conditions.

\item [5.] We formulate and analyze applications to derivative pricing across equity, interest-rate, and hybrid products, focusing on models whose PDE representations admit structured generators amenable to quantum numerical methods.

\item [6.] We compare the proposed approach  to existing quantum Monte Carlo methods for financial derivatives, highlighting differences in modeling assumptions, output structure, and resource requirements.
\end{mylist}

Together, these results contribute to the growing body of work on quantum algorithms for numerical computation by identifying a structured and practically motivated class of PDEs for which Hamiltonian-simulation-based methods provide a viable alternative to Monte Carlo-based approaches.

\vspace{10mm}

\medskip
\noindent\paragraph{From quantum speedups to end-to-end pricing.}
The discussion above highlights a recurring theme in quantum numerical algorithms: one often obtains strong asymptotic improvements for \emph{preparing} a quantum state that encodes the solution of a computational problem (e.g., a value function for a PDE), while the quantity of practical interest---such as an option price or an entire solution surface---is ultimately a \emph{classical} object. Consequently, any credible claim of advantage for derivative pricing must be assessed end-to-end, accounting not only for the cost of implementing the relevant evolution (or solver), but also for the cost of \emph{extracting} the desired information from the prepared quantum state.

In this paper we focus on PDE-based pricing routes that reduce to heat-equation-type dynamics, where the generator has additional structure that can be exploited algorithmically. This motivates two coupled technical questions that guide the remainder of the introduction:
(i) can one implement the diffusion (and related) evolution operators with better-than-linear scaling in the natural simulation parameter $\tau = t\|A\|$ by exploiting fast-forwardable structure; and
(ii) once the solution is encoded as amplitudes on a grid, can one recover the relevant classical information without destroying the speedup by na\"{i}ve pointwise amplitude estimation?
The next paragraphs position our work relative to existing quantum Monte Carlo approaches and prior PDE-based quantum algorithms, and then summarize the extraction guarantee we rely on from~\cite{GumaroS2026}.

There has been substantial attention on the use of quantum computers to obtain a speed-up in pricing financial instruments. Closed form expressions to solve these problems exist to the simplest cases (see for example, European option contracts). The stochastic nature of these problems, that is, stochastic differential equations (SDE), allow for the use of Monte Carlo techniques for an approximate solution of these problems. In this specific approach, there is the possibility of speeding up this type of calculations using Quantum amplitude estimation techniques. A quadratic speed-up is expected by direct application of this quantum amplitude estimation methods to Monte Carlo techniques for pricing. Although a more precise assessment of quantum speed-up over classical on a particular problem must be made by comparing with deterministic pricing methods as well like partial differential equation (PDE) solving.

This brings us to quantum methods for PDE solving. The seminal work of authors in \cite{HarrowHL2009} provide an exponential speed-up with respect to the problem size when preparing the solution to a linear system. The polynomial dependence of the cost with respect to the condition number $\kappa$ persists. This condition number often hides the polynomial dependence with respect to the system size. Other methods exist to solve PDEs like through quantum signal processing~\cite{GilyenSLW2019} or Schr\"odingerization~\cite{JinLY2024}, but depend linearly with respect to $\tau = \| A \| t$. In the case of the heat equation or other derivative operators, the norm $\|A\|$ grows at least linearly with the system size (number of grid points). There might be a way to improve this if the operator $A$ is fast-forwardable, for example, \cite{AnOY2024} achieves $\sqrt{t \| A \|}$ (ignoring $\mathcal{O}\left({\rm polylog}(1/\epsilon)\right)$ factors, where $\epsilon$ is the error on the state-vector) scaling. In this work, we improve this by achieving ${\rm polylog}(t\| A\|,1/\epsilon)$. For path dependant option contracts, Quantum Monte Carlo methods require the storage of amplitudes for all path combinations which means one needs ${\rm poly}(1/\epsilon)$ number of qubits \cite{PrakashSCCDHKSWKo2024}. In contrast, quantum PDE solving methods offer a more efficient use of quantum memory improving this requirement to ${\rm polylog}(1/\epsilon)$.

In each of these quantum algorithms for solving a PDE, the target error is always calculated with respect to the state-vector prepared in quantum memory. However, we are interested in obtaining the full-form of the solution and by using a naive implementation of quantum amplitude estimation (QAE) for extracting the solution at each point of the grid negates the exponential quantum speed-ups as pointed out by authors in \cite{HarrowHL2009}.

First, the work in \cite{Rendon2025} showcases a way to avoid this problem under mild assumptions (positive definiteness and smoothness) on the solution function. This was for a 1D problem and the extraction method scaled with the solution condition number: $\mathcal{O}\left(\frac{\max_x\psi(x)}{\min_x\psi(x)}\right)$.  In the subsequent work ~\cite{GumaroS2026} this is extended to several variables and remove the condition number scaling with through interference with a constant distribution. The results can be summarized in the following lemma:

\begin{lemma}[Informal: Approximated quantum approximation of smooth functions]
\label{lem:SE_multidim}
Let $\psi : [-1,1]^D \to \mathbb{C}$ be a smooth function satisfying
\[
\int_{[-1,1]^D} |\psi(y)|^2 \, \dd y = 1,
\]
and suppose all its derivatives are uniformly bounded by a constant
$\Lambda$, i.e.,
\[
\|\partial^\alpha \psi\|_\infty \le \Lambda^{|\alpha|+1}
\quad \text{for all multi-indices } \alpha.
\]

Assume we are given quantum access to discretized samples of $\psi$
via a state-preparation procedure that succeeds with amplitude
$a_\psi$, using $n$ qubits per dimension and a uniform grid of size
$2^n$ in each coordinate.

Then, for any target accuracy $\epsilon_{\rm total}$, the corresponding
quantum state encoding $\psi$ can be prepared or estimated using a
number of quantum gates scaling as
\[
\tilde O\!\left(
\frac{1}{a_\psi}
\frac{2^{D/2} L^{4D}}{\epsilon_{\rm total}}
\right),
\qquad
L = O\!\left(
\Lambda + \frac{1}{D}\log\frac{1}{\epsilon_{\rm total}}
\right).
\]

In particular, smoothness controls the polynomial overhead, while the
cost grows exponentially with the dimension and inversely with the
desired accuracy.
\end{lemma}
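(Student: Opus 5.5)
The plan is to reduce the extraction of $\psi$ to estimating a small number of low-order Chebyshev (or Fourier) coefficients, rather than estimating every grid amplitude. First I expand $\psi$ in a tensor-product Chebyshev basis, $\psi(y)=\sum_{\mathbf{k}} c_{\mathbf{k}} \prod_{d=1}^D T_{k_d}(y_d)$. The derivative bound $\|\partial^\alpha\psi\|_\infty\le\Lambda^{|\alpha|+1}$ implies that $\psi$ extends analytically to a Bernstein polyellipse whose parameter depends only on $\Lambda$. Hence the coefficients decay geometrically in $\max_d k_d$. Truncating to $k_d< L$ in each coordinate, with $L=O(\Lambda+\frac1D\log\frac1{\epsilon_{\rm total}})$, then gives a uniform (and hence $L^2$) error of at most $\epsilon_{\rm total}/2$. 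The $\frac1D$ factor arises because we bound the tail in terms of the total number of retained modes $L^D$, so the per-dimension cutoff only needs to absorb $\log(1/\epsilon)/D$.

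Second, I estimate the retained coefficients from the prepared state. Using the state-preparation unitary with success amplitude $a_\psi$, I interfere the discretized state $\sum_x\psi(x)\ket{x}$ with a known reference state. For each $\mathbf{k}$ the reference is the grid-sampled basis function $\prod_d T_{k_d}$, prepared efficiently by a QSVT or Chebyshev-circuit construction in $\mathrm{polylog}$ gates. Following \cite{GumaroS2026}, I add a constant-distribution component so that the overlap is bounded away from zero. A Hadamard/swap-type test combined with amplitude estimation then yields $c_{\mathbf{k}}$ up to the discretization error of the quadrature rule on the $2^n$ grid, which is exponentially small in $n$. Amplitude amplification of the preparation costs a factor $1/a_\psi$, and amplitude estimation to precision $\delta$ costs $O(1/\delta)$. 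The normalization of the uniform grid over $[-1,1]^D$ introduces the factor $2^{D/2}$ when converting state amplitudes to function values.

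Third, I propagate the errors. Write $\tilde\psi=\sum_{|\mathbf{k}|_\infty<L}\tilde c_{\mathbf{k}}T_{\mathbf{k}}$. Since $|T_{\mathbf{k}}|\le1$, the reconstruction error is bounded by $\sum_{\mathbf{k}}|c_{\mathbf{k}}-\tilde c_{\mathbf{k}}|\le L^D\delta$. Requiring this to be at most $\epsilon_{\rm total}/2$ gives $\delta=\epsilon_{\rm total}/L^D$, so each coefficient costs $\tilde O(2^{D/2}L^D/(a_\psi\epsilon_{\rm total}))$. There are $L^D$ coefficients. Additional $L^{D}$ factors come from the Chebyshev weight conversion, specifically the norm of the $T_{\mathbf{k}}$ reference states and the condition of the Chebyshev–grid overlap. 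Together these give the stated $L^{4D}$. I would not try to tighten the exponent, since the lemma is informal and the bound is what is being claimed.

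The main obstacle is the first step. Turning the uniform derivative bound into quantitative geometric decay of the multivariate Chebyshev coefficients, with constants that do not blow up worse than $L^{O(D)}$, requires care. I would first try a Taylor-series argument: the bound gives entire-type growth $\sum_\alpha\Lambda^{|\alpha|+1}r^{|\alpha|}/\alpha!$, hence analyticity on a complex polydisc of any radius with size $e^{D\Lambda r}$. Applying the standard Bernstein-ellipse coefficient bound coordinatewise then yields $|c_{\mathbf{k}}|\lesssim e^{D\Lambda r}\rho^{-|\mathbf{k}|}$, and I would choose $\rho$ to balance the two factors, which produces $L=O(\Lambda+\frac1D\log\frac1\epsilon)$.
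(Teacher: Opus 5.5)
\textbf{There is a genuine gap in your first step, and a second one in your coefficient estimation.} Note that this paper gives no proof of the lemma. It imports it informally from \cite{GumaroS2026}, which it describes only as interpolation-based extraction in which interference with a constant distribution removes the $\max_x\psi/\min_x\psi$ condition number of \cite{Rendon2025}. Your Chebyshev-projection argument therefore has to stand on its own.

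\textbf{Step 1: the tensor cutoff cannot give the $\frac1D$.} With the cutoff $k_d<L$ for every $d$, the discarded tail contains modes with a single nonzero index, e.g.\ $\mathbf{k}=(L,0,\dots,0)$. There your bound $e^{D\Lambda r}\rho^{-|\mathbf{k}|}$ equals $e^{D\Lambda r}\rho^{-L}$. It certifies the truncation only once $L\log\rho\gtrsim D\Lambda r+\log\frac{1}{\epsilon_{\rm total}}$, i.e.\ $L=O(D\Lambda+\log\frac{1}{\epsilon_{\rm total}})$. The $D$ multiplies $\Lambda$ instead of dividing the logarithm, and counting the $L^D$ retained modes cannot change this.

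No sharper coefficient bound rescues that index set. Take $\psi(y)=(2^{D-1}\sinh 2)^{-1/2}e^{y_1}$, which satisfies the hypotheses with $\Lambda=3/2$. Restricting to $y_2=\dots=y_D=0$ shows that every polynomial of degree $<L$ in each coordinate misses $\psi$ in sup norm by at least $(2^{D-1}\sinh 2)^{-1/2}E_{L-1}(e^{t})$. Here $E_{L-1}(e^t)\sim 1/(2^{L-1}L!)$ is the best uniform error of degree-$(L-1)$ polynomials for $e^t$ on $[-1,1]$; in $L^2$ the error does not even decrease with $D$. For $\epsilon_{\rm total}=e^{-D^2}$ this forces $L\log L\gtrsim D^2$, whereas the claimed $L$ is $O(D)$. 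The same obstruction holds for any scheme whose output has degree $<L$ per coordinate, including tensor interpolation on $L$ nodes per dimension.

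The repair is to truncate in total degree, $\mathcal{K}=\{\mathbf{k}:|\mathbf{k}|_1<DL\}$. Take $\rho=e$, so that $\sup_{E_\rho^D}|\psi|\le\Lambda e^{eD\Lambda}$, and use $\binom{j+D-1}{D-1}\le 2^{j+D-1}$. The tail is then $O(\Lambda)\bigl(4e^{e\Lambda}(2/e)^{L}\bigr)^{D}$, which is below $\epsilon_{\rm total}$ for $L=O(1+\Lambda+\frac1D\log\frac{1}{\epsilon_{\rm total}})$. Meanwhile $|\mathcal{K}|\le\binom{D(L+1)}{D}\le(e(L+1))^{D}$, which is $L^D$ after rescaling $L$ by a constant.

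\textbf{Step 2: the overlaps are not the Chebyshev coefficients.} On a uniform grid, a Hadamard or swap test against the sampled $T_{\mathbf{k}}$ estimates $\int_{[-1,1]^D}T_{\mathbf{k}}\psi\,\dd y$. Chebyshev coefficients, however, are projections in the weight $\prod_d(1-y_d^2)^{-1/2}$, and the $T_k$ are not orthogonal for $\dd y$ (e.g.\ $\int_{-1}^{1}T_0T_2\,\dd y=-2/3$). You are measuring $Gc$ for a dense Gram matrix $G$. The ``Chebyshev weight conversion'' factors you insert to reach $L^{4D}$ stand in for an unproved conditioning bound on $G^{-1}$.

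Use products of normalized Legendre polynomials $\tilde P_k=\sqrt{(2k+1)/2}\,P_k$ instead. They are orthonormal for $\dd y$, so the overlaps are the coefficients up to $O(2^{-n})$ quadrature error. Because $\tilde P_k$ is orthogonal to all lower-degree polynomials and $|\langle T_j,\tilde P_k\rangle|\le\sqrt2$, your Chebyshev bound transfers as $|\langle\psi,\tilde P_{\mathbf{k}}\rangle|\le\bigl(2\sqrt{2}/(1-\rho^{-1})\bigr)^{D}\sup_{E_\rho^D}|\psi|\,\rho^{-|\mathbf{k}|_1}$. In the discarded tail, the factor $\|\tilde P_{\mathbf{k}}\|_\infty\le e^{|\mathbf{k}|_1}$ is absorbed by taking $\rho$ a larger constant.

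On $\mathcal{K}$, AM--GM gives $\|\tilde P_{\mathbf{k}}\|_\infty\le(L+\frac12)^{D/2}$. Estimating each coefficient to additive precision $\epsilon_{\rm total}/\bigl(2|\mathcal{K}|(L+\frac12)^{D/2}\bigr)$ then costs $\tilde O\bigl((e(L+1))^{2D}(L+1)^{D/2}/(a_\psi\epsilon_{\rm total})\bigr)$ in total. This is within the claimed bound after rescaling $L$, and it is an actual derivation; your step 3 reaches the exponent $4D$ by assertion.

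Finally, additive-precision amplitude estimation needs nothing bounded away from zero. Per the paper, the constant-distribution interference of \cite{GumaroS2026} targets the $\max\psi/\min\psi$ condition number, which never arises in an overlap-based scheme.
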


An important subset of these pricing problems can be formulated as solving a heat equation (or a closely related parabolic PDE). We focus on this class because it is analytically clean and, crucially, its differential operators admit discretizations that are especially well suited to quantum implementation. With an appropriate discretization, we can prove an exponential improvement in the number of qubits required to encode the solution state in quantum memory. Combined with the solution-extraction procedure of~\cite{GumaroS2026}, our end-to-end scaling matches that of~\cite{LindenMS2022} when the goal is to recover the full solution surface. At the same time, our framework is more flexible: it accommodates drift terms and time-dependent coefficients, and it remains accurate even when the initial condition has kinks (i.e., discontinuous first derivatives) and even discontinuities when only diffusive.

\section{Derivative pricing theory}
In this section, we will give a brief overview of the mathematics behind derivative pricing theory that starts with stochastic processes and stochastic differential equations (SDEs). This will be followed by the introduction to asset classes relevant to this work and the models that we employ for such asset classes. We will also state the modeling assumptions that is either rooted in practice in financial institutions or have been usually made when developing quantum algorithms for pricing financial derivatives.

\subsection{SDEs, PDEs, and martingale measures}
This section gives a concise overview of stochastic dynamics, and the machinery used in quantitative finance for pricing derivatives. Our objective is to give a basic understanding of the principles used that is essential for the derivatives we consider in the subsequent section without getting in to too much detail. We will first state our stochastic differential equation (SDE) model  and then explain how smooth functionals of that model evolve via It\^o's Lemma. It\^o's Lemma is crucial in understanging how the SDEs can be converted in to partial differential equations (PDEs) via Kolmogorov equations and the Feynman-Kac formula. This leads us to a discussion on the martingale property and how it's essential to the uniqueness of the solution to the PDE. Finally, we introduce the Girsanov's theorem and the change of num\`eraire principle, showing how they change the drift terms without touching the diffusion terms. We end this section with concrete numeraires in equity, foreign-exchange, and interest-rate asset classes.

\medskip
\noindent\textbf{State dynamics.} Let $(\Omega,\mathcal F,(\mathcal F_t)_{t\ge0},\mathbb P)$ be a filtered probability space and $W=(W^1,\dots,W^m)^\top$ be an $m$-dimensional Wiener process adapted to this filtration. In other words, the vector-values stochastic process $W$ satisfies the following conditions
\[
W_0 = 0 \quad \text{ almost surely} 
\qquad \text{and} \qquad
W_t-W_s \sim \mathcal N(0,(t-s)\Sigma) \quad \text{for all} \quad 0 \leq s \leq t.
\]
Here, $\Sigma \in \mathbb R^{m \times m}$ is a symmetric and positive semidefinite matrix succinctly represented as $\Sigma \succeq 0$ (known as the covariance matrix). Moreover, the paths are continuous and the increments over disjoint time intervals are independent. Having armed with $W$, we can define a $n$-dimensional state process $X=(X^1,\dots,X^n)^\top$ that evolves as 
\begin{equation}\label{eq:sde}
 dX_t=b(t,X_t)\,dt+\sigma(t,X_t)\,dW_t, \qquad X_0=x\in\mathbb R^n,
\end{equation}
where 
\[
b:[0,T]\times\mathbb R^n\to\mathbb R^n 
\qquad \text{and} \qquad
\sigma:[0,T]\times\mathbb R^n\to\mathbb R^{n\times m}
\]
are the drift vector and and volatility matrix. The diffusion matrix is $a(t,x):=\sigma(t,x)\sigma(t,x)^\top\in\mathbb S_+^n$. Under the usual Lipschitz/growth hypotheses, \eqref{eq:sde} has a unique strong solution \cite[Thm. 5.2.1]{Oksendal2003}.

\medskip
\noindent\textbf{Smoothness and boundedness.} 
For a function on time and spatial variables, $f:[0,T] \times \mathbb R^n \rightarrow \mathbb R$ denoted $f(t,x)$, we write $f\in C^{1,2}$ if the function is differentiable with respect to $t$ and twice-differentiable with respect to $x$ with continuous entries. In other words, $\partial_t f$, 
\[
\partial_t f \in \mathbb R \qquad \text{and} \qquad
\nabla_x f \in \mathbb R^n \qquad \text{and} \qquad 
\nabla_x^2 f \in \mathbb R^{n\times n}
\]
exist with continuous entries. We say that $f$ is bounded, if \[\sup_{(t,x)}|f(t,x)|<\infty.\]
If $f$ satisfies both the smoothness and the boundedness property as just described, we say that $f \in C^{1,2}_b$. 

Before we state the fundamental result in stochastic calculus, the It\^o's Lemma, let us recall the Frobenius inner product defined as
\[
\ip{A}{B} = \tr{A^\top B}
\qquad \text{for} \qquad
A, B \in \real^{n \times m}.
\]

\begin{theorem}[It\^o's Lemma; \cite{Oksendal2003}, Theorem  4.2.1]\label{thm:ito}
If $X$ solves \eqref{eq:sde} and $f\in C^{1,2}$, then
\begin{equation}\label{eq:ito}
 df = \left(\partial_t f + 
 \ip{\nabla_x f}{b}  + 
 \frac{1}{2}\ip{A}{\nabla_x^2 f}\right) dt + 
 \ip{\sigma^\top\nabla_x f}{dW_t}
\end{equation}
\end{theorem}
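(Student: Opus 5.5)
The plan is the classical one: a second-order Taylor expansion of $f$ along the path, together with the quadratic-variation rules for the Wiener process, and then passage to the limit along refining partitions. I read $A$ in \eqref{eq:ito} as the diffusion matrix $a=\sigma\Sigma\sigma^\top$, which reduces to $\sigma\sigma^\top$ when $\Sigma=I$. The correlated case reduces to the standard one by writing $W=\Sigma^{1/2}B$ with $B$ a standard $m$-dimensional Brownian motion. So I first prove the statement for standard $B$ and then substitute. It is also convenient to first assume that $f$ has compact support and that $b,\sigma$ are bounded. The general case then follows by localization: take the stopping times $\tau_N=\inf\{t:|X_t|\ge N\}$, apply the result to a cutoff of $f$ that agrees with $f$ on the ball of radius $N$, and let $N\to\infty$, using continuity of paths.

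\textbf{Step 1 (Taylor expansion on a partition).} Fix $t$ and a partition $0=t_0<\dots<t_k=t$ with mesh going to zero. Write $f(t,X_t)-f(0,X_0)=\sum_j [f(t_{j+1},X_{t_{j+1}})-f(t_j,X_{t_j})]$. Expand each increment to first order in time and second order in space around $(t_j,X_{t_j})$:
\[
\partial_t f\,\Delta t_j+\ip{\nabla_x f}{\Delta X_j}+\tfrac12 \Delta X_j^\top \nabla_x^2 f\,\Delta X_j+R_j .
\]
Uniform continuity of $\partial_t f$ and $\nabla_x^2 f$ on compacts, together with $\max_j|\Delta X_j|\to0$ by path continuity, controls the remainders. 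Specifically, $\sum_j |R_j|\le o(1)\cdot\big(t+\sum_j|\Delta X_j|^2\big)$, and $\sum_j|\Delta X_j|^2$ is bounded in probability, so $\sum_j R_j\to 0$ in probability.

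\textbf{Step 2 (identify the limits).} The term $\sum_j\partial_t f\,\Delta t_j$ is a Riemann sum. Substituting $\Delta X_j=\int b\,ds+\int\sigma\,dB$ into the first-order sum gives a Riemann–Stieltjes integral plus elementary stochastic integrals. The latter converge in $L^2$ to $\int\ip{\sigma^\top\nabla_x f}{dB_s}$ by the It\^o isometry and continuity of the integrand.

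\textbf{Step 3 (the quadratic term, the main obstacle).} This is where the correction $\tfrac12\ip{A}{\nabla_x^2 f}$ must emerge. Write $\Delta X_j\approx b\Delta t_j+\sigma_j\Delta B_j$. Every cross term involving $\Delta t_j$ vanishes, because it is bounded by $\max|\Delta B|\cdot\sum\Delta t_j$ or by $\sum\Delta t_j^2$. What remains is $\tfrac12\sum_j \Delta B_j^\top \sigma_j^\top H_j\sigma_j\,\Delta B_j$ with $H_j=\nabla_x^2 f(t_j,X_{t_j})$. I replace $\Delta B_j\Delta B_j^\top$ by $I\,\Delta t_j$ and show that the difference tends to $0$ in $L^2$. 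The difference is a sum of martingale-difference terms $M_j\,(\Delta B_j\Delta B_j^\top-I\Delta t_j)$ with bounded, $\mathcal F_{t_j}$-measurable $M_j$. Its cross terms have zero expectation, and its diagonal terms contribute $O(\sum_j\Delta t_j^2)\to0$. This uses $\mathbb E[\Delta B^i\Delta B^k]=\delta_{ik}\Delta t$ and fourth moments of order $\Delta t^2$. The replacement also requires that $\sigma(s,X_s)$ be approximated by $\sigma_j$ on each subinterval, which holds by continuity and boundedness. The limit is then $\tfrac12\int \tr{\sigma^\top \nabla_x^2 f\,\sigma}\,ds=\tfrac12\int\ip{\sigma\sigma^\top}{\nabla_x^2 f}\,ds$.

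\textbf{Step 4 (assemble and return to correlated $W$).} Collecting the limits along a subsequence that converges almost surely gives the integral form of \eqref{eq:ito}. Both sides are continuous in $t$, so the identity holds for all $t$ simultaneously. Finally, substituting $\sigma\Sigma^{1/2}$ for $\sigma$ and $W=\Sigma^{1/2}B$ recovers the diffusion coefficient $\sigma\Sigma\sigma^\top=A$ and the martingale term $\ip{\sigma^\top\nabla_x f}{dW_t}$.
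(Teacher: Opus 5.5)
The paper gives no proof of this statement; it cites {\O}ksendal's Theorem 4.2.1, whose proof {\O}ksendal omits as analogous to the one-dimensional Theorem 4.1.2. That proof follows your route: reduce to bounded derivatives, Taylor-expand along a partition, identify the first-order sums as integrals, and show $\sum_j a_j\bigl((\Delta B_j)^2-\Delta t_j\bigr)\to 0$ in $L^2$ by the martingale-difference computation. Two of your choices fit the statement here better than the textbook version. First, {\O}ksendal assumes $f$ jointly $C^2$, whereas expanding only to first order in $t$ is what $f\in C^{1,2}$ permits. Do this by splitting each increment into a time step at the fixed point $X_{t_{j+1}}$, handled by the mean value theorem, and a space step at the fixed time $t_j$. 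Second, your reading $A=\sigma\Sigma\sigma^\top$ is the correct one for the paper's correlated $W$; the paper's $a=\sigma\sigma^\top$ is right only when $\Sigma=I$.

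The step that does not go through under the stated hypotheses is the end of Step 3. There you replace $\sigma(s,X_s)$ on $[t_j,t_{j+1}]$ by $\sigma_j=\sigma(t_j,X_{t_j})$ ``by continuity''. The coefficients are only assumed Lipschitz in $x$ and measurable in $t$, so this is not justified in general. For example, with $\sigma(t,x)=\mathbf{1}_{\mathbb{Q}}(t)$ and dyadic partitions, $\sum_j\sigma_j\Delta B_j=B_t$ while $\int_0^t\sigma\,dB=0$. Your argument does work for coefficients that are piecewise continuous in $t$, which covers the paper's piecewise-constant volatilities. {\O}ksendal handles this by first reducing to \emph{elementary} integrands. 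One proves the formula for It\^o processes whose coefficients $u,v$ are simple adapted processes; for these, $\Delta X_j=u_j\Delta t_j+v_j\Delta B_j$ holds exactly once the partition refines the step partition. One then approximates $b(s,X_s)$ and $\sigma(s,X_s)$ by such processes and passes to the limit with the It\^o isometry. Alternatively, skip the replacement entirely. Set $Y_j=\int_{t_j}^{t_{j+1}}\sigma\,dW$. The quantities $Y_j^\top H_jY_j-\int_{t_j}^{t_{j+1}}\ip{\sigma\Sigma\sigma^\top}{H_j}\,ds$ are already martingale differences by the conditional It\^o isometry. Their second moments are $O(\Delta t_j^2)$ by a fourth-moment bound for bounded integrands, so your $L^2$ computation goes through unchanged with no continuity in $t$. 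This route also removes a smaller wrinkle in Step 4. When $\Sigma$ is singular, writing $W=\Sigma^{1/2}B$ requires enlarging the probability space, whereas working directly with $\mathbb{E}[\Delta W_j\Delta W_j^\top\mid\mathcal{F}_{t_j}]=\Sigma\,\Delta t_j$ does not.
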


One of the immediate observation from equation \eqref{eq:ito} is that it isolates the deterministic part (the $dt$ terms) and the stochastic part (the $dW_t$ terms). Under the assumption that $X$ is a martingale, this decomposition forms the crux of derivative pricing theory in quantitative finance. It enables us to turn SDEs into deterministic PDEs as we describe below: apply It\^o to a \emph{candidate value function}, take conditional expectations, and use the martingale property to  have zero expectation.

\medskip
\noindent\textbf{Backward Kolmogorov equation.} Fix a function $g:\mathbb R^n\to\mathbb R$ that satisfies the smoothness and boundedness condition, i.e., $g\in C^{1,2}_b$. Now, consider the conditional expectation
\begin{equation}\label{eq:u}
 u(t,x):=\expected\left[g(X_T)\mid X_t=x\right].
\end{equation}
Let us define the process $Y_t := u(t, X_t)$. By definition, both $Y_t$ and $X_t$ are adapted to the same filtration, and hence by the towering property of conditional expectation,
\[
\expected\left[Y_t | \mathcal F_s\right] = \expected [u(t,X_t) | \mathcal F_s] = \expected [g(X_T) | \mathcal F_s] = u(s,X_s) = Y_s.
\]
This implies that the process $Y_t$ is a martingale, and hence, the drift term of the process $Y_t$ should vanish. Plugging $f=u$ into \eqref{eq:ito} and taking conditional expectations, we observe a drift term. Setting it to $0$ yields the \emph{backward Kolmogorov equation}.

\begin{theorem}[Backward Kolmogorov equation; \cite{Shreve2004}, Exercise 6.8]\label{thm:bk}
Under standard smoothness assumption, $u$ in \eqref{eq:u} solves
\begin{equation}\label{eq:bk}
\partial_t u + 
 \ip{\nabla_x u}{b}  + 
 \frac{1}{2}\ip{A}{\nabla_x^2 u} = 0, \qquad u(T,x)=g(x).
\end{equation}
\end{theorem}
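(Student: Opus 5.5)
The plan is to apply It\^o's Lemma (Theorem~\ref{thm:ito}) to the candidate value function $u$ along the diffusion and use the martingale property of $Y_t = u(t,X_t)$. The first step is to check that $u \in C^{1,2}$, which is part of the ``standard smoothness assumption'' in the statement. This holds if $b$ and $\sigma$ are smooth with bounded derivatives and $g \in C^{1,2}_b$, since then the stochastic flow $x \mapsto X_T^{t,x}$ is differentiable in $x$ and one may differentiate under the expectation. I will take this regularity as given. The terminal condition is immediate: $u(T,x) = \expected[g(X_T) \mid X_T = x] = g(x)$.

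\textbf{Step 1: decompose $dY_t$.} With $u \in C^{1,2}$, It\^o's Lemma gives
\[
dY_t = \mathcal{L}u(t,X_t)\, dt + \ip{\sigma^\top \nabla_x u}{dW_t},
\qquad
\mathcal{L}u := \partial_t u + \ip{\nabla_x u}{b} + \frac12 \ip{A}{\nabla_x^2 u}.
\]
Integrating from $t$ to $t+h$ and taking $\expected[\,\cdot \mid X_t = x]$, the martingale property shown just before the theorem gives $\expected[Y_{t+h} - Y_t \mid X_t = x] = 0$. The stochastic integral also has zero conditional expectation, provided $\sigma^\top \nabla_x u$ is square integrable; boundedness of $\nabla_x u$ and linear growth of $\sigma$ should suffice, and otherwise one localizes with stopping times. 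It follows that
\[
\expected\left[ \int_t^{t+h} \mathcal{L}u(s,X_s)\, ds \,\Big|\, X_t = x \right] = 0
\quad \text{for all } h > 0 .
\]

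\textbf{Step 2: pass to the pointwise PDE.} Divide by $h$ and let $h \to 0$. By continuity of $\mathcal{L}u$ and of the paths of $X$, together with dominated convergence (using the boundedness or growth assumptions), the left side tends to $\mathcal{L}u(t,x)$. Hence $\mathcal{L}u(t,x) = 0$ for every $(t,x)$, which is exactly \eqref{eq:bk}. This requires the diffusion-matrix term to be read as $a = \sigma\sigma^\top$, consistent with the Frobenius-pairing computation $\tfrac12 \tr{\sigma^\top \nabla_x^2 u\, \sigma} = \tfrac12 \ip{\sigma\sigma^\top}{\nabla_x^2 u}$, which comes from the quadratic variation $d\langle X\rangle = \sigma\sigma^\top dt$ (with $\Sigma$ absorbed into $\sigma$ if the driving Wiener process is correlated).

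\textbf{Main obstacle.} The algebra is routine; the hard part is the analytic justification. This means establishing $u \in C^{1,2}$, and ensuring the local martingale part is a true martingale so the limit can be exchanged with the expectation. I would handle regularity by citing differentiability of the flow (or Feynman--Kac uniqueness in the uniformly parabolic case). I would handle integrability with a localizing sequence $\tau_k = \inf\{s : |X_s| \ge k\}$, proving the identity up to $\tau_k \wedge (t+h)$ and then letting $k \to \infty$ by dominated convergence.
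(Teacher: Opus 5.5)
Your proposal is correct and follows the same route as the paper. Both apply It\^o's Lemma to $Y_t=u(t,X_t)$, use the tower-property martingale argument to get zero conditional expectation, and conclude that the $dt$-coefficient $\mathcal{L}u$ must vanish. Your localization and the $h\to 0$ limit from a fixed starting point $(t,x)$ just supply the analytic justification the paper leaves implicit in ``setting the drift to zero'', in particular that the drift vanishes at every $(t,x)$ and not merely along the path of $X$.
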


\medskip
\noindent\textbf{Discounting and the pricing PDE (Feynman--Kac formula).} As we will see shortly, tradable assets require discounting at a (possibly state/time dependent) short rate $r(t,x)$. Define the discounted conditional expectation
\begin{equation}\label{eq:V}
 V(t,x):=\mathbb E\Big[\exp\left(-\int_t^T r(s,X_s)\,ds\right)\,g(X_T)\,\Big|\,X_t=x\Big].
\end{equation}
The next result tells us that $V$ is the unique classical solution to a linear parabolic PDE and conversely that such PDE solutions admit the probabilistic representation \eqref{eq:V}.

\begin{theorem}[Feynman--Kac formula; \cite{Shreve2004}, Theorem 6.4.1 and Section 6.6]\label{thm:fkac}
Under smoothness and boundedness assumptions on $b,A,r$ and $g$, the function $V$ in \eqref{eq:V} solves
\begin{equation}\label{eq:fkacPDE}
\partial_t V + 
 \ip{\nabla_x V}{b}  + 
 \frac{1}{2}\ip{A}{\nabla_x^2 V} - rV = 0, \qquad V(T,x)=g(x),
\end{equation}
with existence and uniqueness in the class of bounded $C_b^{1,2}$ functions.
\end{theorem}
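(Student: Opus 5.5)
The plan is to prove the Feynman--Kac representation in two directions, mirroring the argument given for the backward Kolmogorov equation (Theorem~\ref{thm:bk}). The main device is the discount process $D_t := \exp\left(-\int_0^t r(s,X_s)\,ds\right)$, which satisfies $dD_t = -r(t,X_t) D_t\,dt$ and has finite variation, so its product with any It\^o process picks up no cross-variation term.

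\emph{Uniqueness and the probabilistic representation.} Let $W$ be a bounded $C^{1,2}_b$ solution of \eqref{eq:fkacPDE}. I would apply It\^o's Lemma (Theorem~\ref{thm:ito}) to $W(t,X_t)$ and then the product rule to $D_t W(t,X_t)$. This gives
\[
d\big(D_t W(t,X_t)\big) = D_t\Big(\partial_t W + \ip{\nabla_x W}{b} + \tfrac12\ip{A}{\nabla_x^2 W} - rW\Big)dt + D_t\ip{\sigma^\top\nabla_x W}{dW_t}.
\]
The $dt$ bracket vanishes by the PDE, so $D_t W(t,X_t)$ is a local martingale.

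The step needing care is upgrading this to a true martingale. My first attempt would use boundedness of $\sigma$, of $\nabla_x W$, and of $r$ from below; these make the integrand square-integrable, so the stochastic integral is a genuine martingale. If gradient bounds are unavailable, I would instead localize with stopping times $\tau_N=\inf\{s\ge t: |X_s|\ge N\}\wedge T$. The stopped process is a martingale, and letting $N\to\infty$ is justified by dominated convergence, since $W$ is bounded and $D_s\le e^{\|r^-\|_\infty T}$.

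Taking conditional expectations on $[t,T]$ given $X_t=x$ and using $W(T,\cdot)=g$, we get $D_t W(t,x)=\expected[D_T g(X_T)\mid X_t=x]$. Dividing by $D_t$ yields $W=V$. This shows uniqueness in the bounded $C^{1,2}_b$ class and the representation \eqref{eq:V} at the same time.

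\emph{Existence.} We must show that $V$ itself is $C^{1,2}_b$ and solves the PDE. The regularity claim is what the "standard smoothness assumptions" buy. I would invoke the classical result that under smooth bounded coefficients the flow $x\mapsto X^{t,x}_T$ is differentiable in mean square (or, alternatively, parabolic Schauder theory for uniformly elliptic $A$), which gives $V\in C^{1,2}_b$.

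Granting regularity, the Markov property shows that $M_s:=D_s V(s,X_s)=\expected[D_T g(X_T)\mid\mathcal F_s]$ is a martingale; this is the same tower-property computation used for $Y_t$ before Theorem~\ref{thm:bk}. Applying It\^o's Lemma and the product rule to $M_s$ as above, its finite-variation part is $\int D_s(\partial_t V+\ip{\nabla_x V}{b}+\tfrac12\ip{A}{\nabla_x^2V}-rV)\,ds$. Since $M$ is a continuous martingale, this part must vanish identically. The integrand is continuous and $X$ started at any $(t,x)$ has full support locally, so the PDE holds pointwise, and $V(T,x)=g(x)$ is immediate from the definition.

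I expect the \textbf{main obstacle} to be the regularity of $V$, namely establishing $V\in C^{1,2}_b$ without assuming it. I would cite this from \cite{Shreve2004}/\cite{Oksendal2003} rather than reprove it, since everything else is a direct extension of the Kolmogorov argument with the discount factor added.
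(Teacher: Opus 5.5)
Your proposal is correct and follows the route the paper indicates after the theorem: apply It\^o's lemma to the discounted process $M_t$, whose drift is exactly the left-hand side of \eqref{eq:fkacPDE} and so must vanish by the martingale property. Your localization-based verification argument for uniqueness and your cited regularity of $V$ supply what the paper leaves to \cite{Shreve2004}. One small repair: drop the appeal to $X$ having full support, which fails for the degenerate diffusions the paper explicitly allows ($A\succeq 0$), and instead start the process at $(t,x)$ and evaluate the continuous integrand at $s=t$, where $X_t=x$ is deterministic, which gives the PDE at $(t,x)$ directly.
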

Roughly speaking, the equation holds assuming that the following discounted process satisfied the martingale property
\begin{equation}\label{eq:M}
 M_t:=\exp\!\Big(-\!\int_0^t r(s,X_s)\,ds\Big)\,V(t,X_t).
\end{equation}
Applying It\^o's lemma to $V(t,X_t)$, we will see that the drift of $M_t$ is exactly the left-hand side of \eqref{eq:fkacPDE}. 

\medskip
\noindent\textbf{Girsanov theorem.} We will now turn to change of measure technique. Let $(\Omega,\mathcal F,(\mathcal F_t)_{t\ge0},\mathbb P)$ and $(\Omega,\mathcal F,(\mathcal F_t)_{t\ge0},\mathbb Q)$ are two filtered probability spaces such that there exists a predictable process $\theta_t$ such that
\[
dW^{\mathbb Q}_t = dW^{\mathbb P}_t + \theta_t dt.
\]
If the state process $X$ follows equation~\ref{eq:sde} under measure $\mathbb P$, then under the measure $\mathbb Q$, it will be described by the SDE
\[
dX_t= \left(b^{\mathbb P}(t,X_t) - \sigma(t,X_t) \theta_t\right)\,dt+\sigma(t,X_t)\,dW^{\mathbb Q}_t.
\]
It is not evidently clear if such a measure $\mathbb Q$ exists. If there exists a probability measure $\mathbb Q$ under which $W_t^{\mathbb Q}$ is again a Wiener process,
then the diffusion has a modified drift
\begin{equation}
b^{\mathbb Q}(t, X_t) = b^{\mathbb P}(t, X_t) - \sigma(t, X_t)\,\theta_t,
\end{equation}
while its volatility structure $\sigma$ remains unchanged. This is exactly what Girsanov's theorem states.

\begin{theorem}[Girsanov Theorem]
Let $W_t^{\mathbb P}$ be an $m$-dimensional Wiener process under $\mathbb P$, and let
$\theta_t$ be an adapted process satisfying the integrability condition
\[
\mathbb E\!\left[\exp\!\left(\tfrac{1}{2}\int_0^T \|\theta_s\|^2\,ds\right)\right] < \infty
\]
for every finite $T > 0$. 
Then there exists another probability measure $\mathbb Q$, equivalent to $\mathbb P$, 
under which the process
\[
W_t^{\mathbb Q} := W_t^{\mathbb P} + \int_0^t \theta_s\,ds
\]
is an $m$-dimensional Wiener process.
\end{theorem}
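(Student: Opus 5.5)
The plan is the classical exponential-martingale construction. Set
\[
Z_t := \myexp{-\int_0^t \ip{\theta_s}{dW^{\mathbb P}_s} - \frac12\int_0^t \|\theta_s\|^2\,ds},
\]
and define $\mathbb Q$ on $\mathcal F_T$ by $d\mathbb Q = Z_T\, d\mathbb P$. For simplicity we work with a standard Wiener process ($\Sigma = I$); the correlated case reduces to this one by writing $W = \Sigma^{1/2}\widetilde W$ and absorbing $\Sigma^{1/2}$ into $\theta$.

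The argument has three steps.

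\textbf{Step 1: $Z$ is a local martingale.} Apply It\^o's Lemma (Theorem~\ref{thm:ito}) to $f(x)=e^{x}$ with $X_t = -\int_0^t \ip{\theta_s}{dW_s} - \frac12\int_0^t\|\theta_s\|^2ds$. The $dt$ terms cancel, leaving
\[
dZ_t = -Z_t\,\ip{\theta_t}{dW^{\mathbb P}_t}.
\]
Hence $Z$ is a positive local martingale. As a positive local martingale it is a supermartingale, so $\expected[Z_T]\le 1$.

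\textbf{Step 2: $Z$ is a true martingale.} We need $\expected[Z_T]=1$, so that $\mathbb Q$ is a probability measure. Since $Z_T>0$ almost surely, $\mathbb Q$ is then automatically equivalent to $\mathbb P$. This is where the Novikov hypothesis $\expected[\exp(\frac12\int_0^T\|\theta_s\|^2ds)]<\infty$ enters, and I expect it to be the main obstacle.

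My first attempt would be the standard proof of Novikov's criterion. Introduce $Z^{(a)}$ with $\theta$ replaced by $a\theta$ for $0<a<1$. By H\"older's inequality and the Novikov bound, $\{Z^{(a)}_\tau\}_\tau$ is uniformly integrable over stopping times $\tau$, so $Z^{(a)}$ is a martingale and $\expected[Z^{(a)}_T]=1$. Then write
\[
Z^{(a)}_T = (Z_T)^{a^2}\, e^{a(1-a)N_T}, \qquad N_T := -\int_0^T\ip{\theta_s}{dW_s},
\]
apply H\"older once more, and let $a\uparrow 1$. This gives $\expected[Z_T]\ge 1$, and combined with Step 1, $\expected[Z_T]=1$.

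A simpler fallback is to localize with the stopping times $\tau_n=\inf\{t:\int_0^t\|\theta_s\|^2ds\ge n\}$, on which $Z$ is a bounded-exponent martingale. The remaining difficulty is passing $n\to\infty$, and that is exactly what Novikov's condition controls.

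\textbf{Step 3: $W^{\mathbb Q}$ is a $\mathbb Q$-Wiener process.} Use L\'evy's characterization. $W^{\mathbb Q}$ is continuous, starts at $0$, and has quadratic covariation $[W^{\mathbb Q,i},W^{\mathbb Q,j}]_t=\delta_{ij}t$, since adding a finite-variation drift does not change covariation. It remains to show each component is a $\mathbb Q$-local martingale. By Bayes' formula for conditional expectations under a change of measure, this is equivalent to $Z_tW^{\mathbb Q}_t$ being a $\mathbb P$-local martingale. The It\^o product rule gives
\[
d(Z W^{\mathbb Q}) = Z\,dW^{\mathbb P} + Z\theta\,dt + W^{\mathbb Q}\,dZ + d[Z,W^{\mathbb Q}],
\]
where $d[Z,W^{\mathbb Q}] = -Z\theta\,dt$. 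The $dt$ terms therefore cancel, leaving only stochastic integrals, so $ZW^{\mathbb Q}$ is a $\mathbb P$-local martingale. L\'evy's theorem then concludes that $W^{\mathbb Q}$ is an $m$-dimensional $\mathbb Q$-Wiener process.
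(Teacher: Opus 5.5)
The paper gives no proof of this theorem. It is quoted as a standard fact from the stochastic-calculus texts it cites, so there is no argument of the authors to compare with. Your proposal is the standard textbook proof: Novikov's criterion makes the density $Z$ a true martingale, then Bayes' rule and L\'evy's characterization finish. It is correct in outline. It builds $\mathbb Q$ on $\mathcal F_T$ and yields a Wiener process on $[0,T]$, and that is the right reading. The literal statement, with one $\mathbb Q\sim\mathbb P$ serving all $T$, is false on $\mathcal F_\infty$: for constant $\theta\neq0$, $W^{\mathbb P}_t/t\to0$ holds $\mathbb P$-a.s., but $W^{\mathbb P}_t/t\to-\theta$ holds $\mathbb Q$-a.s.

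A few steps need tightening.
\begin{itemize}
\item \textbf{The final H\"older step.} Write $N_T=-\int_0^T\ip{\theta_s}{dW_s}$ and $\langle N\rangle_T=\int_0^T\|\theta_s\|^2ds$. Your last H\"older step reads $1=\mathbb E[Z^{(a)}_T]\le\mathbb E[Z_T]^{a^2}\,\mathbb E[e^{\frac{a}{1+a}N_T}]^{1-a^2}$. Letting $a\uparrow1$ requires $\mathbb E[e^{\frac{a}{1+a}N_T}]$ to stay bounded in $a$. This follows from Jensen together with $\mathbb E[e^{\frac12N_T}]=\mathbb E[Z_T^{1/2}e^{\frac14\langle N\rangle_T}]\le\mathbb E[e^{\frac12\langle N\rangle_T}]^{1/2}<\infty$, which uses Cauchy--Schwarz and $\mathbb E[Z_T]\le1$. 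This is a second use of Novikov, and you should make it explicit.
\item \textbf{The ``simpler fallback.''} It is not actually simpler. Getting $\mathbb E[Z_{T\wedge\tau_n}]=1$ is easy, but letting $n\to\infty$ needs uniform integrability of $\{Z_{T\wedge\tau_n}\}_n$. That is exactly what the $a<1$ device supplies, so the fallback should be dropped.
\item \textbf{The reduction to $\Sigma=I$.} Passing from a correlated $W$ to a standard one does not preserve the hypothesis. The new drift is $\Sigma^{-1/2}\theta$, so you would need $\mathbb E[\exp(\frac12\int_0^T\theta_s^\top\Sigma^{-1}\theta_s\,ds)]<\infty$, and also $\theta_s\in\mathrm{range}(\Sigma)$ if $\Sigma$ is singular. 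The stated condition implies this when $\Sigma\succeq I$, but not in general. It is better to read the theorem, as its Novikov condition does, for a standard Wiener process.
\item \textbf{Smaller points.} Theorem~\ref{thm:ito} is stated for the Markovian SDE \eqref{eq:sde}, so Step~1 really uses the general It\^o formula for It\^o processes with adapted integrands. In Step~3, the brackets of $W^{\mathbb Q}$ are the same under $\mathbb Q$ because $\mathbb Q\ll\mathbb P$ preserves limits in probability; say so.
\end{itemize}
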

Girsanov's theorem asserts that a deterministic shift in the drift term of a Wiener process can be absorbed by switching to an equivalent probability measure under which the shifted process remains a Wiener process. It provides the mathematical foundation for changing the drift while preserving the volatility structure. In particular, a suitable choice of $\theta_t$ can change the process to be without drift, and consequently, a martingale. The way to do goes via the following steps:

\begin{mylist}{\parindent}
  \item [1.] Under the original measure $\mathbb P$, a stochastic process follows
  \[
  dX_t = b^{\mathbb P}(t,X_t)\,dt + \sigma(t,X_t)\,dW_t^{\mathbb P}.
  \]
  \item [2.] Choose an adapted process $\theta_t$ (the desired drift adjustment) and define a new stochastic process by
  \[
  W_t^{\mathbb Q} = W_t^{\mathbb P} + \int_0^t \theta_s\,ds.
  \]
  \item [3.] Postulate the existence of a measure $\mathbb Q$ under which $W_t^{\mathbb Q}$ is a Wiener process. This is where we make use of Girsanov's theorem.
  \item [4.] Substitute $dW_t^{\mathbb P} = dW_t^{\mathbb Q} - \theta_t\,dt$ into the SDE to obtain the drift-shifted dynamics:
  \[
  dX_t = (b^{\mathbb P} - \sigma(t,X_t)\theta_1)\,dt + \sigma(t,X_t)\,dW_t^{\mathbb Q}.
  \]
  \item [5.] Choose $\theta_t$ so that the stochastic process of interest has zero drift under $\mathbb Q$ making it a martingale.
\end{mylist}

We note that the martingale property simplifies the valuation and states that the current value of the derivative is such that no arbitrage opportunities arise.

\subsection{Asset classes and their modeling choices}
In this section we describe the stochastic models and specific modeling assumptions for each asset classes considered throughout the paper. We give a brief overview of change of numeraire framework introduced in the above section. Across all asset classes, the guiding principle is that asset prices, when expressed in an appropriate numeraire, are modeled as martingales. This guarantees absence of arbitrage and leads, via the Feynman--Kac representation, to linear parabolic pricing PDEs as described above. Our emphasis is on identifying modeling regimes in which these PDEs admit a reduction to heat-type equations with time-dependent but state-independent coefficients, which is essential for the quantum algorithms developed later in this work.

\subsubsection{Equity derivatives: Black--Scholes framework}

For equity derivatives we work within the Black--Scholes framework and its multi-asset extensions. Let $S_t$ denote the equity spot price. 
\[
B_t = \exp\!\left(\int_0^t r(u)\,du\right)
\]
be the domestic money-market account, which serves as the numeraire for the domestic risk-neutral measure $\mathbb{Q}^d$. Under this measure, the discounted equity price without dividends is assumed to be a martingale. The spot dynamics therefore take the form
\begin{equation}
 dS_t = S_t\left(r(t)\,dt + \sigma(t)\,dW_t^{\mathbb{Q}^d}\right),
\end{equation}
where $r(t)$ is the yield and $\sigma(t)$ is the volatility.

The volatility assumptions are tailored to the equity products considered in this paper. For multi-asset equity options, and Bermudan options with a fixed set of exercise dates, we assume constant volatilities. In the multi-asset setting this corresponds to a constant covariance matrix with fixed correlations across assets. This assumption is standard in both analytical and numerical treatments of early-exercise equity derivatives and leads to diffusion operators with constant coefficients, which are particularly amenable to orthogonal diagonalization and heat-equation transformations~\cite{BlackS1973, Bjork2009, Wilmott2006}.

For Bermudan equity options with longer maturities, we additionally allow for time-dependent volatility functions $\sigma(t)$ while excluding state dependence. This reflects common market practice, where volatility term structures are calibrated to vanilla options while maintaining tractability for exotic products. Importantly, time-dependent volatility preserves the parabolic structure of the pricing PDE and still allows a reduction to a heat-type equation after suitable changes of variables, as discussed in Appendix~C.

Under a change of numeraire to any strictly positive tradable process $N_t$, Girsanov's theorem implies that the diffusion term remains unchanged while the drift is modified by the covariance between the asset and the numeraire. This mechanism is repeatedly exploited in later sections to simplify or eliminate drift terms prior to transforming the PDE into heat-equation form.

\subsubsection{Interest-rate derivatives: LIBOR Market Model}

Interest-rate derivatives are modeled using the LIBOR Market Model (LMM), which specifies the dynamics of forward LIBOR rates associated with a fixed tenor structure $\{T_i\}$. Let $L_i(t)=L(t;T_i,T_{i+1})$ denote the forward rate for the accrual period $[T_i,T_{i+1}]$. The natural numeraire for $L_i$ is the zero-coupon bond $P(t,T_{i+1})$, and under the corresponding forward measure $\mathbb{Q}^{T_{i+1}}$ the forward rate is assumed to be a martingale. Its dynamics are
\begin{equation}
 \frac{dL_i(t)}{L_i(t)} = \sigma_i(t)^{\top} dW_t^{\mathbb{Q}^{T_{i+1}}}, \qquad t \le T_i.
\end{equation}

We impose a structured volatility assumption consistent with market practice and the objectives of this work. Specifically, each $\sigma_i(t)$ is assumed to be generated from a collection of time-dependent scalar functions multiplied by rate-specific constant proportionality factors. On any fixed time slab, this implies that the instantaneous covariance matrix of the forward rates is proportional to a constant positive-definite matrix. Such constructions are widely used to ensure stable calibration and are particularly well suited for orthogonal transformations that diagonalize the diffusion operator~\cite{BraceGM1997,BrigoM2006,AndersenP2010}.

When pricing products depending on multiple forward rates, such as swaptions, it is necessary to move to a single pricing numeraire, for example the domestic money-market account or a swap annuity. Under this change of numeraire the forward rates acquire drift terms that are explicitly state dependent, involving rational functions of other forward rates. In line with standard market approximations, we replace these state-dependent drifts by deterministic, time-dependent functions obtained through frozen-drift or piecewise-frozen approximations. After this modification, the resulting dynamics retain their diffusion structure with drifts depending only on time, enabling the associated pricing PDEs to be transformed into heat-type equations with time-dependent coefficients.

\subsubsection{Hybrid FX--interest-rate derivatives}

Hybrid FX--interest-rate derivatives are modeled by combining domestic and foreign LIBOR Market Models with an explicit diffusion model for the FX rate. Let $S_t$ denote the FX rate quoted in domestic currency per unit of foreign currency, and let $B_t^d$ and $B_t^f$ denote the domestic and foreign money-market accounts. Under the domestic risk-neutral measure $\mathbb{Q}^d$, associated with the numeraire $B_t^d$, the process $S_t B_t^f / B_t^d$ is assumed to be a martingale. The FX dynamics therefore take the form
\begin{equation}
 \frac{dS_t}{S_t} = (r_d(t)-r_f(t))\,dt + \sigma_S(t)^{\top} dW_t^{\mathbb{Q}^d}.
\end{equation}

Domestic and foreign forward rates are modeled under their respective forward measures, each being driftless under its own bond numeraire. As in the pure interest-rate case, we assume that volatilities of all domestic rates, foreign rates, and FX are generated from common time-dependent functions with asset-specific constant scaling factors and a fixed correlation structure. Consequently, the joint covariance matrix of all risk factors is time dependent but spatially constant on each time slab~\cite{Schlogl2002,Clark2011}.

Pricing hybrid derivatives requires a change to a single pricing numeraire, typically the domestic money-market account or a domestic annuity. This introduces drift terms coupling FX and interest-rate factors. As before, we replace these state-dependent drifts by deterministic, time-dependent approximations. Under these assumptions, the resulting pricing PDEs admit a reduction to heat-type equations, forming the basis for the quantum algorithms developed in this work.

A key structural distinction in hybrid products is between mark-to-market (MTM) and non-MTM contracts. In non-MTM derivatives, notionals are fixed in their respective currencies and FX exposure enters primarily through valuation. In MTM derivatives, notionals are periodically reset using the prevailing FX rate, leading to stronger coupling between FX dynamics and future cash flows. While both structures can be accommodated within the same modeling framework, MTM products typically exhibit stronger cross-asset interactions and place tighter requirements on the drift approximations used to obtain heat-type PDEs.

\section{Overview of our approach}
\label{sec:overview}

Now we provide a high-level overview of our approach. Let us consider a $D$-dimensional heat-type PDE of the form 
\begin{equation}
\frac{\partial u(x,t)}{\partial t}
=
\sum_{d=1}^D a_d(t)\, \frac{\partial^2 u(x,t)}{\partial x_d^2}
\label{eq:heat-pde}
\end{equation}
with a prescribed initial condition $u(x,0)=u_0(x)$. The first step is to convert the above PDE into a finite-dimensional ODE. To obtain such a representation, we discretize the spatial domain using a uniform grid with $N$ points per dimension and adopt a Fourier-based spectral discretization of the spatial differential operators. The discretization of the spatial domains incur errors that are well understood in the numerical analysis literature. Following the discretization, we develop a quanutm algorithm for solving the resulting ODE. This step will also incur errors due to approximations and simplifications we make. 

\paragraph{From parabolic PDEs to finite-dimensional ODEs}
To obtain a finite-dimensional representation, we discretize the spatial domain using a uniform grid with $N$ points per dimension. Rather than employing finite-difference method, we adopt a Fourier-based spectral discretization of the spatial differential operators. This approach is standard in the classical literature on numerical solutions of parabolic PDEs and is particularly well suited to problems with smooth solutions and periodic boundary conditions. In this discretization, each second order derivative operator with respect to $d$ is replaced by a discrete operator $\Delta^2_{x_d}$ defined by conjugation with the discrete Fourier transform such that the resulting operators are Hermitian and simultaneously diagonalizable in the Fourier basis.

Let $\mathbf{u}(t)\in\mathbb{C}^{N^D}$ denote the vector of grid-sampled values of $u(\cdot,t)$. The PDE \eqref{eq:heat-pde} is approximated by the linear ODE
\begin{equation}
\frac{d}{dt}\mathbf{u}(t)
=
A(t)\,\mathbf{u}(t),
\qquad
A(t):=\sum_{d=1}^D a_d(t)\,\Delta_{x_d}^2.
\label{eq:ode}
\end{equation}
Because the operators $\Delta_{x_d}^2$ commute, the generators $A(t)$ commute at different times, and the solution can be expressed as a matrix exponential involving time-integrated coefficients. This reduction is well known and provides a very good approximation to the original continuous problem with three different sources of errors introduced.

\begin{mylist}{\parindent}
    \item [1.] {\bf Spectral truncation error.} This error is introduced when we truncate the Fourier representation of a sufficiently smooth function by retaining only finitely-many modes depending on the grid resolution $N$. For solutions with bounded weak derivatives, this error decays algebraically with $N$ and can be made arbitrarily small by increasing the grid resolution.  

    \item [2.] {\bf Aliasing error.} This error is introduced  when the action of the differential operators generates high-frequency components that cannot be resolved on the discrete grid. Classical spectral methods control this effect by restricting the effective bandwidth of the representation and by exploiting the decay of Fourier coefficients for smooth solutions.

    \item [3.] {\bf Riemann-sum error.} The discrete Fourier transform approximates continuous Fourier integrals by finite sums. This approximation introduces a quadrature error that depends on the smoothness of the Fourier transform and the grid spacing. Under standard regularity assumptions, this error is smaller compared to truncation and aliasing errors, and decreases with increasing the gird resolution.
\end{mylist}

The combined discretization error from these sources quantifies the accuracy with which the finite-dimensional ODE approximates the original PDE. Importantly, these errors are well understood in the classical setting, and our choice of discretization follows established spectral methods designed to control them.

\paragraph{Quantum algorithm for the discretized ODE}
Once the discretized operator $A(t)$ is fixed, the remaining task is to compute the action of the associated propagator on the initial data. Assuming that the discretized vector $\mathbf{u}(t)$ into the amplitudes of a quantum state, the solution at the final time can be expressed in the form
\[
\mathbf{u}(T)=\exp\!\left(\int_0^T A(t)\,dt\right)\mathbf{u}(0).
\]
For notational simplicity, write the time-integrated generator as $A\succeq 0$ and the propagator as $e^{-\beta A}$ for an appropriate $\beta>0$ after rescaling. The quantum algorithm proceeds in three conceptual steps: (i) encode $\mathbf{u}(0)$ into a quantum state, (ii) implement a polynomial/LCU approximation of $e^{-\beta A}$, and (iii) extract the desired information from the resulting quantum state.

\begin{mylist}{\parindent}
    \item [1.] {\bf State preparation.} We encode the discretized initial condition into amplitudes of a $D\lceil\log(N)\rceil$-qubit state
\[
|u_0\rangle
=
\frac{1}{\|\mathbf{u}(0)\|}
\sum_{\mathbf{j}} \mathbf{u}_{\mathbf{j}}(0)\,\ket{\mathbf{j}}.
\]
One should view $\ket{\mathbf{j}}$ as $D$ quantum registers each of $\lceil\log (N)\rceil$ qubits. The overall complexity assumes that this state can be prepared efficiently. The error due to state-preparation error is explicitly accounted for as one of the sources of algorithmic error.

\item [2.] {\bf Implementing the propagator.} 
After normalizing $A$ so that its spectrum lies in a known interval (e.g., $[0,P]$), the exponential $e^{-\beta A}$ is approximated by a polynomial or Fourier-type expansion. A convenient representation is a Fourier-based LCU expansion
\[
e^{-\beta A}\approx \sum_{\ell=-M}^{M} c_\ell\,U^\ell
\qquad \text{where} \qquad
U:=e^{-i\pi A/P}.
\]
Here $M$ controls the truncation length and $\{c_\ell\}$ are computable coefficients. This representation can be implemented using an LCU circuit as shown in Figure~\ref{fig:Fourier_LCU}.

\item [3.] {\bf Extracting the solution.} Once the desired solution is prepared in a quantum state, we employ~\cite{GumaroS2026} to extract the solution surface from the state.
\end{mylist}

We are focused on the first two steps, and they introduce several distinct sources of errors. These error sources are algorithmic in nature and can be controlled independently of the spatial discretization by adjusting circuit parameters. These are explicity discussed in Section~\ref{sec:FF_LCU}.

\paragraph{Fast-forwardability.}
A central reason our quantum approach can be highly-efficient (in space and time complexity) is that the discretized diffusion operator admits a restricted form of \emph{fast-forwardability}. By construction, $A$ is a linear combination of commuting Fourier-diagonal operators:
\[
A=\sum_{d=1}^D \alpha_d\,\Delta_{x_d}^2
\qquad \text{such that} \qquad
\left[\Delta_{x_{d_1}}^2,\Delta_{x_{d_2}}^2\right]=0.
\]
Hence, $A$ is diagonalizable in the Fourier basis. Consequently, implementing the associated unitary
\[
U=e^{-i\pi A/P}
\]
is highly-efficient, and, at the same time, an approximation of
\begin{align}
    e^{-\beta A}
\end{align}
is efficiently implementable by Fourier-LCU method. This is explained comprehensively in Section~\ref{sec:FF_LCU}.

\begin{figure*}[!htb]
\centering
\scalebox{0.9}{
\begin{quantikz}
\lstick[wires=5]{$\ket{0^{\tp  m}}$} & \gate[wires=5,nwires=3]{U_{\rm prep}}\slice{$\sum_\ell \sqrt{\frac{{c}_{\ell}}{\mathcal{N}}}\ket{\ell} \tp \ket{\psi}$}& \qw &  \qw & \qw & \qw & \qw\ldots &\qw & \ctrl{5}\slice{$\sum_{\ell} \sqrt{\frac{{c}_{\ell}}{\mathcal{N}}}\ket{\ell} \tp U^\ell \ket{\psi}$}  & \gate[wires=5,nwires=3]{U_{\rm prep}^{\dagger}} & \meter{0} & \qw & \rstick[wires=5]{$\bra{0^{\tp m}}$}
 \\
   &\qw& \qw &\qw& \qw&\qw & \qw\ldots & \ctrl{4}   & \qw & \qw & \meter{0} & \qw &
 \\
  & \setwiretype{n}\vdots &\vdots& \vdots & \vdots & \vdots &\vdots&\vdots&\vdots & \vdots
 \\
  &\qw & \qw &\qw& \qw&\ctrl{2}  & \qw\ldots & \qw   & \qw & \qw & \meter{0} & \qw &
 \\
  & \qw & \qw &\qw& \ctrl{1}& \qw  & \qw\ldots & \qw  & \qw  & \qw & \meter{0} & \qw &
 \\
 \lstick{$\ket{\psi}$} &\qw & \gate{U^{-2^{m-1}}} & \qw & \gate{U^{2^0}}& \gate{U^{2^1}} & \qw\ldots & \gate{U^{2^{m-2}}} & \gate{U^{2^{m-1}}} & \qw & \qw & \qw \slice[label style={pos=1, anchor=north}]{$\ket{0^{\tp m}} \tp \sum_{\ell} \frac{{c}_{\ell}}{\mathcal{N}}U^\ell \ket{\psi} $} & 
\end{quantikz}
}
\caption{Circuit to implement an $m$-qubit LCU implementation of $e^{-\beta A}$. $U$ is $e^{-i \pi  A/P}$. The $U_{\rm prep}$ applied on $\ket{0}^{\tp m}$ prepares the square root of the positive-definite coefficients of the LCU. }\label{fig:Fourier_LCU}
\end{figure*}

\section{Fast Fourier Linear Combination of Unitaries for $e^{-\beta A}$}
\label{sec:FF_LCU}

Let $A$ be a positive semidefinite, fast-forwardable operator with $\|A\|\leq P$. Our goal is to implement the non-unitary operator $e^{-\beta A}$.

\paragraph{Fourier representation.}
Define
\begin{align}
U := e^{-i\pi A / P}.
\end{align}
Then $e^{-\beta A}$ admits the Fourier series representation
\begin{align}
e^{-\beta A}
=
\sum_{\ell\in\mathbb{Z}} c_\ell\, U^\ell.
\end{align}

In practice, we truncate to
\begin{align}
e^{-\beta A}
\approx
\sum_{\ell=-M/2}^{M/2-1} c_\ell\, U^\ell,
\qquad M = 2^m.
\end{align}

\paragraph{LCU implementation.}
This truncated expansion can be implemented using a Linear Combination of Unitaries (LCU) circuit (see \Cref{fig:Fourier_LCU}). The circuit prepares
\begin{align}
U_{\rm prep}\ket{0^{\otimes m}}
=
\sum_{\ell=-M/2}^{M/2-1}
\sqrt{\frac{c_\ell}{\mathcal{N}}}\,
\ket{\ell + M/2},
\end{align}
followed by controlled powers of $U$.

Because the series in the LCU consists only of powers of $U$ and $U$ is fast-forwardable, the sequence of controlled operations requires only
\begin{align}\label{eq:}
\mathcal{O}(m) = \mathcal{O}(\log M)
\end{align}
applications of $U^{2^j}$ where each is efficiently implementable. This brings the circuit depth to $\mathcal{O}(m)$ rather than $\mathcal{O}(M)$.

The bottleneck is therefore the preparation of $U_{\rm prep}$, which in general costs $\mathcal{O}(M)$.

\paragraph{Efficient coefficient preparation.}
To overcome this, we approximate the coefficients $c_\ell$ using a distribution amenable to Grover--Rudolph state preparation:
\begin{align}
\tilde{U}_{\rm prep}\ket{0^{\otimes m}}
=
\sum_{\ell=-M/2}^{M/2-1}
\sqrt{p_\ell}\,\ket{\ell+M/2},
\end{align}
where
\begin{align}
p_\ell = \int_\ell^{\ell+1} p(x)\,{\rm d}x,
\qquad
p(x)
=
\frac{1}{\mathcal{N}_2}
\frac{\beta P}{(\beta P)^2 + (\pi x)^2}.
\end{align}

---

\subsection{Exact Fourier Coefficients}

We compute the Fourier coefficients of
\begin{align}
f(x) = e^{-\beta |x|}, \quad x \in [-P,P].
\end{align}

A direct calculation yields
\begin{align}
c_\ell
=
\frac{\beta P\bigl(1 - (-1)^\ell e^{-\beta P}\bigr)}
{(\beta P)^2 + \pi^2 \ell^2}.
\label{eq:fourier_coeff_exact}
\end{align}

---

\subsection{Approximation of Coefficients}

\paragraph{First approximation.}
Assuming $\beta P \gg 1$, we neglect the exponentially small term and define
\begin{align}\label{eq:c_tilde}
\tilde{c}_\ell
=
\frac{\beta P}{(\beta P)^2 + \pi^2 \ell^2}.
\end{align}

The induced operator error satisfies
\begin{align}
\left\|
\sum_\ell c_\ell U^\ell
-
\sum_\ell \tilde{c}_\ell U^\ell
\right\|
=
\mathcal{O}(e^{-\beta P}).
\end{align}

---

\subsection{Truncation Error}

Truncating the series to $|\ell|<M/2$ yields
\begin{align}
\left\|
\sum_{\ell\in\mathbb{Z}} \tilde{c}_\ell U^\ell
-
\sum_{\ell=-M/2}^{M/2-1} \tilde{c}_\ell U^\ell
\right\|
\le
2\sum_{\ell\ge M} \tilde{c}_\ell.
\end{align}

Using an integral bound,
\begin{align}
\sum_{\ell\ge M} \tilde{c}_\ell
\le
\int_{M-1}^\infty
\frac{\beta P}{(\beta P)^2 + \pi^2 x^2}\,dx
=
\frac{1}{\pi}
\arctan\!\left(\frac{\beta P}{\pi(M-1)}\right),
\end{align}
which implies
\begin{align}
\text{Truncation error}
=
\mathcal{O}\!\left(\frac{\beta P}{M}\right).
\end{align}

---

\subsection{Continuous Approximation Error}

To enable efficient state preparation, we approximate
\begin{align}
\tilde{c}_\ell \approx \tilde{p}_\ell := \int_\ell^{\ell+1} \tilde{p}(x)\,dx,
\quad
\tilde{p}(x) = \frac{\beta P}{(\beta P)^2 + \pi^2 x^2}.
\end{align}

Using bounds on the derivative of $\tilde{p}(x)$, we obtain
\begin{align}
|\tilde{c}_\ell - \tilde{p}_\ell|
=
\mathcal{O}\!\left(\frac{1}{\beta^2 P^2}\right).
\end{align}

Thus,
\begin{align}
\left\|
\sum_{\ell} \tilde{c}_\ell U^\ell
-
\sum_{\ell} \tilde{p}_\ell U^\ell
\right\|
=
\mathcal{O}\!\left(\frac{M}{\beta^2 P^2}\right).
\end{align}

---

\subsection{Renormalization Error}

The normalization constant is
\begin{align}
\mathcal{N}_2
=
\int_{-M/2}^{M/2}
\frac{\beta P}{(\beta P)^2 + \pi^2 x^2}\,dx
=
\frac{2}{\pi}
\arctan\!\left(\frac{\pi M}{2\beta P}\right).
\end{align}

For large $M$, this gives
\begin{align}
\mathcal{N}_2 = 1 - \mathcal{O}\!\left(\frac{\beta P}{M}\right).
\end{align}

Hence,
\begin{align}
\text{Renormalization error}
=
\mathcal{O}\!\left(\frac{\beta P}{M}\right).
\end{align}

---

\subsection{Total Error and Parameter Matching}

Combining all contributions, the total operator error is
\begin{align}
\mathcal{O}\!\left(
e^{-\beta P}
+
\frac{M}{\beta^2 P^2}
+
\frac{\beta P}{M}
\right).
\end{align}

The exponential term is negligible for $\beta P \gg 1$. Balancing the remaining terms yields
\begin{align}
\beta P \sim M^{2/3},
\end{align}
which gives overall scaling
\begin{align}
\text{Total error}
=
\mathcal{O}(M^{-1/3}).
\label{eq:FFLCU_error_scaling}
\end{align}

\section{Discretization of Diffusion and Drift Operators}
\label{sec:discretization_errors}

We describe the discretization of the diffusion operator $e^{\kappa \partial_x^2}$ and the drift operator $e^{\gamma \partial_x}$, with emphasis on constructions that are \emph{fast-forwardable} and compatible with efficient quantum implementation.

\paragraph{Assumptions.}
Let $\psi:\mathbb{R}\to\mathbb{C}$ be such that its weak second derivative $\psi^{(2)}$ is integrable. We further assume compact support up to negligible leakage:
\begin{align}
\psi(x) \approx 0 \quad \text{for } |x| > L,
\end{align}
and after evolution,
\begin{align}
\psi_{\rm evolved}(x) \approx 0 \quad \text{for } |x| > L',
\end{align}
up to error $\epsilon_{\rm leak}$, where
\begin{align}
L' = \mathcal{O}\!\left( L + \kappa\,\polylog{1/\epsilon_{\rm leak}} + \gamma \right).
\end{align}

\paragraph{Fourier decay.}
A standard result implies polynomial decay of Fourier coefficients:

\begin{theorem}[Fourier decay]\label{thm:fourier_decay}
If $f \in L^1(\mathbb{R})$ and its weak derivative $f^{(m)} \in L^1(\mathbb{R})$, then its Fourier transform satisfies
\begin{align}
|\widehat{f}(k)| = \mathcal{O}(|k|^{-m}) \quad \text{as } |k| \to \infty.
\end{align}
\end{theorem}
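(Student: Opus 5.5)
The approach is the standard integration-by-parts argument on the Fourier side, combined with the Riemann--Lebesgue lemma. Fix the convention $\widehat{f}(k)=\int_{\mathbb{R}} f(x)e^{-ikx}\,\dd x$; other normalizations only change constants. First I make the hypothesis precise: I read ``$f^{(m)}\in L^1$ weakly'' as saying that $f,f',\dots,f^{(m-1)}$ are locally absolutely continuous and $f^{(j)}\in L^1(\mathbb{R})$ for every $0\le j\le m$. This is exactly the setting used in the \textbf{Assumptions} paragraph with $m=2$. If only $f$ and $f^{(m)}$ are given in $L^1$, I would first recover the intermediate derivatives by a Landau--Kolmogorov type interpolation argument. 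Alternatively I would simply state the intermediate integrability as part of the hypothesis, since that is how the result is applied to $\psi$.

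The key step is the identity
\begin{align}
\widehat{f^{(j)}}(k) = (ik)\,\widehat{f^{(j-1)}}(k),\qquad 1\le j\le m .
\end{align}
To prove it, I integrate by parts on $[-R,R]$:
\begin{align}
\int_{-R}^{R} f^{(j)}(x)e^{-ikx}\,\dd x = \Bigl[f^{(j-1)}(x)e^{-ikx}\Bigr]_{-R}^{R} + ik\int_{-R}^{R} f^{(j-1)}(x)e^{-ikx}\,\dd x .
\end{align}
I then need the boundary term to vanish as $R\to\infty$. Since $f^{(j)}\in L^1$, the limits $\lim_{x\to\pm\infty} f^{(j-1)}(x)=f^{(j-1)}(0)+\int_0^{\pm\infty}f^{(j)}$ exist. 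Because $f^{(j-1)}\in L^1$ as well, those limits must equal $0$. This justification of the vanishing boundary terms is the main technical obstacle, and it is precisely where the absolute continuity and intermediate integrability are used. Once it is in hand, iterating the identity gives $\widehat{f^{(m)}}(k)=(ik)^m\widehat{f}(k)$.

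From this, for $k\neq 0$,
\begin{align}
|\widehat{f}(k)| = \frac{|\widehat{f^{(m)}}(k)|}{|k|^{m}} \le \frac{\|f^{(m)}\|_{L^1}}{|k|^{m}} ,
\end{align}
which is the claimed $\mathcal{O}(|k|^{-m})$ bound, with an explicit constant $\|f^{(m)}\|_{L^1}$ and valid for all $k\neq0$, not just asymptotically. Applying the Riemann--Lebesgue lemma to $f^{(m)}$ gives the slightly stronger statement $|\widehat f(k)|=o(|k|^{-m})$, which I would mention as a remark.

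Finally, I will note the variant used downstream: on a periodic box $[-L',L']$ the same computation applies to Fourier series coefficients. There the boundary terms cancel by periodicity, or vanish up to $\epsilon_{\rm leak}$ under the compact-support assumption on $\psi$. This yields $|c_k|\le \|\psi^{(2)}\|_{L^1}(L'/\pi k)^2/(2L')$ for the spectral truncation estimates in this section.
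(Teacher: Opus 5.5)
The paper gives no proof of this statement; it simply invokes it as a standard result. Your argument is exactly that standard proof and is correct: you establish $\widehat{f^{(m)}}(k)=(ik)^m\widehat{f}(k)$ by repeated integration by parts, kill the boundary terms using local absolute continuity plus integrability, and then bound $|\widehat{f}(k)|\le\|f^{(m)}\|_{L^1}|k|^{-m}$. The Landau--Kolmogorov detour for the intermediate derivatives is valid but unnecessary. For $f\in L^1$ the identity $\widehat{f^{(m)}}=(ik)^m\widehat{f}$ already holds in $\mathcal{S}'$ for the distributional derivative, and once $f^{(m)}\in L^1$ both sides are continuous functions, so it holds pointwise and only the two stated hypotheses are needed.
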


Applying this with $m=2$, we obtain
\begin{align}
|\Psi(f)| = \mathcal{O}\!\left( \frac{1}{f^2} \right).
\end{align}

\paragraph{Aliasing error.}
Approximating $\Psi(f)$ with the discrete-time Fourier transform (DTFT),
\begin{align}
\Psi_{1/\delta}(f)
= \delta \sum_{n\in\mathbb{Z}} \psi(x_n)\, e^{-i2\pi f x_n},
\quad x_n = n\delta,
\end{align}
introduces aliasing error
\begin{align}
\left|\Psi(f) - \Psi_{1/\delta}(f)\right|
\le
\sum_{j\neq 0} |\Psi(f - j/\delta)|
=
\mathcal{O}(\delta^2).
\end{align}

---

\subsection{Diffusion Operator}

We consider
\begin{align}
e^{\kappa \partial_x^2}\psi(x)
=
\int_{\mathbb{R}}
e^{i2\pi f x}\,
e^{-\kappa (2\pi)^2 f^2}\,
\Psi(f)\,{\rm d}f.
\end{align}

\paragraph{Spectral truncation.}
Restricting to $f \in [-L_f,L_f]$, where $L_f = 1/(2\delta)$, yields truncation error
\begin{align}
\mathcal{O}\!\left(
\frac{e^{-\kappa (2\pi L_f)^2}}{L_f}
\right).
\end{align}

\paragraph{Riemann-sum approximation.}
Approximating the truncated integral with $N$ nodes,
\begin{align}
\sum_{k=-N/2}^{N/2-1}
e^{i2\pi f_k x}
e^{-\kappa (2\pi f_k)^2}
\Psi(f_k)\,\delta_f,
\quad f_k = k\delta_f,
\end{align}
with $\delta_f N/2 = L_f$, incurs error
\begin{align}
\mathcal{O}\!\left(\frac{L_f^3}{N}\right),
\end{align}
assuming $L_f \gg L'$.

\paragraph{Balancing scales.}
Choosing
\begin{align}
L_f = \Theta(N^{1/6}),
\end{align}
we obtain
\begin{align}
\text{Riemann error} &= \mathcal{O}(N^{-1/2}), \\
\text{Truncation error} &= \mathcal{O}\!\left(
\frac{e^{-\kappa (2\pi)^2 N^{1/3}}}{N^{1/3}}
\right), \\
\text{Aliasing error} &= \mathcal{O}(N^{-1/6}).
\end{align}

Thus, the dominant error scales as $\mathcal{O}(N^{-1/6})$.

\paragraph{Quantum implementation.}
We encode
\begin{align}
\ket{\psi}
=
\frac{1}{\sqrt{\mathcal{N}}}
\sum_{j=-N/2}^{N/2-1}
\psi(x_j)\ket{j+N/2},
\quad x_j = j\delta.
\end{align}

Define the DTFT operator
\begin{align}
F_{\rm DTFT}
=
\frac{1}{\sqrt{N}}
\sum_{k,j=-N/2}^{N/2-1}
e^{-i2\pi f_k x_j}
\ket{k+N/2}\bra{j+N/2}.
\end{align}

This operator is unitary iff
\begin{align}
\delta\,\delta_f = \frac{1}{N}.
\end{align}

For $N=2^n$, one obtains
\begin{align}
F_{\rm DTFT}
=
\hat{Z}_0\, \operatorname{QFT}^\dagger \hat{Z}_0,
\end{align}
up to a global phase.

\paragraph{Diagonal evolution.}
In Fourier space, the diffusion operator is diagonal:
\begin{align}
e^{-\kappa (2\pi)^2 f^2}
\;\longrightarrow\;
e^{-\kappa A},
\end{align}
with
\begin{align}
A = (2\pi)^2 \delta_f^2
\sum_k k^2 \ket{k+N/2}\bra{k+N/2}.
\end{align}

We approximate $e^{-\kappa A}$ using Fourier LCU:
\begin{align}
e^{-\kappa A}
\approx
\sum_{\ell=-M/2}^{M/2-1}
c_\ell\, e^{-i\pi \ell A / P}.
\end{align}

Using
\begin{align}
\sum_k k \ket{k+N/2}\bra{k+N/2}
=
-\tfrac{1}{2} - \sum_{b=0}^{n-1} 2^{b-1}\hat{Z}_b,
\end{align}
we obtain a fast-forwardable implementation of $e^{-i\pi A/P}$.

\paragraph{Block-encoding.}
The full non-unitary operator is
\begin{align}
U_{e^{\kappa \partial_x^2}}
=
F_{\rm DTFT}^\dagger
\,U_{e^{-\kappa A}}\,
F_{\rm DTFT}.
\end{align}

\begin{remark}
Choosing $P \ge \|A\|$ gives
\begin{align}
P \ge (2\pi)^2 \delta_f^2 (N/2)^2,
\end{align}
which implies
\begin{align}\label{eq:M_vs_N_diffusion}
M = \Omega\!\left( \kappa^{3/2} \sqrt{N} \right).
\end{align}
which is a further constraint from those derived for $M$ and $N$ with respect to $\epsilon$.
\end{remark}

---

\subsection{Drift Operator}

We now consider
\begin{align}
e^{\gamma \partial_x}\psi(x)
=
\int e^{i2\pi f x}
e^{-i2\pi \gamma f}
\Psi(f)\,{\rm d}f.
\end{align}

\paragraph{Truncation.}
Restricting to $[-L_f,L_f]$ yields error
\begin{align}
\mathcal{O}(L_f^{-1}).
\end{align}

\paragraph{Riemann approximation.}
The discretized sum
\begin{align}
\sum_{k=-N/2}^{N/2-1}
e^{i2\pi f_k x}
e^{-i2\pi \gamma f_k}
\Psi(f_k)\,\delta_f
\end{align}
has error
\begin{align}
\mathcal{O}\!\left(\frac{L_f^2}{N}\right).
\end{align}

\paragraph{Scaling.}
With $L_f = \Theta(N^{1/6})$, we obtain
\begin{align}
\text{Truncation} &= \mathcal{O}(N^{-1/6}), \\
\text{Aliasing} &= \mathcal{O}(N^{-1/6}), \\
\text{Riemann error} &= \mathcal{O}(N^{-2/3}).
\end{align}

Thus, the dominant error is again $\mathcal{O}(N^{-1/6})$.

\paragraph{Implementation.}
The circuit construction follows the same structure as for the diffusion operator, replacing the quadratic phase with a linear phase in Fourier space.

\section{Overall Resource Requirements}\label{sec:overall_resource}

In this work we consider the $D$-dimensional heat equation in the form
\begin{align}
\partial_t u(t,x)
&= \frac{1}{2}\sum_{i=1}^D \sigma_i^2(t)\,\partial_{x_i}^2 u(t,x)
+ \sum_{i=1}^D \gamma_i(t)\,\partial_{x_i} u(t,x),
\qquad
u(T,x) = \Phi_T(x),
\end{align}
where $x = (x_1,\dots,x_D)$.

\paragraph{Time reversal.}
Introduce the time-to-maturity variable $\tau = T - t$, and define
\begin{align}
w(\tau,x) := u(T-\tau,x).
\end{align}
Then $w$ satisfies
\begin{align}
\partial_\tau w(\tau,x)
&= \frac{1}{2}\sum_{i=1}^D \sigma_i^2(T-\tau)\,\partial_{x_i}^2 w(\tau,x)
- \sum_{i=1}^D \gamma_i(T-\tau)\,\partial_{x_i} w(\tau,x),
\qquad
w(0,x) = \Phi_T(x).
\end{align}

\paragraph{Operator solution.}
The formal solution can be written as
\begin{align}
w(\tau,x)
=
\exp\!\left(
\frac{1}{2}\sum_{i=1}^D
\left[
\int_0^\tau \sigma_i^2(T-s)\,{\rm d}s
\right]
\partial_{x_i}^2
\right)
\exp\!\left(
-
\sum_{i=1}^D
\left[
\int_0^\tau \gamma_i(T-s)\,{\rm d}s
\right]
\partial_{x_i}
\right)
w(0,x).
\end{align}

The drift term corresponds to a translation operator. Since it commutes with the diffusion operator, we obtain
\begin{align}
w(\tau,x)
=
\exp\!\left(
\frac{1}{2}\sum_{i=1}^D
\left[
\int_0^\tau \sigma_i^2(T-s)\,{\rm d}s
\right]
\partial_{x_i}^2
\right)
w\bigl(0,\,x - \Gamma(\tau)\bigr),
\end{align}
where $\Gamma(\tau) = (\Gamma_1(\tau),\dots,\Gamma_D(\tau))$ with
\begin{align}
\Gamma_i(\tau)
=
\int_0^\tau \gamma_i(T-s)\,{\rm d}s.
\end{align}

\paragraph{Discrete Laplacian approximation.}
As discussed in \Cref{sec:discretization_errors}, we replace the continuous operators $\partial_{x_j}^2$ with discrete, spectrally bounded operators $\Delta_{x_j}^2$ implemented via quantum Fourier transforms.

The discretized solution state is then
\begin{align}
\ket{u}
=
\prod_{j=1}^D
\exp\!\bigl( \beta\, \Delta_{x_j}^2 \bigr)
\ket{u_0},
\end{align}
where $\ket{u_0}$ encodes the shifted initial condition.

The discretization error satisfies
\begin{align}
\left\Vert \ket{w} - \ket{u} \right\Vert_{\infty}
=
\mathcal{O}\!\left( D\, N^{-1/6} \right),
\end{align}
where $(\ket{w})_{\vec{x}} = w(\vec{x})$.

\paragraph{Fourier LCU approximation.}
Using the Fourier LCU construction of \Cref{sec:FF_LCU}, we approximate $\ket{u}$ by
\begin{align}
\ket{v}
=
\prod_{j=1}^{D}
\left(
\sum_{\ell=-M/2}^{M/2-1}
p_\ell\, U_{x_j}^{\ell}
\right)
\ket{u_0},
\end{align}
with error
\begin{align}
\left\Vert \ket{u} - \ket{v} \right\Vert_{\infty}
=
\mathcal{O}\!\left( D\, M^{-1/3} \right).
\end{align}

\paragraph{Parameter scaling.}
To achieve total error $\epsilon$, it suffices to take
\begin{align}
M = \Omega\!\left( \frac{D^3}{\epsilon^3} \right),
\qquad
N = \Omega\!\left( \frac{D^6}{\epsilon^6} \right).
\end{align}

\paragraph{Block-encoding.}
We construct a unitary $U_{\ket{v}}$ that block-encodes $\ket{v}$:
\begin{align}
U_{\ket{v}}
=
\frac{1}{\mathcal{N}}
\ket{0^{\otimes m}}\!\bra{0^{\otimes m}}
\otimes
\ket{v}\!\bra{v}
+ G,
\end{align}
where $G$ is a garbage operator satisfying
\begin{align}
G \ket{0^{\otimes m}} \otimes \ket{\psi} = 0
\quad \text{for all } \ket{\psi}.
\end{align}
The normalization satisfies
\begin{align}
\mathcal{N}
=
\sum_{\vec{j}} w_0^2(\vec{j}/N)
\approx N^D.
\end{align}

\paragraph{Complexity.}
The preparation circuit is described in \Cref{fig:algorithm} and \Cref{fig:Fourier_LCU}. As shown in \Cref{sec:FF_LCU}, its cost is
\begin{align}
\mathcal{O}\!\bigl( \polylog{1/\epsilon} \bigr).
\end{align}

Combining with \Cref{lem:SE_multidim}, the total cost of reconstructing the solution surface is
\begin{align}
\tilde{\mathcal{O}}
\!\left(
\frac{1}{\epsilon}
\, 2^{D/2}
\, L^{4D}
\right),
\end{align}
where
\begin{align}
L = \mathcal{O}\!\left( \Lambda + \frac{1}{D}\log\frac{1}{\epsilon} \right).
\end{align}

The smoothness parameter $\Lambda$ satisfies
\begin{align}
\left| \partial^\alpha w(\vec{x},T) \right|
\le
\Lambda^{|\alpha|+1}.
\end{align}
In \Cref{sec:gaussian_derivative_bounds}, we show that for heat equation solutions,
\begin{align}
\Lambda = \mathcal{O}\!\left( \frac{1}{\sqrt{2\kappa T}} \right),
\end{align}
up to an $\epsilon$-approximation error.

\begin{figure} \label{fig:algorithm}
\caption{Pseudo-code for Circuit Construction for $U_{\ket{v}}$}
\begin{mdframed}[linewidth=1pt, roundcorner=4pt]
\begin{algorithmic}[1]

\Require Inputs: $\beta$, $M$, $N$, $L$, $w_{0}(\mathbf{x})$
\Ensure Output: Unitary circuit, $U_{\ket{v}}$, block-encoding $w(\mathbf{x},T)$

\State On the target $D\times N$ qubits, add the circuitry needed to prepare the initial conditions $w_0(\mathbf{x})$. This is done efficiently through Grover-Rudolph method, provided the integral of $(w_0)^2(\mathbf{x})$ is efficiently computable with classical (reversible) arithmetic

\State Apply the preparation circuit, $U_{\rm prep}$, that prepares the approximate square root of the Fourier series coefficients on the $M$ ancillar qubits using Grover-Rudolph.

\State Perform the unitary $U^{-2^{m-1}}$, where $m=\log_2 M$

\For{$j \gets 0$ to $m-1$}
    \State Perform the unitary $U^{2^{j}}$ controlled on the $j$-th qubit
\EndFor

\State 

\State \Return Resulting circuit for $M + D\times N$ qubits

\end{algorithmic}
\end{mdframed}
\end{figure}

\section{Numerical simulations}

To demonstrate our technique, we carry out a small-scale numerical simulation using an implementation in \textsc{pytket}. Here we solve the one dimensional heat equation $\frac{\partial \psi}{\partial t} = \frac{\partial^2 \psi}{\partial x^2}$ on $(x,t) \in \mathbb{R}\times [0,\infty)$ with an initial condition $\psi(x,t=0) = \psi_0(x) \propto \cos(5\pi x) + 2\cos(\pi x) + 4$ (up to the $\ell_2$ normalization) and periodic boundary conditions. The exact solution is then obtained using the heat kernel Green's function as \begin{align}
    \psi(x,t) = \frac{1}{\sqrt{4\pi t}} \int_{-\infty}^\infty \psi_0(y) \cdot e^{-\frac{(x-y)^2}{4t}}\ \dd y\,.\label{eqn:heat equation example, exact soln}
\end{align}

We start by encoding the initial condition into a quantum state using the Grover-Rudolph method, to which we then apply the block encoding \Cref{fig:Fourier_LCU} of the time propagator and project it onto the appropriate subspace to carry out the LCU. In Figure \ref{fig:heat_equation}, we compare the measured amplitudes to the exact solution \eqref{eqn:heat equation example, exact soln} for different values of time $t$. On Subfigures \ref{fig:heat_equation_partA}-\ref{fig:heat_equation_partC}, even with only $14$ total qubits, we see a very accurate match between the evolved state and the exact solution. However, we require the number of ancilla qubits to depend logarithmically on the evolution time, as specified in \eqref{eq:M_vs_N_diffusion}, and so for a fixed number of qubits as in these simulations, we will see the state stop evolving after some maximal time. This is demonstrated on Subfigure \ref{fig:heat_equation_partD}, where we see the evolved state unchanged from that of Subfigure \ref{fig:heat_equation_partC}, while the exact solution kept on evolving. This is to be expected, as for large times, any fixed number of the LCU coefficients given by \eqref{eq:c_tilde} will become uniform (although still decreasing with $t$ in magnitude, which, however, won't be encoded in the quantum state representing the coefficients), and so the result of the LCU after this maximal time will stay the same.

\begin{figure}[H]
    \centering
    \subfloat[$t=0.001$]{
            \includegraphics[width=0.5\textwidth]{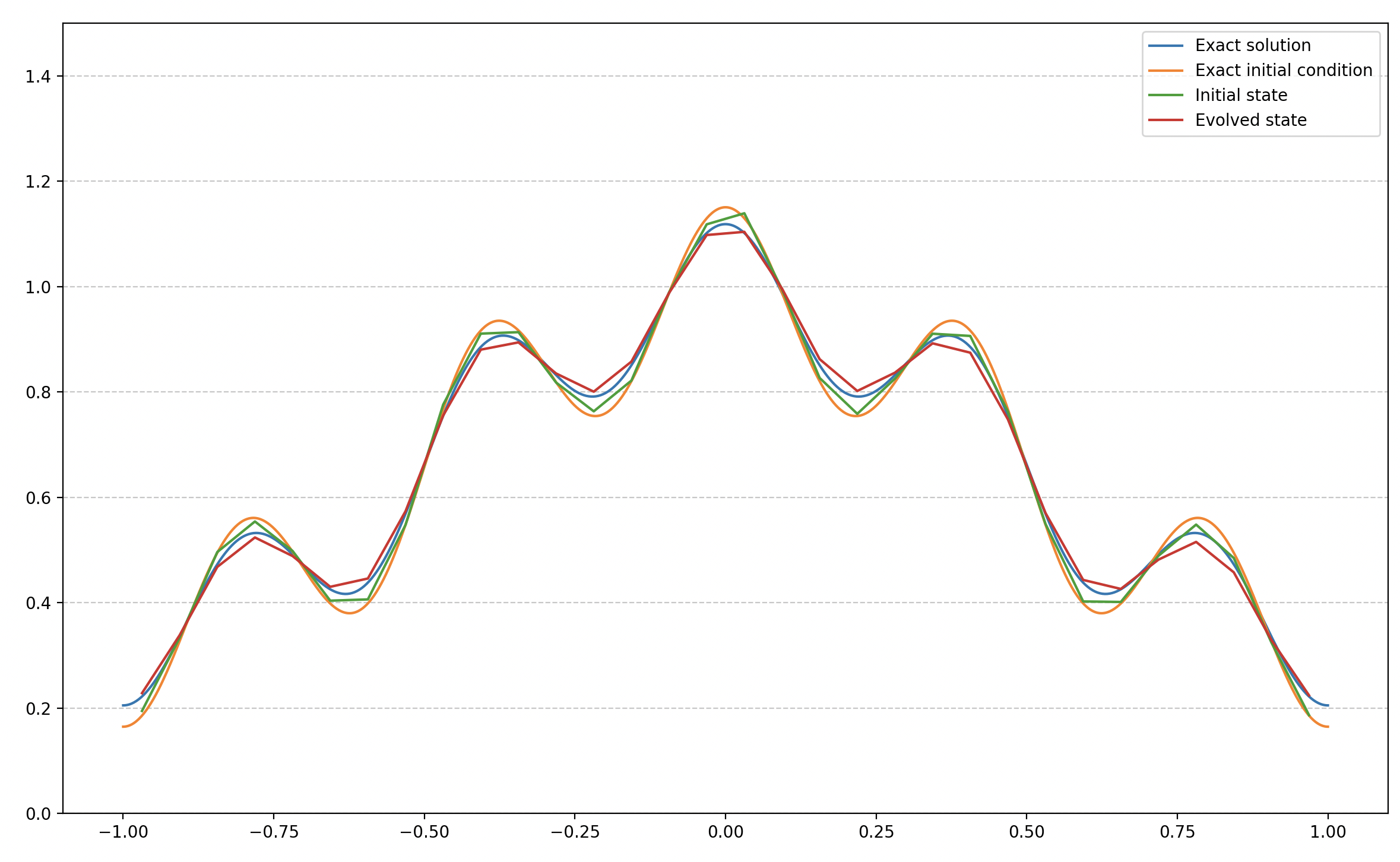}\label{fig:heat_equation_partA}}
    \subfloat[$t=0.0025$]{
            \includegraphics[width=0.5\textwidth]{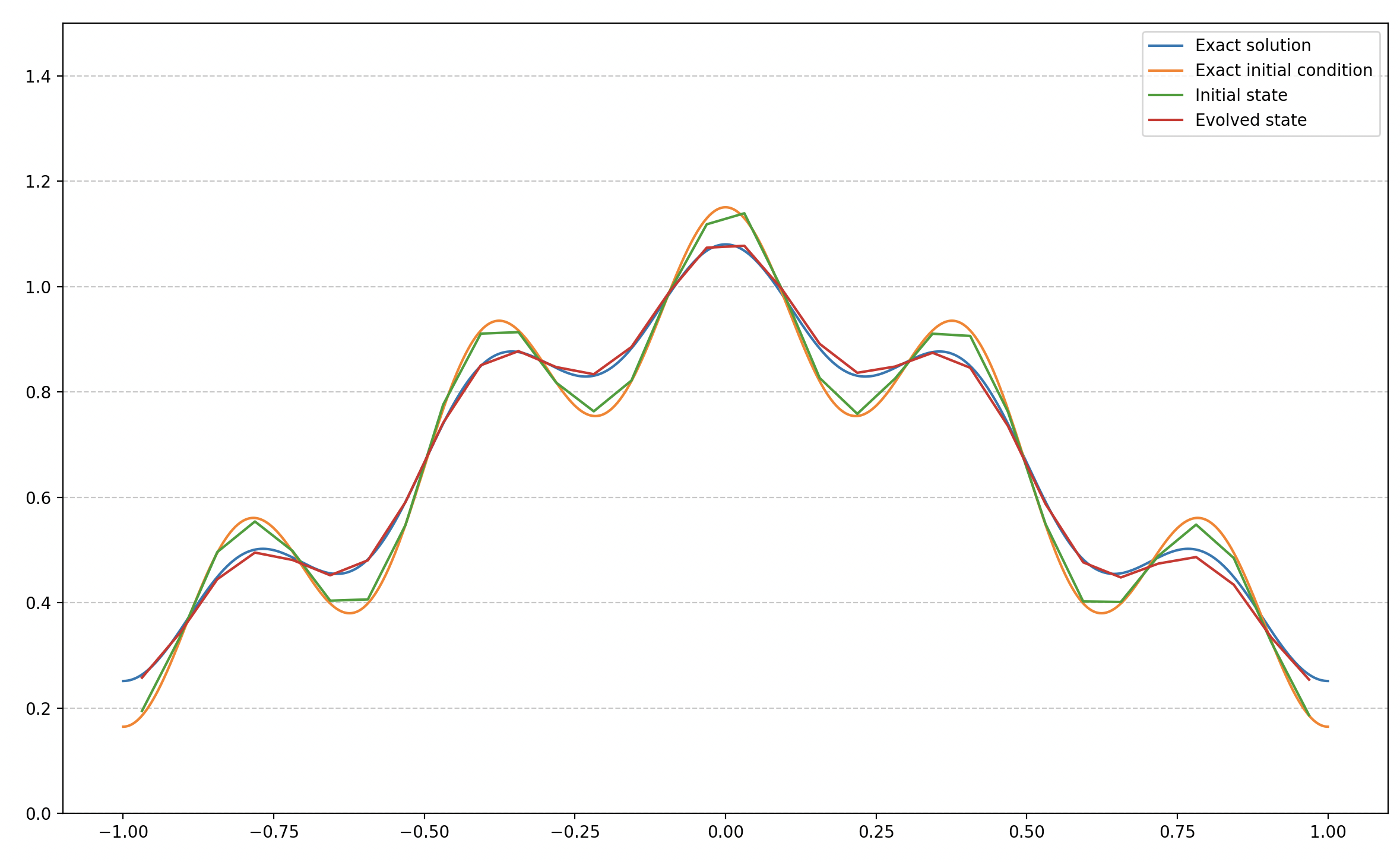}\label{fig:heat_equation_partB}}\\
    \subfloat[$t=0.005$]{
            \includegraphics[width=0.5\textwidth]{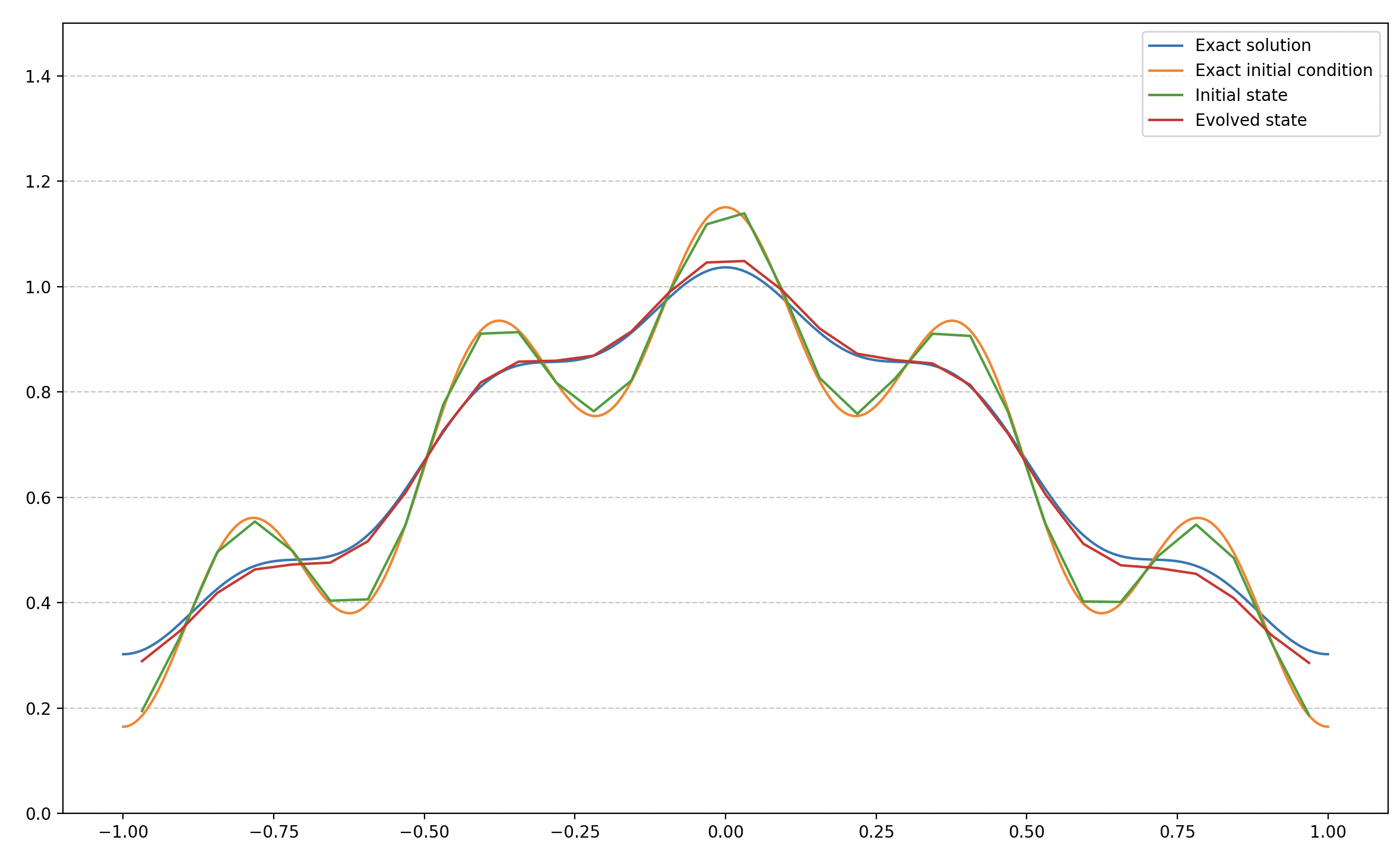}\label{fig:heat_equation_partC}}
    \subfloat[$t=0.15$]{
            \includegraphics[width=0.5\textwidth]{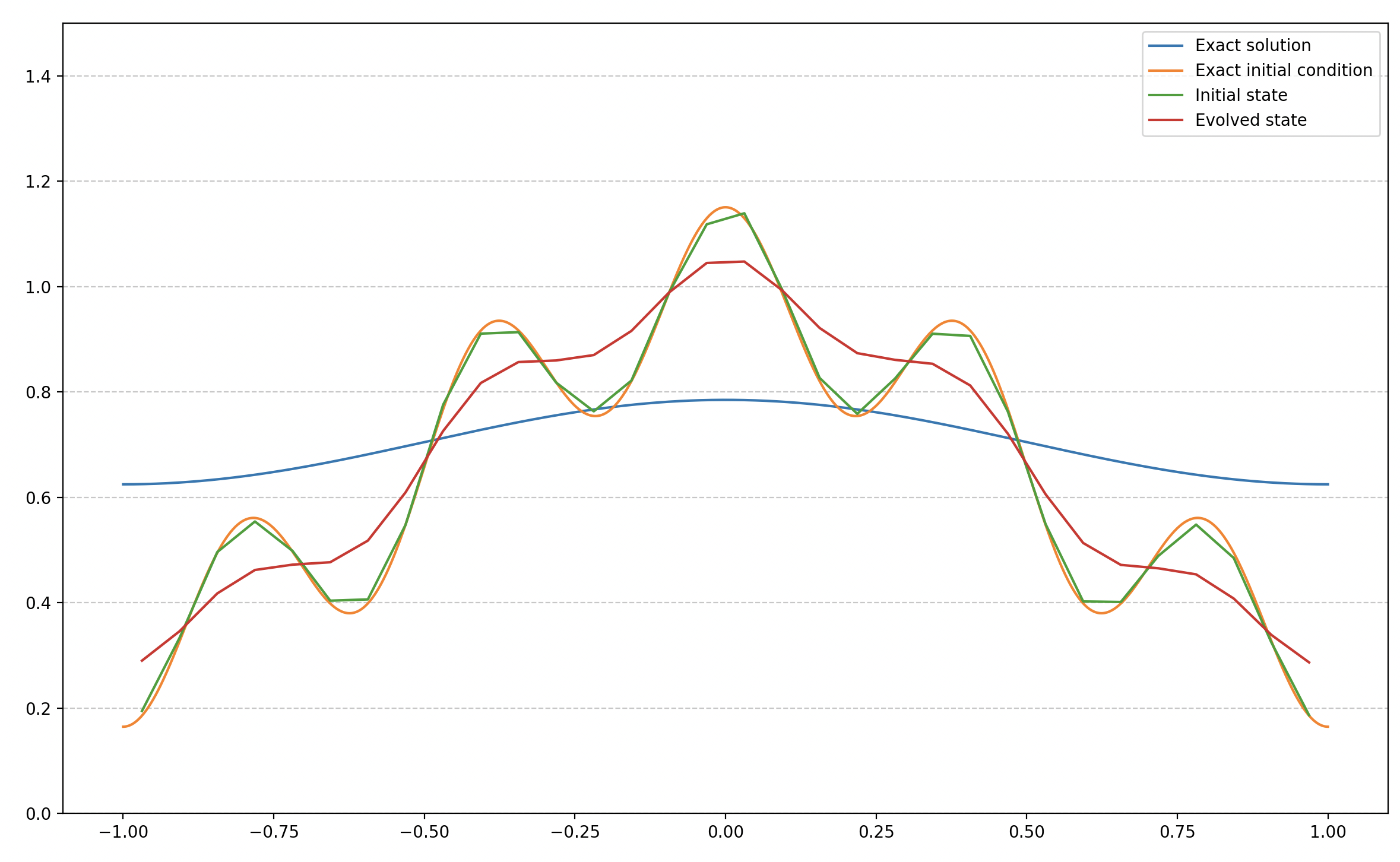}\label{fig:heat_equation_partD}}
    \caption{Solving the heat equation $\partial_t \psi = \partial^2_x\psi$ with initial condition $\psi_0(x) \propto \cos(5\pi x) + 2\cos(\pi x)+4$ and periodic boundary conditions, comparing the exact solution to the one obtained with FFLCU, using $n=5$ qubits for the state, $m=4$ qubits for the LCU, and $k = 5$ ancilla qubits for Grover-Rudolph method used for encoding the LCU coefficients. Here, all functions are always rescaled to have $\ell_2$ norm equal to $1$ (note that this property is normally not preserved during evolution under heat equation). Correspondingly, all measured amplitudes are rescaled by $\sqrt{2^{n-1}}$ to match.}
    \label{fig:heat_equation}
\end{figure}

\section{Applications}

The obvious application to come to mind is the study of heat transfer. Apart from this, we can apply this kind of methods to the pricing of financial contracts. In the following, we will explain how we can use these methods for pricing of multi-asset European option and Bermudan option contracts as well as pricing \textit{swaption} contracts.

\subsection{Basket Option Pricing}
We start with the problem of pricing a basket option under the Black-Scholes model, that is, its PDE formulation is:

\paragraph{PDE formulation.}
Let \(V:[0,T]\times(0,\infty)^D\to\R\) be sufficiently smooth. Then \(V\) solves
\[
\partial_t V
+\sum_{i=1}^D (r-q_i) S_i\,\partial_{S_i} V
+\frac{1}{2}\sum_{i=1}^D\sum_{j=1}^D
\rho_{ij}\sigma_i\sigma_j\, S_i S_j\, \partial_{S_i S_j}^2 V
- r V \;=\; 0,\qquad t<T,
\]
with terminal condition
\[
V(T,s)=g(s)=\bigl(w^\top s - K\bigr)^+,
\]
and suitable growth/absorbing boundary conditions as \(S_i\downarrow 0\) and \(S_i\uparrow\infty\).
The terminal condition greatly simplifies if we rotate the assets axes such that the vector $w$ points fully into any of these new rotate independent variables.

In order to implement the solution procedure we first switch to $\log$-variables and reverse time:
\begin{align}
    x_i &= \log{S_i} \cr
    \tau &= T-t \cr 
    U (\tau,x)&= V(t,s).
\end{align}
With this, we obtain:
\[
-\partial_{\tau} U
+\sum_{i=1}^D (r-q_i) \,\partial_{x_i} U
+\frac{1}{2}\sum_{i=1}^D\sum_{j=1}^D
\rho_{ij}\sigma_i\sigma_j\, \partial_{x_i x_j}^2 U
- r U \;=\; 0,\qquad  0 < \tau<T.
\]
Following this, we remove the drift terms with the moving frames
\begin{align}
    y_i = x_i + \tau (r-q_i)
\end{align}
we remove the zeroth order term with the definition
\begin{align}
    W(\tau,y) = e^{r \tau} U(\tau,x),
\end{align}
obtaining:
\begin{align}
    \partial_\tau W - \frac{1}{2}\sum_{i=1}^D\sum_{j=1}^D\Sigma_{ij} \partial_{y_i} \partial_{y_j} W = 0
\end{align}

We take the covariance matrix:
\begin{align}
    \Sigma_{ij} = \rho_{ij} \sigma_i \sigma_j.
\end{align}
We can diagonalize this positive-definite matrix through:
\begin{align}
    \Sigma = Q \Lambda Q^{\top}.  
\end{align}
Using this definition we establish a new change of variables:
\begin{align}
    z &= Q^{\top} y,\cr
    K(\tau,z) &= W(\tau,y).
\end{align}
Through this, we obtain:
\begin{align}
    \partial_\tau K - \frac{1}{2}\sum_{i=1}^D\Lambda_{ii} \partial^2_{z_i}  K = 0.
\end{align}
Finally, we make the heat equation anisotropic by rescaling each variable through
\begin{align}
    w&=\left(\Lambda\right)^{-1/2} z \cr
    L(\tau,w)&= W(\tau,z),
\end{align}
thus,
\begin{align}
    \partial_\tau L - \frac{1}{2}\sum_{i=1}^D \partial^2_{w_i}  L = 0.
\end{align}
The overall cost of our method to price $L$ is
\begin{align}
    \mathcal{O}\left( \frac{1}{\epsilon}  M^{4D} \right)
\end{align}
where $M$ is the number of interpolation samples per dimension, which scales the following way ~\Cref{lem:SE_multidim}:
\begin{align}
    \mathcal{O}\left(\Lambda + \frac{1}{D}\log{1/\epsilon} \right).
\end{align}
If one were to obtain the whole surface using Quantum Monte Carlo methods, the expected scaling is similar since one needs to sample $M^{D}$ interpolation points as well. This with a similar scaling in qubit requirements: $\mathcal{O}\left(\polylog{1/\epsilon}\right)$.

\subsubsection{Multi-Asset Bermudan Options}\label{subsec:Bermudan}

A Bermudan option allows early exercise only at a discrete set of times $0=t_0<t_1 < t_2 < \dots < t_N = T$. Between exercise dates, the value function$V(\mathbf{S}, t)$, with $\mathbf{S} = (S_1, S_2, \dots, S_D)$ the vector of underlying asset prices, satisfies the multi-dimensional Black--Scholes partial differential equation. For \( t \in (t_i, t_{i+1}) \), the value function \( V(\mathbf{S}, t) \) satisfies
\begin{align}
\frac{\partial V}{\partial t}
+ \frac{1}{2} \sum_{j=1}^{D} \sum_{k=1}^{D} 
\rho_{jk} \sigma_j \sigma_k S_j S_k 
\frac{\partial^2 V}{\partial S_j \partial S_k}
+ \sum_{j=1}^{D} (r - q_j) S_j \frac{\partial V}{\partial S_j}
- r V = 0,
\label{eq:multi_bs_pde}
\end{align}
where
\begin{itemize}
    \item \( r \) is the risk-free rate,
    \item \( q_j \) is the dividend yield of asset \( j \),
    \item \( \sigma_j \) is the volatility of asset \( j \),
    \item \( \rho_{jk} \) is the constant correlation between assets \( j \) and \( k \).
\end{itemize}

At maturity \( T = t_N \), the terminal condition is the payoff:
\begin{align}
V(\mathbf{S}, T) = \Phi(\mathbf{S}),
\end{align}
typically,
$$
\Phi(\mathbf{S}) = \max\!\big( \sum_{j=1}^{D} w_j S_j - K, 0 \big)
$$
for a basket call with weights \( w_j \). At the domain boundaries, suitable Dirichlet or asymptotic conditions are imposed:
\[
V(\mathbf{0}, t) = 0, \qquad
V(\mathbf{S}, t) \to \sum_{j} w_j S_j e^{-q_j (T-t)} - K e^{-r (T-t)}
\quad \text{as } S_j \to \infty.
\]
Although, due to speed of information propagation, or the shape of the heat-kernel (Gaussian) the boundaries are irrelevant when they are far away from the domain of interest. 

First we set:
\begin{align}
    x_i &= \log{S_i} \cr
    \tau &= T-t \cr 
    U (\tau,x)&= V(t,S).
\end{align}
With this, we obtain again the same form:
\[
- U_\tau
+ \vec{\gamma}^{\top} \nabla_x U
+ \tfrac{1}{2}\,\nabla_x^{\top} \Sigma \,\nabla_x U
- r U = 0,\qquad x = \log S,\quad \tau = T-t,
\]
with $\Sigma_{ij} = \rho_{jk}\sigma_j \sigma_k$ being semi-positive definite and $\gamma_i = r - q_i$. 

Now, we choose $A$ such that $A\Sigma A^{\top} = I$ (e.g.\ $A = \Lambda^{-1/2}Q^{\top}$ for $\Sigma = Q\Lambda Q^{\top}$), set
\[
z = A x,\qquad \vec{a} = A\vec{\gamma},
\]
and define
\[
U(\tau,x)
= \exp\!\Big(
   -\vec{a}^{\top} z
   - \big(r + \tfrac12\|\vec{a}\|^2\big)\tau
  \Big)\,u(\tau,z).
\]
Then $u$ satisfies the isotropic heat equation
\[
u_\tau = \frac{1}{2}\Delta_z u,
\]
and the payoff becomes
$$
\phi(z) = \exp\!\Big(
   \vec{a}^{\top} z
  \Big) \max\!\left( \sum_{j=1}^{D} w_j e^{(A^{-1} z)_j} - K, 0 \right).
$$
At each Bermudan exercise date \( \tau_i = T - t_i  \), the early exercise condition couples the PDE segments:
\begin{align}
u(z, \tau_i) = 
\max\!\left(
u_{\text{cont}}(z, \tau_i^+),\, \phi(z)
\right),
\label{eq:bermudan_exercise}
\end{align}
where \( u_{\text{cont}}(z, \tau_i^+) \) is the continuation value obtained by backward propagation (now forward propagation because of the change of variables $\tau = T - t$) of Eq.~\eqref{eq:multi_bs_pde} from \( \tau_{i+1} \) to \( \tau_i \). Thus, we solve the differential equation the following way:
\begin{enumerate}
    \item Prepare the quantum memory in the terminal (initial in $\tau$) state $u(z,0)$ through Grover-Rudolph (GR)
    \item Propagate from $\tau_N=T-t_N=0$ to $\tau_{N-1}$
    \item Wrap steps 1 and 2 in a Grover operator to extract the solution $u_{\rm cont}(z,\tau_{N-1}^+)$ within precision $\varepsilon$ using the interpolation methods of ~\cite{GumaroS2026}.
    \item Prepare our approximation of state $u(z, \tau_{N-1}) = \max\!\left(u_{\text{cont}}(z, \tau_{N-1}^+),\,\phi(z)\right)$ through GR. 
    \item Propagate from $\tau_{N-1}$ to $\tau_{N-2}$.
    \item Repeat until reaching $\tau_0=0$.
\end{enumerate}

The expected error at the end is
\begin{align}
    \mathcal{O}(N \varepsilon)
\end{align}
for $\varepsilon N \ll 1$, otherwise one gets an exponential growth of error. The cost of each of the steps is
\begin{align}
    \mathcal{O} \left( 2^{D/2} M^{4D} \frac{1}{\varepsilon}\right),
\end{align}
where the expected smoothness parameter (See \Cref{sec:gaussian_derivative_bounds} ) is $\Lambda = \mathcal{O} \left( \frac{\sqrt{\log{1/\varepsilon}}}{\sqrt{2\kappa \Delta t}}  \right)$, where $\Delta t = \min_j (t_{j}-t_{j-1})$ for $1\leq j\leq N$, and provided the overall polynomial degree, $(M-1)^D$, is greater than $\frac{D}{2\log{2}} \log \left(\frac{1}{4\pi \kappa \Delta t}\right)$.
For a final target error of $\epsilon$, we must choose
\begin{align}
    \varepsilon = {\Theta}\left(\epsilon/N\right),
\end{align}
thus, the final cost scales like
\begin{align}
    \mathcal{O} \left(N^2 2^{D/2} M^{4D} \frac{1}{\epsilon} \right).
\end{align}
With this, we demonstrate a quantum advantage with respect to an asymptotic scaling in $\epsilon$ compared to classical methods like classical Monte Carlo, binomial lattices, which scale like $\mathcal{O}(\epsilon^{-2})$. Even more so for classical approaches to solve PDEs when the problem is multi-asset, where time-complexity scales like $\mathcal{O}(1/\epsilon^{1/2+d/2})$.

Compared to the work in \cite{Miyamoto2022}, which uses quantum Monte Carlo,  our method works exponentially better with respect to the number of exercise dates $N$. The computational complexity does however scale exponentially with respect to the number of assets for both approaches.

\begin{table}[h!]
\centering
\renewcommand{\arraystretch}{1.25}
\begin{tabular}{c|c|c|c}
\hline
$K$ 
& $\displaystyle \frac{e^{0.1K}}{K^{5/2}}$ 
& $\displaystyle \frac{e^{0.5K}}{K^{5/2}}$ 
& $\displaystyle \frac{e^{K}}{K^{5/2}}$ \\
\hline
10  
& $8.60\times 10^{-3}$ 
& $4.68\times 10^{-1}$ 
& $6.97\times 10^{1}$ \\

20  
& $4.13\times 10^{-3}$ 
& $3.00\times 10^{1}$ 
& $2.71\times 10^{5}$ \\

50  
& $8.39\times 10^{-3}$ 
& $3.17\times 10^{6}$ 
& $2.93\times 10^{17}$ \\

100 
& $2.20\times 10^{-1}$ 
& $5.18\times 10^{16}$ 
& $2.69\times 10^{38}$ \\

200 
& $8.58\times 10^{2}$ 
& $1.18\times 10^{39}$ 
& $1.28\times 10^{81}$ \\

500 
& $9.27\times 10^{14}$ 
& $1.29\times 10^{96}$ 
& $2.51\times 10^{210}$ \\
\hline
\end{tabular}
\caption{Ratios of exponential growth from \cite{Miyamoto2022} to polynomial growth in cost from this work: $\frac{e^{\alpha K}}{K^{5/2}}$ for $\alpha=0.1,0.5,1$. We could not ascertain exactly the coefficients of the exponential growth in cost from ~\cite{Miyamoto2022} and we only obtain $\mathcal{O}$-notation bounds for the polynomial growth in this work, so we have chosen to tabulate a few values of $\alpha$ to demonstrate how quickly the exponential growth over-takes the polynomial growth obtained here. The $N^{5/2}$-type scaling is obtained after we limit ourselves to a 1D problem.}
\end{table}

\begin{figure}
    \centering
    \includegraphics[width=0.8\linewidth]{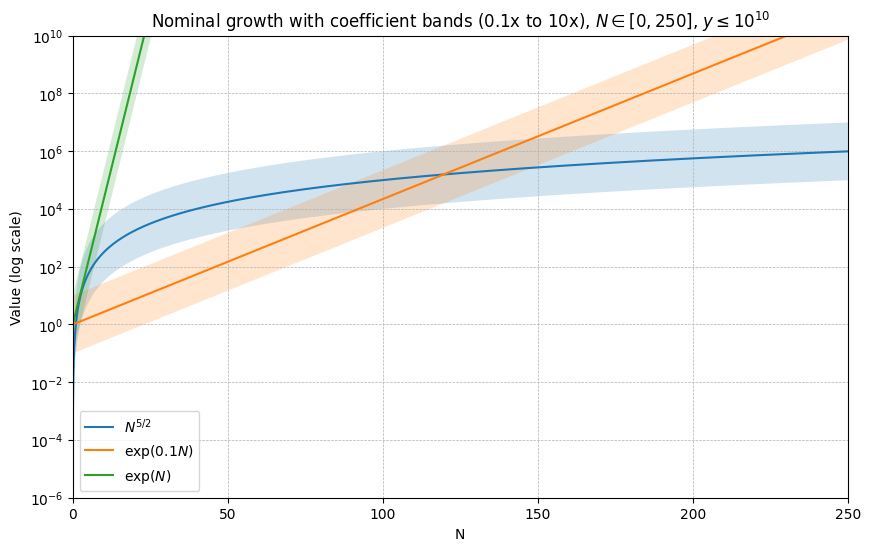}
    \caption{Cost growth comparison with respect to number of exercise dates, $N$, between the method developed here and the one in \cite{Miyamoto2022}. We have varied the coefficients enveloping the functional form of the cost-bounds to illustrate the sensitivity to different likely scenarios as we only have access to $\mathcal{O}$-notation bounds.}
    \label{fig:placeholder}
\end{figure}

\begin{table}[h]
\centering
\begin{tabular}{c c c c}
\hline
$\epsilon$ & $D=1$ & $D=3$ & $D=7$ \\
\hline
$10^{-1}$ & $30$--$40$ & $70$--$94$ & $150$--$200$ \\
$10^{-2}$ & $60$--$70$ & $140$--$164$ & $300$--$350$ \\
$10^{-3}$ & $90$--$100$ & $210$--$234$ & $450$--$500$ \\
$10^{-4}$ & $120$--$130$ & $280$--$304$ & $600$--$650$ \\
\hline
\end{tabular}
\caption{Our estimated qubit requirements for working/storing the solution for Bermudan option pricing plus the ancillary qubits for Fourier LCU. The range indicates the sensitivity of values by varying the proportionality constant of error convergence with respect to $N$. Values of $D\log_2(C/\epsilon^6)+\log_2(\sqrt{C}/\epsilon^3)$ for $C\in[1,100]$, rounded up to the nearest integer.}
\end{table}

\section{Conclusion and Outlook}

We have come up with new methods for solution preparation for the heat equation, exponentially improving existing methods. The methods developed here can be also advantageously used for other PDEs where the discretized operator is fast-forwardable. We have also demonstrated that this is an exemplary use case for the solution extraction methods developed in ~\cite{GumaroS2026}. We hope to also open the possibility for quantum advantage for other types of problems, here we have just listed a few examples. The solution extraction remains the bottleneck for problems with very high number of dimensions since the cost scaling remains exponential with $D$ (number of dimensions). This type of scaling is also expected for quantum Monte Carlo methods if one wants to get the whole solution surface since Monte Carlo methods only get you a point of the surface and interpolation can be used but at an exponential cost with $D$ as well. Most often than not Monte Carlo methods avoid this by not getting the whole surface and evaluating point-by-point. We also demonstrate an exponential improvement in cost with respect to the number of exercise dates compared to other quantum methods for pricing Bermudan options (\Cref{subsec:Bermudan}).
We hope that in the future we can address this ameliorated dimensionality curse for the PDE approach. Moreover, we hope to extend these methods to other kind of options like some other volatility method like the Heston model.

\appendix

\section{Derivative bounds on Approximate Gaussian Distribution}\label{sec:gaussian_derivative_bounds}

The Gaussian function is the fundamental solution to the Heat equation. That is, it describes the response of the equation to the Dirac delta impulse, $\delta(\vec{x})$. Thus, all solutions inherit the smoothness of this solution (after some time $t$).

\paragraph{Heat kernel.}
For $\kappa>0$ and $t>0$, the $D$-dimensional heat kernel is
\[
G_t(x)
=\frac{1}{(4\pi \kappa t)^{D/2}}
\exp\!\left(-\frac{\abs{x}^2}{4\kappa t}\right),\qquad x\in\R^D,
\]
so that $\int_{\R^D} G_t(x)\,dx=1$ and $G_t*G_s=G_{t+s}$.

\paragraph{Cauchy problem (homogeneous).}
Let $u_0\in L^1(\R^D)$ (or a finite measure / tempered distribution).
The unique solution of
\[
\partial_t u = \kappa \Delta u\quad\text{in }\R^D\times(0,\infty),\qquad
u(\cdot,0)=u_0
\]
is given for $t>0$ by the Gaussian convolution
\[
u(x,t) \;=\; (G_t*u_0)(x)
\;=\; \int_{\R^D} G_t(x-y)\,u_0(y)\,dy.
\]

Assume $u\in L^1(\mathbb{R})$ and, for all $n\ge 0$ and $x\in\mathbb{R}$,
\[
\bigl|f^{(n)}(x)\bigr|\le \Lambda^{\,n+1}.
\]
Then, for every $n\ge 0$ and $x\in\mathbb{R}$,
\[
\;
(u*f)^{(n)}(x)
= (u * f^{(n)})(x)
\le \|u\|_{L^1}\,\|f^{(n)}\|_{L^\infty}
\le \|u\|_{L^1}\,\Lambda^{\,n+1}. \;
\]

\emph{Reason.} Differentiate under the integral and apply Young’s inequality $L^1*\!L^\infty\to L^\infty$:
\begin{align*}
|(u*f)^{(n)}(x)|
&= \left| \frac{d^n}{dx^n}\int_{\mathbb{R}} u(y)\,f(x-y)\,dy \right|
 = \left| \int_{\mathbb{R}} u(y)\,f^{(n)}(x-y)\,dy \right| \\
&\le \int_{\mathbb{R}} |u(y)|\,|f^{(n)}(x-y)|\,dy
 \le \|u\|_{L^1}\,\|f^{(n)}\|_{L^\infty}
 \le \|u\|_{L^1}\,\Lambda^{\,n+1}.
\end{align*}

However, for the Gaussian function $g_\sigma(x) = e^{-x^{2}/(2\sigma^{2})}$, a simple derivative bound estimate that follows from Cauchy’s estimate also gives a $\left(\tfrac{n}{e}\right)^{n/2}$ growth factor which does not fit the required form in ~\Cref{lem:SE_multidim}

We can quickly show this by first using the bound from Cauchy's estimate:
\begin{align}
    \sup_x |G_\sigma^{(n)}(x)|
    &\le \frac{n!}{r^n} e^{r^{2}/(2\sigma^{2})}
\end{align} 
Optimizing at $r^{2}=n\sigma^{2}$ yields
\begin{align}
    \sup_x |G_\sigma^{(n)}(x)|
    \le \frac{n!\,e^{n/2}}{\sigma^{n} n^{n/2}}
    \;\sim\;
    \frac{\sqrt{2\pi n}}{\sigma^{n}}\left(\frac{n}{e}\right)^{n/2}.
\end{align}
Thus, the $n$-th derivative grows roughly as
\[
|g_\sigma^{(n)}|\;\lesssim\;\sigma^{-n}\left(\tfrac{n}{e}\right)^{n/2},
\]
up to polynomial factors in $n$.

However, we can bound the derivatives of a proxy distribution which approximates the Gaussian. Thus, we can think of the Solution Extraction (SE) method to be extracting this approximate Gaussian distribution (or the initial condition convoluted with the approximate Gaussian) with the number of samples required converging quickly at the price of increasing the error by $\| u\|_{L^1} \epsilon$, where $\epsilon$ is the error between the Gaussian and this proxy distribution.  One can use Young's inequality to see this:
\begin{align}
    \max_{x} \left|(u*\hat{G}_{t} - u*\hat{G}_{t;\epsilon})(x)\right| \leq \| u \|_{L^{1}} \| \hat{G}_{t} - \hat{G}_{t;\epsilon}\|_{\infty} = \epsilon\leq \| u \|_{L^{1}}.
\end{align}

To find this proxy distribution whose derivatives we can bound through a bound of the form $\Lambda^{n+1}$, we borrow a slightly modified lemma and proof from ~\cite{WangFRJ2025} that provides an efficient trigonometric expansion for a centered Gaussian distribution:

\begin{lemma}[Trigonometric Approximation of the Gaussian Distribution]

For every $\sigma, \epsilon \in (0, 1)$, there exist an efficiently-computable trigonometric polynomial $G_{\sigma;\epsilon}(x) = \sum_{j=0}^N a_j \cos(2\pi j x/T)$ of degree $N=\mathcal{O}(\sigma^{-1} T \sqrt{\mylog{1/\epsilon}})$, where $T=\max(2\pi, \Theta(\sigma \sqrt{\mylog{1/\epsilon}}))$, such that 
\begin{itemize}
    \item $\sum^{N}_{j=0} |a_j| \le 1$, for all $x \in \R$;
    \item $|G_{\sigma;\epsilon}(x) - e^{-x^2/(2\sigma^2)}| \le \epsilon$, for all $x \in [-\pi, \pi]$.
\end{itemize}
\label{lem:trigonometric_approx_gaussian_func}
\end{lemma}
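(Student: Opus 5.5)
We would build $G_{\sigma;\epsilon}$ from the Fourier series of a \emph{periodized} Gaussian, truncate it, and control the two resulting errors separately. Set $g(x)=e^{-x^2/(2\sigma^2)}$ and take the periodization
\[
\tilde g(x)=\sum_{k\in\mathbb{Z}} g(x-kT),
\]
which is $T$-periodic and even. Its cosine coefficients follow from Poisson summation:
\[
\tilde g(x)=\sum_{j\ge 0} b_j\cos(2\pi j x/T),\qquad b_0=\frac{\sqrt{2\pi}\,\sigma}{T},\qquad b_j=\frac{2\sqrt{2\pi}\,\sigma}{T}\,e^{-2\pi^2\sigma^2 j^2/T^2}\ (j\ge1).
\]
All $b_j$ are positive. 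We then define $G_{\sigma;\epsilon}$ as the truncation of this series at degree $N$, that is, $a_j=b_j$ for $j\le N$.

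\textbf{Coefficient bound.} Since every $b_j>0$, we have $\sum_{j\le N}|a_j|\le \sum_j b_j=\tilde g(0)$. Also $\tilde g(0)=1+2\sum_{k\ge1}e^{-k^2T^2/(2\sigma^2)}$. This is not exactly $\le1$, so we rescale by $1/\tilde g(0)$. Because $T\ge 2\pi$ and $\sigma<1$, the factor $\tilde g(0)-1$ is at most $\epsilon/3$, and the rescaling changes values by at most that much. This gives the first bullet.

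\textbf{Periodization error on $[-\pi,\pi]$.} Take $T\ge 2\pi$ and also $T\ge 2\pi+\Theta(\sigma\sqrt{\log(1/\epsilon)})$. The second condition is what the $\max$ in the statement encodes, read with $T$ comfortably larger than $2\pi$. For $|x|\le\pi$ and $k\neq0$ we then have $|x-kT|\ge |k|T-\pi\ge |k|T/2$. Summing the Gaussian tail geometrically gives
\[
|\tilde g(x)-g(x)|\le 2\sum_{k\ge1}e^{-k^2T^2/(8\sigma^2)}\le \epsilon/3 .
\]

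\textbf{Truncation error.} For every $x$,
\[
\Bigl|\tilde g(x)-\sum_{j\le N}b_j\cos(2\pi jx/T)\Bigr|\le \sum_{j>N}b_j\lesssim \frac{\sigma}{T}\cdot\frac{T^2}{\sigma^2 N}\,e^{-2\pi^2\sigma^2N^2/T^2},
\]
using a Gaussian tail integral. To make this at most $\epsilon/3$ it suffices to take $N=\Theta\bigl(T\sigma^{-1}\sqrt{\log(T/(\sigma\epsilon))}\bigr)$. This matches the claimed $N=\mathcal{O}(\sigma^{-1}T\sqrt{\log(1/\epsilon)})$ up to the logarithmic slack.

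\textbf{Combining.} A triangle inequality over the three errors (periodization, truncation, normalization) gives total error at most $\epsilon$ on $[-\pi,\pi]$. The $a_j$ are closed-form, so the polynomial is efficiently computable.

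The main obstacle is reconciling the requirement $\sum_j|a_j|\le 1$ with the pointwise accuracy. The rescaling must cost only $\mathcal{O}(\epsilon)$, which needs $T/\sigma\gtrsim\sqrt{\log(1/\epsilon)}$. That condition is automatic here because $T\ge 2\pi>\sigma$, once constants are tracked. A second delicate point is absorbing the $\log(T/\sigma)$ inside the square root into the stated degree bound. We would handle it by noting $\sigma\ge\epsilon$ can be assumed without loss (otherwise $g$ is already $\epsilon$-close to a narrow bump and we can shrink to that regime), or by accepting an extra $\sqrt{\log(1/\sigma)}$ factor in line with the ``slightly modified'' form borrowed from \cite{WangFRJ2025}.
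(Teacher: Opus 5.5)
Your construction is essentially the paper's. You periodize $g$, get the cosine coefficients by Poisson summation, bound the periodization error on $[-\pi,\pi]$ using $|x|\le T/2$, bound the Fourier tail, and rescale so that $\sum_j|a_j|\le 1$. The one real difference is the rescaling. The paper multiplies by $(1-\epsilon/3)$, after reading off $\sum_{j\le N}(\cdot)\le 1+\epsilon/3$ from the combined error bound at $x=0$. You divide by $\tilde g(0)=\sum_j b_j$, which gives the coefficient bound exactly and is slightly cleaner. You also do not need $T\ge 2\pi+\Theta(\sigma\sqrt{\log(1/\epsilon)})$: taking $T=\max(2\pi,c\,\sigma\sqrt{\log(1/\epsilon)})$ already supplies both facts you use.

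Two of your justifications are wrong, however, and the second leads you to settle for less than the lemma claims.

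(i) The bound $\tilde g(0)-1\le\epsilon/3$ does not follow from $T\ge 2\pi$ and $\sigma<1$. Those only give $T/\sigma>2\pi$, hence a fixed bound of order $e^{-2\pi^2}$. For the same reason, the claim in your last paragraph that $T/\sigma\gtrsim\sqrt{\log(1/\epsilon)}$ is ``automatic'' is false. What you actually need is the other branch of the max, $T\ge c\,\sigma\sqrt{\log(1/\epsilon)}$. In fact $\tilde g(0)-1$ is exactly your periodization error at $x=0$, so it is already $\le\epsilon/3$.

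(ii) There is no extra logarithm in the degree. Your tail bound $\frac{T}{\sigma N}e^{-2\pi^2\sigma^2N^2/T^2}$ depends on $N$ only through $s=\sigma N/T$. Choosing $s=c\sqrt{\log(1/\epsilon)}$ makes the prefactor $1/s=O(1)$ and the tail $\le\epsilon/3$ (for $\epsilon$ bounded away from $1$), uniformly in $T/\sigma$. The factor $\log(T/(\sigma\epsilon))$ appeared because you treated the prefactor as if $N$ did not scale with $T/\sigma$. The paper avoids the prefactor entirely via $\frac{2\sqrt{2\pi}\sigma}{T}\int_N^\infty e^{-2\pi^2\sigma^2y^2/T^2}\,dy=\erfc(\sqrt2\pi\sigma N/T)\le e^{-2\pi^2\sigma^2N^2/T^2}$.

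So both of your proposed remedies should be dropped. The reduction to $\sigma\ge\epsilon$ has no justification: for $\sigma<\epsilon$ the target still has height $1$ and width $\sigma$, so nothing simplifies. Accepting an extra $\sqrt{\log(1/\sigma)}$ factor would prove a strictly weaker statement than the lemma.
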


\begin{proof}[Proof of Lemma \ref{lem:trigonometric_approx_gaussian_func}]
    We claim that for every $\sigma, \epsilon \in (0, 1)$, there exist $T=\max(2\pi, \Theta(\sigma \sqrt{\mylog{1/\epsilon}}))$, $N=\mathcal{O}(\sigma^{-1} T \sqrt{\mylog{1/\epsilon}})$ and $a_0, a_1, \dots, a_N \in \R^+$ such that $\sum_{j=0}^N a_j \le 1$ and  
   \begin{align}
   \abs{\sum_{j=0}^N a_j \cos(2\pi j x/T) - e^{-x^2/(2 \sigma^2)}} \le \epsilon,~&~\forall x \in [-\pi, \pi].       
   \end{align}

The proof of the above claim is inspired by the Nyquist–Shannon sampling theorem. Specifically, let $f(x) \defeq \sum_{k=-\infty}^\infty \delta(x-k T)$ be the Dirac comb with period $T$. It has Fourier transform $\hat{f}(\xi) = \frac{1}{T}\sum_{n=-\infty}^\infty \delta(\xi - n/T)$. Meanwhile, let $g(x) \defeq e^{-x^2/(2\sigma^2)}$ be Gaussian with mean $0$ and variance $\sigma^2$. It has Fourier transform $\hat{g}(\xi)=\sqrt{2\pi}\sigma e^{-2\pi^2 \sigma^2 \xi^2}$. Convolving $f$ and $g$ yields
\begin{align}
(f * g)(x) = \sum_{k=-\infty}^\infty g(x-kT).
\label{eq:conv1}
\end{align}
On the other hand, the Fourier transform of $f*g$ is 
\begin{align}
\widehat{f*g}(\xi) = \hat{f}(\xi) \hat{g}(\xi)=  \frac{1}{T}\sum_{n=-\infty}^\infty  \hat{g}(n/T) \delta(\xi - n/T).
\end{align}
Applying inverse Fourier transform to both sides of this equation leads to
\begin{align}
(f * g)(x) = \frac{1}{T}\sum_{n=-\infty}^\infty  \hat{g}(n/T) e^{\i 2\pi x n/T}. 
\label{eq:conv2}
\end{align}
Comparison of Eqs.~\eqref{eq:conv1} and \eqref{eq:conv2} indicates that
\begin{align}
    \sum_{k=-\infty}^\infty g(x-kT) =  \frac{1}{T}\sum_{n=-\infty}^\infty  \hat{g}(n/T) e^{\i 2\pi x n/T}. 
    \label{eq:conv3}
\end{align}
Now we claim that for some $T=\max(2\pi, \Theta(\sigma \sqrt{\mylog{1/\epsilon}}))$ and $N={\mathcal O}(\sigma^{-1} T \sqrt{\mylog{1/\epsilon}})$, 
\begin{align}
     \sum_{k=-\infty}^{-1} g(x-kT) + \sum_{k=1}^\infty g(x-kT)  \le \frac{\epsilon}{6}, ~&~\forall x \in [-\pi, \pi],
\label{eq:trunc_err1}
\end{align}
and
\begin{align}
    \abs{ \frac{1}{T}\sum_{n=-\infty}^{-N-1}  \hat{g}(n/T) e^{\i 2\pi x n/T}
+    \frac{1}{T}\sum_{n=N+1}^\infty  \hat{g}(n/T) e^{\i 2\pi x n/T}}
\le  \frac{\epsilon}{6},~&~\forall x \in \R.
\label{eq:trunc_err2}
\end{align}
If these claims are true, then combining them and Eq. \eqref{eq:conv3} yields
\begin{align}
\abs{ g(x) - \frac{1}{T}\sum_{n=-N}^N  \hat{g}(n/T) e^{\i 2\pi x n/T}}
=\abs{ g(x) - \hat{g}(0) - \frac{2}{T}\sum_{n=1}^N  \hat{g}(n/T) \cos(2\pi x n/T)} \le \frac{\epsilon}{3}, ~&~\forall x \in [-\pi, \pi].
\end{align}
This also implies that 
\begin{align}
\hat{g}(0) + \frac{2}{T}\sum_{n=1}^N  \hat{g}(n/T) \le 1+\frac{\epsilon}{3}.    
\end{align}
Now let $a_0=(1-\epsilon/3)\hat{g}(0)$ and $a_j=2T^{-1}(1-\epsilon/3)\hat{g}(j/T)$ for $j=1,2,\dots,N$. 
Then we have $a_0, a_1, \dots, a_N >0$ and
\begin{align}
    \sum_{j=0}^N a_j = (1-\epsilon/3) \lrb{\hat{g}(0) + \frac{2}{T}\sum_{n=1}^N  \hat{g}(n/T)} \le (1-\epsilon/3)(1+\epsilon/3) < 1,
\end{align}
and for any $x \in [-\pi, \pi]$,
\begin{align}
       \abs{\sum_{j=0}^N a_j \cos(2\pi j x/T) - g(x)}
       &\le
       \abs{\sum_{j=0}^N a_j \cos(2\pi j x/T) - \hat{g}(0) - \frac{2}{T}\sum_{n=1}^N  \hat{g}(n/T) \cos(2\pi x n/T)} \nonumber\\
       &\quad + \abs{\hat{g}(0) + \frac{2}{T}\sum_{n=1}^N  \hat{g}(n/T) \cos(2\pi x n/T) - g(x)}\\
        &\le \frac{\epsilon}{3} \cdot \abs{ \hat{g}(0) + \frac{2}{T}\sum_{n=1}^N  \hat{g}(n/T) \cos(2\pi x n/T)} + \frac{\epsilon}{3}\\
       &\le \frac{\epsilon}{3} \lrb{\hat{g}(0) + \frac{2}{T}\sum_{n=1}^N  \hat{g}(n/T) } + \frac{\epsilon}{3} \\
       &\le \frac{\epsilon}{3} \lrb{1+\frac{\epsilon}{3}} + \frac{\epsilon}{3}\\
       &\le \epsilon,
\end{align}
as desired.

It remains to prove Eqs.~\eqref{eq:trunc_err1} and \eqref{eq:trunc_err2}. To prove the former, we impose the constraint $T \ge 2\pi$. This means that $|x|\le T/2$ for all $x \in [-\pi, \pi]$. Then it follows that
\begin{align}
    \sum_{k=-\infty}^{-1} g(x-kT) + \sum_{k=1}^\infty g(x-kT)  
   & \le 2 \sum_{l=0}^{\infty} g((l+1/2)T)\\
   & = 2 \sum_{l=0}^{\infty} \myexp{-(l+1/2)^2T^2/(2 \sigma^2)} \\
   & = 2\myexp{-T^2/(8 \sigma^2)} \sum_{l=0}^{\infty} \myexp{-(l^2+l)T^2/(2 \sigma^2)} \\
   & \le 2\myexp{-T^2/(8 \sigma^2)} \sum_{l=0}^{\infty} \myexp{-l T^2/\sigma^2} \\
   & = \frac{2\myexp{-T^2/(8 \sigma^2)} }{1 - \myexp{-T^2/\sigma^2} }.
   \label{eq:trunc_err1_bound}
\end{align}
By picking some $T=\Theta(\sigma \sqrt{\mylog{1/\epsilon}})$, we can ensure that the RHS of Eq.~\eqref{eq:trunc_err1_bound} is at most ${\epsilon}/{6}$. Overall, Eqs.~\eqref{eq:trunc_err1} holds as long as we pick some $T=\max(2\pi, \Theta(\sigma \sqrt{\mylog{1/\epsilon}}))$. 

To prove Eq. \eqref{eq:trunc_err2}, we note that
\begin{align}
    \abs{ \frac{1}{T}\sum_{n=-\infty}^{-N-1}  \hat{g}(n/T) e^{\i 2\pi x n/T}
+    \frac{1}{T}\sum_{n=N+1}^\infty  \hat{g}(n/T) e^{\i 2\pi x n/T}}
&\le      \frac{1}{T}\sum_{n=-\infty}^{-N-1}  \hat{g}(n/T)
+    \frac{1}{T}\sum_{n=N+1}^\infty  \hat{g}(n/T) \\
&= \frac{2}{T}\sum_{n=N+1}^\infty  \hat{g}(n/T) \\
&= \frac{2\sqrt{2\pi} \sigma}{T} \sum_{n=N+1}^\infty \myexp{-2 \pi^2 \sigma^2 n^2/T^2} \\
&\le \frac{2\sqrt{2\pi} \sigma}{T} \int_{N}^\infty \myexp{-2 \pi^2 \sigma^2 y^2/T^2} dy \\
&= \frac{2}{\sqrt{\pi}} \int_{\sqrt{2}\pi \sigma N/T}^\infty \myexp{-z^2} dz \\
&= \erfc(\sqrt{2}\pi \sigma N/T) \\
&\le \myexp{-2 \pi^2 \sigma^2 N^2 / T^2}. 
\end{align}
By picking some $N=\mathcal{O}(\sigma^{-1} T \sqrt{\mylog{1/\epsilon}})$, we can guarantee that
$\myexp{-2 \pi^2 \sigma^2 N^2 / T^2} \le \epsilon/6$ and hence Eq. \eqref{eq:trunc_err2} holds for all $x \in \R$.

\end{proof}

We can now bound the $n$-th derivative of this proxy distribution:

\begin{lemma}[Derivative bounds]

Provided there exists an efficiently-computable trigonometric polynomial $G_{\sigma;\epsilon}(x) = \sum_{j=0}^N a_j \cos(2\pi j x/T)$ of degree $N=\mathcal{O}(\sigma^{-1} T \sqrt{\mylog{1/\epsilon}})$, where $T=\max(2\pi, \Theta(\sigma \sqrt{\mylog{1/\epsilon}}))$, and such that 
\begin{itemize}
    \item $\sum^{N}_{j=0} |a_j| \le 1$, for all $x \in \R$;
    \item $|G_{\sigma;\epsilon}(x) - e^{-x^2/(2\sigma^2)}| \le \epsilon$, for all $x \in [-\pi, \pi]$
\end{itemize}

For the function $G_{\sigma; \epsilon}(x)=\sum^{N}_{j=0} a_j \mycos{2\pi j x/T} $ that is $\epsilon$ close to $g(x)=e^{-x^2/(2\sigma^2)}$ has the following derivative bounds
\begin{align}
    \left|\frac{{\rm d}^n G_{\sigma;\epsilon}}{{\rm d} x^n}\right| \leq \left(\Lambda\right)^{n+1}
\end{align}
for
\begin{align}
    \Lambda = \frac{2\pi N}{T}= \mathcal{O} \left(\frac{\sqrt{log(1/\epsilon)}}{\sigma}\right),
\end{align}
provided $2\pi N /T \geq 1$.
\end{lemma}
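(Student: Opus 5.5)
The plan is to differentiate the trigonometric polynomial term by term and bound each term with the triangle inequality, using the $\ell_1$ bound on the coefficients from \Cref{lem:trigonometric_approx_gaussian_func}. Since $G_{\sigma;\epsilon}$ is a finite sum, differentiation commutes with summation. For each $j$ we have
\begin{align}
\frac{{\rm d}^n}{{\rm d}x^n}\cos\!\left(\frac{2\pi j x}{T}\right) = \left(\frac{2\pi j}{T}\right)^{n}\cos\!\left(\frac{2\pi j x}{T} + \frac{n\pi}{2}\right),
\end{align}
so the magnitude is at most $(2\pi j/T)^n \le (2\pi N/T)^n$ because $0\le j\le N$.

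Summing, I will obtain
\begin{align}
\left|\frac{{\rm d}^n G_{\sigma;\epsilon}}{{\rm d}x^n}(x)\right| \le \sum_{j=0}^N |a_j| \left(\frac{2\pi j}{T}\right)^n \le \Lambda^n \sum_{j=0}^N |a_j| \le \Lambda^n,
\end{align}
with $\Lambda = 2\pi N/T$. Here the coefficient bound $\sum_j|a_j|\le 1$ is taken from the hypotheses. The case $n=0$ reduces directly to $|G_{\sigma;\epsilon}|\le \sum_j|a_j|\le 1$.

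To pass from $\Lambda^n$ to the required form $\Lambda^{n+1}$, I will use the hypothesis $2\pi N/T\ge 1$, which gives $\Lambda^n\le\Lambda^{n+1}$. This is the only place that assumption enters, and it is there precisely to match the format required by \Cref{lem:SE_multidim}.

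Finally, I will read off the asymptotics. Substituting $N=\mathcal{O}(\sigma^{-1}T\sqrt{\log(1/\epsilon)})$ gives $\Lambda = 2\pi N/T = \mathcal{O}(\sqrt{\log(1/\epsilon)}/\sigma)$, and the factor $T$ cancels regardless of which branch of the $\max$ defines it.

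I expect no real obstacle. The only points needing care are tracking that the bound holds for all $x\in\mathbb{R}$, not just $[-\pi,\pi]$, which is automatic since no approximation property is used, and noting that the $(n/e)^{n/2}$ growth of the true Gaussian derivatives is avoided because the frequencies are cut off at $N$.
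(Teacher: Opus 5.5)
Your proposal is correct and follows essentially the same route as the paper: differentiate the finite cosine sum term by term, bound each frequency factor by $(2\pi N/T)^n$, use $\sum_j |a_j|\le 1$, and then upgrade $\Lambda^n$ to $\Lambda^{n+1}$ via $2\pi N/T\ge 1$. Your compact phase-shift formula $\cos(2\pi j x/T + n\pi/2)$ replaces the paper's separate odd/even sine/cosine expressions. You also make the use of the $\ell_1$ coefficient bound and the $n=0$ case explicit, where the paper leaves them implicit.
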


\begin{proof}
    If we have the function form the previous theorem
    \begin{align}
        f(x) =\sum^{N}_{j=0} a_j \mycos{2\pi j x /T}
    \end{align}
    
    \begin{align}
        \frac{{\rm d} G_{\sigma; \epsilon}}{ {\rm d} x} = \sum_j a_j \frac{2\pi j} {T} (-1) \mysin{2\pi j x/T} 
    \end{align}
    \begin{align}
        \frac{{\rm d}^2 G_{\sigma; \epsilon}}{ {\rm d} x^2} = \sum_j a_j \left(\frac{2\pi j} {T}\right)^2 (-1) \mycos{2\pi j x/T} 
    \end{align}
    By induction:
   \begin{align}
        \frac{{\rm d}^n G_{\sigma; \epsilon}}{ {\rm d} x^n} = \sum_j a_j \left(\frac{2\pi j} {T}\right)^n (-1)^{(n+1)/2} \mysin{2\pi j x/T} \quad \text{for odd } n
    \end{align}
    and
     \begin{align}
        \frac{{\rm d}^n G_{\sigma; \epsilon}}{ {\rm d} x^n} = \sum_j a_j \left(\frac{2\pi j} {T}\right)^n (-1)^{n/2} \mycos{2\pi j x/T} \quad \text{for even } n.
    \end{align}
    Thus,
    \begin{align}
        \left|  \frac{{\rm d}^n G_{\sigma; \epsilon}}{ {\rm d} x^n}\right| \leq \left(\frac{2\pi N}{T}\right)^n,
    \end{align}
    and for $2\pi N/T \geq 1$, we have
    \begin{align}
        \left|  \frac{{\rm d}^n G_{\sigma; \epsilon}}{ {\rm d} x^n}\right| \leq \Lambda^{n+1} ,
    \end{align}
    where
    \begin{align}
        \Lambda = \frac{2\pi N}{T} = \mathcal{O} \left(\frac{\sqrt{\log(1/\epsilon)}}{\sigma}\right)
    \end{align}
    \end{proof}

\subsection{Generalization to D-dimensions}

\begin{lemma}[Trigonometric Approximation of  the $D$-dimensional Gaussian Distribution]

For every $\sigma \in (0, 1)$ and sufficiently small $\epsilon > 0$, there exists an efficiently-computable $D$-variable trigonometric polynomial 
\begin{align*}
G_{\sigma;\epsilon;D}(\vec{x}) &= \prod^{D}_{i=1} G_{\sigma;\varepsilon}(x_i)\cr
&=\prod^{D}_{i=1}\left(\sum_{j=0}^N a_j \cos(2\pi j x_i/T)\right)
\end{align*}
of degree $N=\mathcal{O}(\sigma^{-1} T \sqrt{\mylog{1/\varepsilon}})$ on each of the variables, where $T=\max(2\pi, \Theta(\sigma \sqrt{\mylog{1/\varepsilon}}))$, for some
\begin{align*}
    \varepsilon = \mathcal{O}(\epsilon/D)
\end{align*}
such that 
\begin{itemize}
    \item $\sum^{N}_{j=0} |a_j| \le 1$, for all $x \in \R$;
    \item $|G_{\sigma;\epsilon;D}(\vec{x}) - e^{-|\vec{x}|^2/(2\sigma^2)}| \le \epsilon$, for all $x_i \in [-\pi, \pi]$.
\end{itemize}
\label{lem:trigonometric_approx_gaussian_func}
\end{lemma}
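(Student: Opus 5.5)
The plan is to reduce everything to the one-dimensional Trigonometric Approximation lemma proved above, using the factorization of the isotropic Gaussian,
\[
e^{-|\vec{x}|^2/(2\sigma^2)}=\prod_{i=1}^D g(x_i),\qquad g(x)=e^{-x^2/(2\sigma^2)}.
\]
We apply the one-dimensional lemma with accuracy $\varepsilon$ in place of $\epsilon$. This gives the coefficients $a_j\ge 0$, the period $T$ and the degree $N$ exactly as in the statement. The condition $\sum_j|a_j|\le 1$ is inherited verbatim. The product $G_{\sigma;\epsilon;D}$ is a trigonometric polynomial of degree at most $N$ in each variable, and it is efficiently computable because each factor is. So only the uniform error bound on $[-\pi,\pi]^D$ remains.

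For the error we use the standard telescoping identity for products. Write $G_i:=G_{\sigma;\varepsilon}(x_i)$ and $g_i:=g(x_i)$. Then
\[
\prod_{i=1}^D G_i-\prod_{i=1}^D g_i=\sum_{k=1}^D\Bigl(\prod_{i<k} G_i\Bigr)(G_k-g_k)\Bigl(\prod_{i>k} g_i\Bigr).
\]
Both kinds of factors are bounded by $1$ in absolute value. For the $G_i$ this follows from $|G_{\sigma;\varepsilon}(x)|\le\sum_j|a_j|\le1$ for all real $x$. For the $g_i$ it holds since $0<g\le1$. Each middle factor is at most $\varepsilon$ for $x_k\in[-\pi,\pi]$. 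Hence the total error is at most $D\varepsilon$, and choosing $\varepsilon=\epsilon/D$ gives the claimed bound $\epsilon$. This choice also fixes $N=\mathcal{O}(\sigma^{-1}T\sqrt{\log(D/\epsilon)})$ and $T=\max(2\pi,\Theta(\sigma\sqrt{\log(D/\epsilon)}))$, matching the statement with $\varepsilon=\mathcal{O}(\epsilon/D)$.

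The step I expect to need the most care is the uniform bound $|G_i|\le 1$, since the whole telescoping argument depends on it. If we only had $|G_i|\le 1+\varepsilon$, the bound would pick up a factor $(1+\varepsilon)^{D}$. With $\varepsilon=\epsilon/D$ this factor is still $\le e^{\epsilon}=\mathcal{O}(1)$, so the estimate would survive anyway for sufficiently small $\epsilon$; this is where the hypothesis ``sufficiently small $\epsilon$'' enters. In the argument itself I will invoke the exact bound $\sum_j a_j\le 1$ from the one-dimensional construction, which makes this factor equal to one.

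Finally, as a remark for later use in the $D$-dimensional derivative bounds, I will note what the product structure gives for mixed partial derivatives. The derivative $\partial^\alpha$ factorizes across coordinates, so the one-dimensional derivative lemma yields
\[
|\partial^\alpha G_{\sigma;\epsilon;D}|\le\prod_i\Lambda^{\alpha_i}\le\Lambda^{|\alpha|+1},
\]
with $\Lambda=2\pi N/T\ge1$.
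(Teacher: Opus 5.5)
Your proof is correct and follows the route the paper intends: the paper states this lemma without a written proof, and its formulation (a product of one-dimensional approximants $G_{\sigma;\varepsilon}$ with per-coordinate accuracy $\varepsilon=\mathcal{O}(\epsilon/D)$) encodes exactly your construction. Your telescoping identity, combined with $|G_{\sigma;\varepsilon}(x)|\le\sum_j|a_j|\le 1$ and $0<g\le 1$, supplies the omitted error estimate $D\varepsilon\le\epsilon$; the only smallness actually needed is $\epsilon/D<1$, so that the one-dimensional lemma applies.
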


Finally, the derivative bounds for this $D$-dimensional proxy distribution is:

\begin{lemma}[Multivariate Derivative Bound]

\begin{align}
    \left\|\partial^{\alpha} G_{\sigma;\epsilon;D}\right\|_{\infty} \leq \Lambda^{|\alpha|+1},
\end{align}
where
    \begin{align}
        \Lambda = \frac{2\pi N}{T} = \mathcal{O} \left(\frac{\sqrt{\log(D/\epsilon)}}{\sigma}\right),
    \end{align}
provided that $2\pi N / T \geq 1$.
\end{lemma}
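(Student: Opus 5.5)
Because $G_{\sigma;\epsilon;D}$ is a product of one-variable factors, mixed partials split into products of one-variable derivatives. We therefore reduce to the one-dimensional derivative-bound lemma and multiply the bounds. The only quantity to track is the exponent, and the parameter $\varepsilon=\mathcal{O}(\epsilon/D)$ from the $D$-dimensional trigonometric approximation lemma fixes the form of $\Lambda$.

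For a multi-index $\alpha=(\alpha_1,\dots,\alpha_D)$, differentiate factor by factor:
\begin{align*}
\partial^{\alpha} G_{\sigma;\epsilon;D}(\vec{x})=\prod_{i=1}^{D}\frac{{\rm d}^{\alpha_i}G_{\sigma;\varepsilon}}{{\rm d}x_i^{\alpha_i}}(x_i).
\end{align*}
In the proof of the one-dimensional lemma, the bound actually established before the final weakening is $\bigl|G_{\sigma;\varepsilon}^{(n)}(x)\bigr|\le\sum_j |a_j|(2\pi j/T)^n\le (2\pi N/T)^n$ for every $n\ge 0$ and every $x\in\mathbb{R}$. This uses $\sum_j|a_j|\le 1$, which also covers $n=0$. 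We will use this sharper form $\Lambda^{n}$ rather than $\Lambda^{n+1}$. Multiplying over $i$ gives
\begin{align*}
\bigl|\partial^{\alpha}G_{\sigma;\epsilon;D}(\vec{x})\bigr|\le\prod_{i=1}^{D}\Lambda^{\alpha_i}=\Lambda^{|\alpha|}\le\Lambda^{|\alpha|+1},
\end{align*}
where the last inequality uses the hypothesis $\Lambda=2\pi N/T\ge 1$. Taking the supremum over $\vec{x}$ gives the $\|\cdot\|_\infty$ statement.

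We must not multiply the stated bounds $\Lambda^{\alpha_i+1}$ directly. That would give $\Lambda^{|\alpha|+D}$, which is the wrong exponent. Applying $|a_j|$-summability per factor avoids the extra $D-1$ powers, and this is the key point of the proof.

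It remains to identify the size of $\Lambda$. By the $D$-dimensional trigonometric approximation lemma, each factor is built with accuracy $\varepsilon=\mathcal{O}(\epsilon/D)$. This $\varepsilon$ is what makes the product accurate: since each factor and each one-dimensional Gaussian is bounded by $1$ on $[-\pi,\pi]$, telescoping gives a total error of at most $D\varepsilon$. Each factor has $N=\mathcal{O}(\sigma^{-1}T\sqrt{\log(1/\varepsilon)})$. Hence
\begin{align*}
\Lambda=\frac{2\pi N}{T}=\mathcal{O}\!\left(\frac{\sqrt{\log(1/\varepsilon)}}{\sigma}\right)=\mathcal{O}\!\left(\frac{\sqrt{\log(D/\epsilon)}}{\sigma}\right).
\end{align*}
The factor $T$ cancels regardless of which branch of $T=\max(2\pi,\Theta(\sigma\sqrt{\log(1/\varepsilon)}))$ is active.
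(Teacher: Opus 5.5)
Your argument is correct and is essentially the argument the paper has in mind: the paper states this lemma without a separate proof, relying on the product structure of $G_{\sigma;\epsilon;D}$ and the unweakened one-dimensional bound $(2\pi N/T)^{n}$ from the preceding lemma's proof. Your use of $\sum_j|a_j|\le 1$ for the $\alpha_i=0$ factors and your warning against multiplying the weakened bounds $\Lambda^{\alpha_i+1}$ into $\Lambda^{|\alpha|+D}$ are exactly what make it work; the one small imprecision is that $\log(1/\varepsilon)=\mathcal{O}(\log(D/\epsilon))$ needs $\varepsilon$ chosen as $\Theta(\epsilon/D)$ (e.g.\ $\varepsilon=\epsilon/D$), since the upper bound $\varepsilon=\mathcal{O}(\epsilon/D)$ alone does not bound $\log(1/\varepsilon)$ from above.
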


We now write a similar statement in terms of an almost-normalized version of the proxy distribution $\hat{G}_{t;\epsilon;D}$, where
\begin{align}
    \hat{G}_{t;\epsilon;D} \left(\vec{x}\right) = \frac{1}{(4\pi \kappa t)^{D/2}} G_{\sqrt{2\kappa t};(4\pi \kappa t)^{D/2}\epsilon;D} \left(\vec{x}\right).
\end{align}

That is, the corresponding standard deviation in terms of $\kappa$ and $t$ is $\sigma=\sqrt{2\kappa t}$. With this, the multivariate derivative bound becomes:

\begin{lemma}[Multivariate Derivative Bound]

\begin{align}
    \left\|\partial^{\alpha} \hat{G}_{t;\epsilon;D}\right\|_{\infty} \leq {\Lambda}^{|\alpha|+1},
\end{align}
where
    \begin{align}
        \Lambda = \frac{4\pi N}{T} =  \mathcal{O} \left(\frac{\sqrt{\mylog{\frac{D}{\epsilon}}}}{\sqrt{2 \kappa t}} \right),
\end{align}
provided that $2\pi N / T \geq 1$ and $|\alpha|\geq  {\frac{D}{2\log 2}}\log\left( \frac{1}{4\pi \kappa t}\right)-1$.
\end{lemma}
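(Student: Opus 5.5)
The plan is to reduce this to the preceding multivariate derivative bound for $G_{\sigma;\epsilon;D}$, applied with $\sigma=\sqrt{2\kappa t}$ and inner accuracy $\epsilon' = (4\pi\kappa t)^{D/2}\epsilon$. Then we track how the prefactor $(4\pi\kappa t)^{-D/2}$ interacts with the bound $\Lambda_0^{|\alpha|+1}$. By definition,
\begin{align*}
\partial^\alpha \hat G_{t;\epsilon;D} = (4\pi\kappa t)^{-D/2}\,\partial^\alpha G_{\sqrt{2\kappa t};\epsilon';D}.
\end{align*}
The previous lemma gives $\|\partial^\alpha G_{\sqrt{2\kappa t};\epsilon';D}\|_\infty \le \Lambda_0^{|\alpha|+1}$ with $\Lambda_0 = 2\pi N/T$, under $\Lambda_0\ge 1$. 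That lemma rests on the product structure: each factor has $\sum_j|a_j|\le 1$, so a mixed derivative $\partial^\alpha$ factorizes over coordinates and is bounded by $\prod_i \Lambda_0^{\alpha_i}\le \Lambda_0^{|\alpha|}\le\Lambda_0^{|\alpha|+1}$. It also gives $\Lambda_0 = \mathcal{O}(\sqrt{\log(D/\epsilon')}/\sqrt{2\kappa t})$. Since $\log(1/\epsilon') = \log(1/\epsilon) + \tfrac{D}{2}\log(1/(4\pi\kappa t))$, this is still of the stated order up to the $\kappa t$-dependent additive term inside the logarithm. I will treat that term as absorbed into the $\mathcal{O}$, or note it explicitly.

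The key step is absorbing the prefactor. Set $\Lambda = 2\Lambda_0 = 4\pi N/T$, which explains the factor $4\pi$ in the statement. It then suffices to show
\begin{align*}
(4\pi\kappa t)^{-D/2}\,\Lambda_0^{|\alpha|+1} \le (2\Lambda_0)^{|\alpha|+1},
\end{align*}
which is equivalent to $(4\pi\kappa t)^{-D/2}\le 2^{|\alpha|+1}$. Taking logarithms, this reads $|\alpha|+1 \ge \frac{D}{2\log 2}\log\frac{1}{4\pi\kappa t}$, exactly the hypothesis on $|\alpha|$. When $4\pi\kappa t\ge 1$, the prefactor is at most $1$ and the bound holds trivially. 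Combining these gives $\|\partial^\alpha\hat G_{t;\epsilon;D}\|_\infty\le\Lambda^{|\alpha|+1}$.

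The main obstacle is making sure the rescaled accuracy $\epsilon'$ keeps the degree bound and the asymptotic form of $\Lambda$ valid. The earlier lemmas assume $\epsilon'\in(0,1)$ and $\sigma\in(0,1)$. So I would first check that $\epsilon'<1$, which holds if $\epsilon<(4\pi\kappa t)^{-D/2}$. I would then verify that $T$ and $N$ chosen with $\epsilon'$ keep $2\pi N/T\ge 1$. I would also remark that the $\log(1/(4\pi\kappa t))$ contribution only enlarges $\Lambda$ by a lower-order additive logarithmic term. The approximation guarantee of $\hat G$ to the normalized heat kernel, with error $\epsilon$, follows directly by dividing the error $\epsilon'$ by the same prefactor. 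That guarantee is not needed for the derivative bound itself.
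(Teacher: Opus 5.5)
Your proposal is correct and follows the paper's proof essentially step for step. Both apply the preceding bound to $G_{\sqrt{2\kappa t};(4\pi\kappa t)^{D/2}\epsilon;D}$ and then absorb the prefactor $(4\pi\kappa t)^{-D/2}$ into $2^{|\alpha|+1}$ using exactly the hypothesis on $|\alpha|$, which doubles $\Lambda$ to $4\pi N/T$. Your remark that the rescaled accuracy adds $\tfrac{D}{2}\log\frac{1}{4\pi\kappa t}$ inside the logarithm is a point the paper silently drops; note that this term is lower order only for fixed $D$ and $\kappa t$ as $\epsilon\to 0$, so stating it explicitly rather than absorbing it is the safer choice.
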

\begin{proof}
We have that
\begin{align}
    \left\|\partial^{\alpha} \hat{G}_{t;\epsilon;D}\right\|_{\infty} \leq \frac{1}{(4\pi \kappa t)^{D/2}}{\Lambda}^{|\alpha|+1},
\end{align}
where $\sigma = \sqrt{2 \kappa t}$, and
    \begin{align}
        \Lambda = \frac{2\pi N}{T} = \mathcal{O} \left(\frac{\sqrt{\log(D/\epsilon)}}{\sqrt{2\kappa t}}\right),
    \end{align}
thus, we can bound the derivatives with:
\begin{align}
\left\|\partial^{\alpha} \hat{G}_{\sigma;\epsilon;D}\right\|_{\infty} \leq \left(\frac{1}{(4\pi \kappa t)^{\frac{D}{2(|\alpha|+1)}}}{\Lambda}\right)^{|\alpha|+1}.
\end{align}

We now impose the constraint:
\begin{align}
    \frac{1}{(4\pi \kappa t)^{\frac{D}{2(|\alpha|+1)}}} \leq 2.
\end{align}
Thus, the following bound:
\begin{align}
\left\|\partial^{\alpha} \hat{G}_{\sigma;\epsilon;D}\right\|_{\infty} \leq \left(2{\Lambda}\right)^{|\alpha|+1},
\end{align}
holds when
\begin{align}
 (|\alpha|+1) \geq  {\frac{D}{2\log 2}}\log\left( \frac{1}{(4\pi \kappa t)}\right).
\end{align}

\end{proof}

\section{Heat-type equations in pricing derivatives}
This section will delve into transforming PDEs associated with pricing derivatives to a heat-type equation. We use the term ``heat-type'' equation to describe a PDE of the form described in equation~\ref{eq:fkacPDE} where $b$ is a time-dependent vector and $A$ is a time-dependent diagonal matrix. In other words, the coefficients of the second-order cross-term derivatives are $0$.

To begin, a derivative is a financial contract between two parties whose value is derived from an underlying asset. The value is defined by a payoff function $g(X_T)$ where $T$ is the time to maturity and $X_t$ is the stochastic process governing the price of the asset. Derivatives exist in different asset classes, some of which we will encounter shortly, and they all share a common valuation principle: the present value of a derivative equals the expected discounted value of its payoff under a probability measure in which the appropriately scaled underlying stochastic process is a martingale. This allows us to employ Feynman-Kac formula to generate a PDE solving for the value of the derivative. Having given a gentle introduction to derivatives, we now turn to individual asset classes where PDE pricing of certain types of derivatives are amenable to our quantum techniques. We will rigorously demonstrate the transformation of the PDEs into heat-type equation. We note that such transformations are already demonstrated in literature; however, for the benefit of the readers unfamiliar with these methods, we will show the transformation steps for derivatives where our quantum techniques can be applied.

\subsection{Equity derivatives}
For an equity asset, the price $S_t$ is assumed to follow a stochastic differential equation (SDE) of the form
\[
dS_t = \mu(t)\,S_t\,dt + \sigma(t)\,S_t\,dW^{\mathbb P}_t,
\]
where $W^{\mathbb P}_t$ is a Wiener process under the so-called real-world measure $\mathbb P$, $\mu(t,S_t)$ is the instantaneous drift or expected rate of return, and $\sigma(t,S_t)$ is the volatility. Both coefficients may vary deterministically with time but are assumed to be independenf of the current price of the underlying asset.

If we consider a derivative with maturity $T$ and payoff function $g(S_T)$, its price at an earlier time $t$ is the discounted expected value of that payoff under a suitable measure. To define this measure, we introduce the \emph{money-market account}
\[
B_t = \exp\left(\int_0^t r(u)\,du\right),
\]
which accumulates continuously at the instantaneous risk-free rate $r(t)$. This asset plays a special role: in an arbitrage-free market, the discounted price process $S_t / B_t$ must be a martingale under some probability measure $\mathbb Q$. This measure is called the \emph{risk-neutral measure}.

To construct $\mathbb Q$, start from the SDE under the real-world measure $\mathbb P$, and use Girsanov's theorem to define a new Wiener process under $\mathbb Q$:
\[
dW_t^{\mathbb Q} = dW_t^{\mathbb P} + \theta_t\,dt,
\]
where the process $\theta_t = \tfrac{\mu(t)-r(t)}{\sigma(t)}$. Substituting this into the SDE gives the risk-neutral dynamics
\begin{equation}\label{eq:riskneutralBS-SDE}
dS_t = r(t)\,S_t\,dt + \sigma(t)\,S_t\,dW_t^{\mathbb Q}.
\end{equation}
Under this measure, the discounted process $\tilde S_t = S_t/B_t$ satisfies $d\tilde S_t = \tilde S_t\,\sigma(t)\,dW_t^{\mathbb Q}$ and therefore has zero drift. The process $\tilde S_t$ is thus a martingale under $\mathbb Q$.

Let $V(t,S)$ denote the value at time $t$ of a derivative whose payoff at maturity is $g(S_T)$. Because the discounted derivative value $\exp\left(-\int_0^t r(u)\,du\right)V(t,S_t)$ must itself be a martingale, from Theorem~\ref{thm:fkac}, we have the following PDE
\begin{equation}\label{eq:BSPDE}
\frac{\partial V}{\partial t}
+ r(t)\,S\,\frac{\partial V}{\partial S}
+ \tfrac{1}{2}\,\sigma^2(t)\,S^2\,\frac{\partial^2 V}{\partial S^2}
- r(t)\,V = 0
\quad \text{such that} \quad
V(T,S) = g(S).
\end{equation}

Now let us turn our attention to a few type of options in the equity market. We assume that the volatility of the underlying asset is constant throughout the life of the option. We begin with European options, which are arguably the simplest and most fundamental options. 

A \textbf{European call option} on an underlying asset with price $S_t$ and strike $K$ grants its holder the right, but not the obligation, to purchase the asset for $K$ at the maturity date $T$. Its payoff is therefore
\[
g_{\mathrm{call}}(S_T) = (S_T-K)^+ = \max(S_T - K,\,0).
\]
Similarly, a \textbf{European put option} provides the right to sell the asset for $K$ at maturity, yielding payoff
\[
g_{\mathrm{put}}(S_T) = (K - S_T)^+ = \max(K - S_T,\,0).
\]
In both cases, the option value at an earlier time $t$ is the risk-neutral expectation of the discounted payoff,
\[
V_{\mathrm{call/put}}(t,S_t) 
= \mathbb E^{\mathbb Q}\!\left[\exp\left(-\int_t^T r(u)\,du\right)\,g_{\mathrm{call/put}}(S_T)\,\big|\,S_t\right].
\]
This expectation satisfies the fundamental pricing PDE introduced earlier. While we consider payoff function $g(.)$ only for call and put options, one can have other functions as well as long as their growth is linearly bounded.

\subsubsection{Bermudan options}

Now we move to the Bermudan option which also admits a PDE that can be transformed to a heat equation, and hence, can be solved using our quantum algorithm. A Bermudan option is an option that has a series of discrete exercise dates
\[
0 = t_0 < t_1 < \dots < t_{n-1} < t_n = T
\]
with a payoff function $g(S)$ which could be $(S-K)^+$ for a call option and $(K-S)^+$ for a put option. Let's denote the set of exercise dates as $\mathcal{T} := \{t_0,t_1, \dots,t_n\}$ and the set of exercise dates starting from $t_k$ as $\mathcal{T}_k := \{t_k, \dots,t_n\}$. The pricing of the Bermudan option is an optimal stopping point problem, where the goal is to find the exercise date among $\mathcal T$ which maximizes the payoff function. The price of the Bermudan option at $t_k$ is
\[
V_k = \sup_{\tau\in\mathcal{T}_k}
\mathbb{E}^{\mathbb{Q}}\!\Big[
 \exp\left({-\int_{t_k}^{\tau} r(u)\,du}\right)\,g(S_\tau)
 \,\Big|\,
 S_{t_k}=S
\Big].
\]
Note that we discount back the price from the optimal exercise date in the future rather than the maturity date. This can be split into two parts: one where the option is exercised at $t_k$ and the other where the option will be exercised at one of the future exercise dates. This implied that the option price at $t_k$ is
\[
V_k(S) = 
\max\left\{
g(S), \sup_{\tau\in\mathcal{T}_{k+1}}
\mathbb{E}^{\mathbb{Q}}\!\Big[
 \exp\left({-\int_{t_k}^{\tau} r(u)\,du}\right)\,g(S_\tau)
 \,\Big|\,
 S_{t_k}=S
\Big]
\right\}
\]
A simple calculation shows that $V_k(S) = 
\max\{g(S), C_k(S)\}$ where
\[
C_k(S) :=
\mathbb{E}^{\mathbb{Q}}
\Big[
 \exp\left(-\int_{t_k}^{t_{k+1}} r(u)\,du\right)\,
 V_{k+1}(S_{t_{k+1}})
 \,\Big|\,
 S_{t_k}=S
\Big].
\]
What it means is that we are pricing all the way to maturity date $T$, but we compress all the information about the price in the value $V_{k+1}(S)$, and then view the Bermudan option strictly within two exercise dates $t_k$ and $t_{k+1}$.

Now, from the time interval $t_{n-1}$ and $t_n = T$, the Bermudan option is exactly an European option, and hence can be solved via the PDE
\[
\frac{\partial V}{\partial t}
+ r(t)\,S\,\frac{\partial V}{\partial S}
+ \tfrac{1}{2}\,\sigma^2(t)\,S^2\,\frac{\partial^2 V_n}{\partial S^2}
- r(t)\,V = 0
\quad \text{such that} \quad
V(T,S) = g(S).
\]
For any two consecutive exercise dates $t_k$ and $t_{k+1}$, the price of the option $V_k$ can be calculated using the same PDE as above, but with a different terminal condition. Since we are proceeding backward from the time of maturity, at any time between $(t_k, t_{k+1})$, we already know the solution of the PDE on the interval $(t_{k+1},t_{k+2})$, which is (via Feynman-Kac formula) given by $C_{k+1}(S)$. Now, we set $V_{k+1} = \max\{g(S), C_{k+1}(S)\}$ and solve the above mentioned PDE with terminal condition $V_{k+1}$ as the solution of the PDE for the time interval $(t_k, t_{k+1})$. We do it until $K = 0$, which would finally yield the price of the Bermudan option.

Given that the PDE of the Bermudan option  is exactly the PDE of an European option between any two consecutive exercise dates, it an be converted to a heat equation with time-dependent spatial term and a different terminal condition.

\subsection{Interest-rate derivatives}

Moving to interest-rate derivatives we will be working with different stochastic processes that aim to model the rate dynamics. One of the models that have gained significant traction in the last decade is the LIBOR Marker Model (LMM). Our foray into this model will start with describing the most fundamental derivative in this asset class: zero coupon bonds (ZCBs). One can find the information on ZCBs in the excellent textbook of Brigo and Mercurio~\cite{BrigoM2006}. Our motivation of describing them is to ensure that the readers can understand the subsequent derivatives, their interplay with LIBOR ratesm and the model itself.

A {\it zero-coupon bond} maturing at time $T$ is a contract that pays exactly one unit of currency at time $T$ and nothing before. 
Its time-$t$ price, $P(t,T)$, is called the \emph{discount factor} from $T$ to $t$:
\[
P(t,T) \in (0,1], \qquad t \le T, 
\qquad P(T,T) = 1.
\]
In a risk-neutral short-rate framework, if $(r_s)_{s \ge 0}$ is the short rate, then
\[
P(t,T) =
\mathbb{E}^{\mathbb{Q}}
\left[
\exp\!\left(-\int_t^T r_s \, ds\right)
\middle| \mathcal{F}_t
\right].
\]
Fix a tenor structure
\[
0 = T_0 < T_1 < \dots < T_N < T_{N+1},
\]
and define the year fractions as
\[
\alpha_i := T_{i+1} - T_i, \qquad i = 0,\dots,N.
\]
Now we can let $P(t,T_k)$ denote the price of a zero-coupon bond (ZCB) at time $t$ and maturing at $T_k$. For each tenor interval $[T_i,T_{i+1}]$, the forward LIBOR rate $L_i(t)$, for $t \le T_i$, is defined by the relationship
\begin{equation}
L_i(t)
:= \frac{1}{\alpha_i}\left(\frac{P(t,T_i)}{P(t,T_{i+1})} - 1\right),
\qquad \text{for all} \qquad i=\{0,\dots,N\}.
\label{eq:def-Li}
\end{equation}
Baked in the equation for $L_i(t)$ is the assumption that this rate is defined at time $t$ for the tenor $[T_i, T_{i+1}]$. One uses the notation $L(t; T_i, T_{i+1})$ for such rates. However, for brevity, we have suppressed the notation to $L_i(t)$ since we have fixed the tenors. From the above equation, we can obtain the ZCB price $P(t,T_{i+1})$ as
\begin{equation}
P(t,T_{i+1})
= \frac{P(t,T_i)}{1 + \alpha_i L_i(t)}.
\label{eq:bond-recursion-one-step}
\end{equation}
By iterating \eqref{eq:bond-recursion-one-step}, one obtains
\begin{equation}
P(t,T_k)
= P(t,T_{N+1}) \prod_{j=k}^{N}\frac{1}{1+\alpha_j L_j(t)},
\qquad \text{for all} \qquad k=\{0,\dots,N\}.
\label{eq:bond-recursion-terminal}
\end{equation}
This is the usual LMM relationship between ZCB prices and the tuple of forward LIBOR rates $L(t) := (L_0(t),\dots,L_N(t))$.

Now we are ready to define the LIBOR Market Model. It models the forward LIBOR rates $\{L_i(t)\}_{i=0}^N$ as lognormal under their respective {\it forward measures}. For each $i$, define the $T_{i+1}$-forward measure $\mathbb{Q}^{i+1}$ associated with the numeraire $P(t,T_{i+1})$. Under $\mathbb{Q}^{i+1}$ it is postulated that $L_i$ follows
\begin{equation}
dL_i(t)
= L_i(t) \,\bm{\sigma}_i(t) \cdot d\bm{W}_t^{i+1},
\qquad 0 \le t \le T_i.
\label{eq:LMM-forward-measure}
\end{equation}
Here, $\bm{W}^{i+1}$ is a $d$-dimensional Brownian motion under $\mathbb{Q}^{i+1}$ and $\bm{\sigma}_i(t)\in\mathbb{R}^d$ is the (row) volatility vector of $L_i$. Moreover, 
\[
\bm{W}_t = \left(W_t^{(1)}, \dots W_t^{(d)}\right)
\]
have independent increments meaning $\ip{dW_t^{(i)}}{dW_t^{(j)}} = \delta_{ij}dt$. For the remainder of this section, we assume that the instantaneous volatilities are {\it piecewise constant} in time. More precisely, let
\[
0 = \tau_0 < \tau_1 < \dots < \tau_M
\]
be a partition of $[0,T_{N+1}]$, and assume
\begin{equation}
\bm{\sigma}_i(t) = \bm{\sigma}_i^{(k)}, \qquad t\in(\tau_k,\tau_{k+1}],
\label{eq:piecewise-const-vol}
\end{equation}
for constants $\bm{\sigma}_i^{(k)}\in\mathbb{R}^d$. This is a very common assumption in literature and a common practice on a trading desk for a robust calibration of volatilities against market data. In practice, a very common parametrization is to take the volatility of each forward as constant over its own accrual interval $[T_i,T_{i+1}]$ or over coarser maturity buckets~\cite{BrigoM2006,AndersenP2010,Rebonato2002,AmetranoB2013}. The covariance structure can be written as 
\begin{align}
\mathrm{Cov}(dL_i(t),dL_j(t))
&= L_i(t)L_j(t)\,\mathrm{Cov}\Big(\bm{\sigma}_i(t)\cdot d\bm{W}_t,\ \bm{\sigma}_j(t)\cdot d\bm{W}_t\Big) \nonumber\\
&= L_i(t)L_j(t)\sum_{k=1}^d \sigma_{ik}(t)\sigma_{jk}(t)\,dt =: L_i(t)L_j(t)\,c_{ij}(t)\,dt,
\label{eq:Cov-LiLj}
\end{align}
where
\begin{equation}
c_{ij}(t)
:= \bm{\sigma}_i(t)\cdot\bm{\sigma}_j(t)
= \sum_{k=1}^d \sigma_{ik}(t)\sigma_{jk}(t).
\label{eq:C-entries}
\end{equation}
Thus, the $(N+1)\times (N+1)$ matrix $C(t) := (c_{ij}(t))_{0\le i,j\le N}$ is the instantaneous covariance matrix of the instantaneous {\it log-LIBOR} forward rated in factor representation. Under our assumption that the volatilities are piecewise constant, this and the instantaneous correlation matrix are also piecewise constant matrices.

For pricing of derivatives, it is convenient to move under a single measure. In practice, the \emph{terminal bond numeraire}
\[
N(t) := P(t,T_{N+1})
\]
is considered and the associated $T_{N+1}$-forward (terminal) measure $\mathbb{Q}^{N+1}$, defined by the property that any discounted asset price $X(t)/N(t)$ is considered to be a $\mathbb{Q}^{N+1}$-martingale. Under $\mathbb{Q}^{N+1}$, the forward LIBOR forward rates lose the nice propoerty of being driftless and satisfy the SDEs of the form
\begin{equation}
dL_j(t)
= L_j(t)\left(\mu_j(t,L(t))\,dt
+ \,\bm{\sigma}_j(t)\cdot d\bm{W}_t^{N+1}\right),
\qquad j=0,\dots,N-1,
\label{eq:LMM-terminal-measure-general}
\end{equation}
where $\bm{W}^{N+1}$ is a $d$-dimensional Brownian motion under $\mathbb{Q}^{N+1}$, and $\mu_j(t,L)$ are the drifts induced by the change of measure from the individual forward measures $\mathbb{Q}^{j+1}$ to $\mathbb{Q}^{N+1}$. The drift $\mu_j(t,L)$ is known explicitly and is, in general, a nonlinear function of the later forward rates~\cite{BraceGM1997,BrigoM2006,Fries2007,Golle2001,Lesniewski2007}: 
\begin{equation}
\mu_j(t,L(t))
= - \sum_{l=j+1}^{N-1} 
\frac{\alpha_l L_l(t)}{1+\alpha_l L_l(t)}\,
\bm{\sigma}_j(t)\cdot\bm{\sigma}_l(t),
\quad \text{for all} \quad j = \{0,\dots,N-1\}
\qquad \text{and} \quad \mu_{N}(t,L(t)) = 0.
\label{eq:mu-terminal-vector}
\end{equation}
It implies that under the terminal measure, the drift of $L_j$ depends only on later forwards $L_l$ with $l>j$, and the dependence is nonlinear via the weights
\[
w_l(L_l(t)) := \frac{\alpha_l L_l(t)}{1+\alpha_l L_l(t)}.
\]
Writing the instantaneous covariance matrix as in \eqref{eq:Cov-LiLj}, we summarize the dynamics under $\mathbb{Q}^{N+1}$ as
\begin{equation}
dL_j(t)
= L_j(t)\left(\mu_j(t,L(t))\,dt
+ \bm{\sigma}_j(t)\cdot d\bm{W}_t^{N+1}\right),
\qquad
\mathrm{Cov}\big(dL_i(t),dL_j(t)\big) = L_i(t)L_j(t)c_{ij}(t)dt.
\label{eq:LMM-terminal-summary}
\end{equation}

In the exact LMM case, under a common measure such as $\mathbb{Q}^{N+1}$, the drifts $\mu_j(t,L)$ are clearly non-linear functions of the LIBOR forward rates, and therefore, depend on $L$ as seen in \eqref{eq:mu-terminal-vector}. This state dependence complicates both analytic and numerical methods. A widely used approximation in the LMM literature is to replace this quantity by its intial value. Concretely, one sets 
\begin{equation}
\mu_j(t,L(t))
\approx \mu_j(t) := \mu_j\big(t,L(0)\big),
\label{eq:frozen-drift}
\end{equation}
so that the drift becomes a deterministic function of time only. This approximation dates back to Brace, Gatarek and Musiela~\cite{BraceGM1997} for swaption pricing, and is identified as the ``standard approximation'' in later work by Hunter, J\"ackel and Joshi~\cite{HunterJJ2001}, Papapantoleon~\cite{Papapantoleon2011}, and Papapantoleon and Skovmand~\cite{PapapantoleonS2011}, among others. Brigo and Mercurio~\cite{BrigoM2005} analyze the distributional distance between the LMM and the swap market model, and provide numerical evidence that this approximation yeilds swaption price that is accurate for many European swaptions. We call this model {\it approximate-LMM} and will derive a heat equation PDE for the valuation of European swaption under the model dynamics
\begin{equation}
dL_j(t)
= L_j(t)\left(\mu_j(t)\,dt
+ \bm{\sigma}_j(t)\cdot d\bm{W}_t^{N+1}\right),
\qquad
\mathrm{Cov}\big(dL_i(t),dL_j(t)\big) = L_i(t)L_j(t)c_{ij}(t)dt
\label{eq:approx-LMM-terminal-summary}
\end{equation}
where the volatility vector $\bm{\sigma}_j(t)$ satisfy \eqref{eq:piecewise-const-vol} for all $j$.

\subsubsection{European swaptions}
Before defining European swaptions, we first need to define swaps. A {\it vanilla swap} is a contract between two parties that exchange recurring payments based on a notional amount (henceforth, assume to be one unit of currency), where one party pays based on a fixed rate and the other part pays based on a floating rate. Assume that the payment dates are $T_0 < T_1 < \dots < T_N$,
and year-fractions $\alpha_k := T_k - T_{k-1}$ where $k = \{1,\dots,N\}$. The contract starts at $T_0$ and ends at $T_N$. Assuming $K$ to be the fixed rate. the party paying the fixed rate pays $\alpha_k K$ at the $k$-th payment date. This implies that the present value of the fixed leg is
\[
\text{PV}_{\mathrm{fixed}}(t) = K \sum_{k=1}^{N} \alpha_k P(t,T_k) = K A(t) \qquad \text{where} \qquad
A(t) = \sum_{k=1}^{N} \alpha_k P(t,T_k)
\]
is called the {\it swap annuity}. The floating rate pays $\alpha_k L(T_{k-1}, T_k)$ where $L(T_{k-1}, T_k)$ is the forward rate fixed at $T_{k-1}$ for borrowing between the time period $(T_{k-1}, T_k)$. Given that the forward rate must give the same payoff as investing in the zero-coupon bond maturing at $T_k$ (to remove arbitrage opportunities), we have the relationship
\[
1+\alpha_k L(T_{k-1},T_k) = \frac{1}{P(T_{k-1},T_k)}.
\]
We now compute the time-$t$ present value of the floating payment at $T_k$:
\[
\text{PV}_k(t) = P(t,T_k) \,
\mathbb{E}^{\mathbb{Q}}
\big[
\alpha_k L(T_{k-1},T_k)
\mid
\mathcal{F}_t
\big] = P(t,T_k) \left(\frac{P(t,T_{k-1})}{P(t,T_k)} - 1\right) =
P(t,T_{k-1}) - P(t,T_k).
\]
The latter equality follows from a no-arbitrage principle (refer to Brigo and Mercurio for detail). Finally, the total floating let payment is 
\[
\text{PV}_{\mathrm{float}} = \sum_{k=1}^N (P(t,T_{k-1}) - P(t,T_k)) = P(t,T_0) - P(t,T_N).
\]
If $t=T_0$, then $P(t,T_0) = 1$, and hence, $\text{PV}_{\mathrm{float}} = 1 - P(t, T_N)$. The value of the swap is 
\[
\text{Swap}(t) = \text{PV}_{\mathrm{float}} - \text{PV}_{\mathrm{fixed}} = 
1 - P(t, T_N) - KA(t).
\]

Now we move to first define European style swaptions. A {\it payer} swaption with expiry at $T_0$ gives the party to enter into a vanilla swap starting at $T_0$ where the party pays the fixed rate $K$ and received floating rate. The payoff of such a swaption at the maturity date $T_0$ (which coincides with the start date of the swap agreement) is 
\[
V_{\mathrm{payer}}(T_0) = \left(\text{Swap}(T_0)\right)^+ = 
\left(1 - P(T_0, T_N) - KA(T_0)\right)^+.
\]
A {\it receiver} swaption is just the opposite with the payoff function
\[
V_{\mathrm{receiver}}(T_0) = \left(KA(T_0) - (1 - P(T_0, T_N))\right)^+.
\]
The relationship between the payer and receiver swaption is that the sum of their payoffs is the value of the vanilla swap. So, assuming all the parameters remain same, finding the price of one of them is equivalent to finding the price of the other. Hence, the rest of this section will be devoted to deriving the PDE for the payer swaption. Defining the forward swap rate as $S(t) = (1-P(t,T_N)/A(t)$ (swap rate viewed at time $t$), we can rewrite the payer swaption payoff as
\[
V_{\mathrm{payer}}(t) = A(t)(S(t)-K)^+
\]
to resemble it more like an option payoff. Given that $A(t)$ is a function of price of ZCBs, one can rewritw the option payoff as
\begin{equation}
V_{\mathrm{payer}}(t) = g(L(t)),
\qquad \text{where} \qquad
g:\mathbb{R}_+^{N+1}\to\mathbb{R}_+,
\label{eq:swaption-payoff-G}
\end{equation}
is an explicit function of the LIBOR forward rates $L(t) := (L_0(t),\dots,L_N(t))$. Note that this function is a non-convex function in its arguments.

Let $V(t)$ denote the time-$t$ arbitrage-free price of the European payer swaption with payoff defined in \eqref{eq:swaption-payoff-G}. Using $N(t)=P(t,T_{N+1})$ as numeraire, we have
\begin{equation}
V(t) = P(t,T_{N+1})\,
\mathbb{E}^{N+1}\!\left[\left.
\frac{\Pi}{P(T_0,T_{N+1})}
\ \right|\ \mathcal{F}_t\right],
\qquad 0\le t\le T_0.
\label{eq:swaption-value-numeraire}
\end{equation}
Define the discounted value
\begin{equation}
v(t,L)
:= \frac{V(t)}{P(t,T_{N+1})}
= \mathbb{E}^{N+1}\!\left[\left.
\frac{V_{\mathrm{payer}}(T_0)}{P(T_0,T_{N+1})}
\ \right|\ L(t)=L\right]
= \mathbb{E}^{N+1}\!\left[ g(L(T_0)) \mid L(t)=L\right].
\label{eq:v-def}
\end{equation}
By construction, $\{v(t,L(t))\}_{t\in[0,T_0]}$ is a $\mathbb{Q}^{N+1}$-martingale. Under sufficient regularity, using It\^o's formula, the martingale property and Feynman-Kac formula,
\begin{equation}
\frac{\partial v}{\partial t}(t,L)
+ \sum_{j=0}^{N-1} L_j \mu_j(t) \frac{\partial v}{\partial L_j}(t,L)
+ \frac{1}{2}\sum_{i,j=0}^{N-1} L_i L_j c_{ij}(t)
\frac{\partial^2 v}{\partial L_i \partial L_j}(t,L) = 0,
\quad t<T_0,
\label{eq:swaption-pde-L}
\end{equation}
with terminal condition
\begin{equation}
v(T_0,L) = g(L),
\label{eq:swaption-terminal-L}
\end{equation}
where $g$ is defined in \eqref{eq:swaption-payoff-G}. The approach to transform the above PDE to a heat equation follows the following three steps:
\begin{mylist}{\parindent}
    \item [1.] A logarithmic transformation to remove the state dependence in the diffusion term and the first-order term similar to what we do for the equity options.
    \item [2.] A diagonalization of the coefficients of the resulting diffusion terms to remove the second-order mixed derivatives. We can do this by constant orthonormal matrices under the assumption that the volatilities are piecewise constant matrix.
    \item [3.] A gauge transformation to remove the first-order term and reduce the PDE obtained in the previous step to a heat equation. We can do this because we assumed that the coefficient of the first-order term is time-dependent and not state-dependent.
\end{mylist}

\noindent \paragraph{Removing state dependence} Now let's move to removing the state dependence of the second-order terms. The standard trick that one employs to transform Black-Scholes PDE to the heat equation will be our first step. More concretely, set $X_j := \log L_j$ for all $j\in \{0, \dots, N\}$ and define
\begin{equation}
u(t,x) := v\big(t,e^x\big)
\qquad \text{where} \qquad x\in\mathbb{R}^{N}.
\label{eq:u-def}
\end{equation}
Since $x_i$ only depends on $L_i$, we have
\[
\frac{\partial v}{\partial L_j}
= \frac{1}{L_j} \frac{\partial u}{\partial x_j}
\qquad \text{and} \qquad
\frac{\partial^2 v}{\partial L_i \partial L_j}
= -\frac{1}{L_i^2}\frac{\partial u}{\partial x_i}
+ \frac{1}{L_iL_j}\frac{\partial^2 u}{\partial x_i \partial x_j}.
\]
Substituting into the second-order term of \eqref{eq:swaption-pde-L},
\begin{align*}
\sum_{i,j=0}^{N-1} L_i L_j c_{ij}(t) \frac{\partial^2 v}{\partial L_i \partial L_j}
&= -\sum_{i=0}^{N-1} c_{ii}(t)\,\frac{\partial u}{\partial x_i}
+ \sum_{i,j=0}^{N-1} c_{ij}(t)\,\frac{\partial^2 u}{\partial x_i \partial x_j}.
\end{align*}
The drift term becomes
\[
\sum_{j=0}^{N-1} L_j \mu_j(t) \frac{\partial v}{\partial L_j}
= \sum_{j=0}^{N-1} L_j \mu_j(t) \frac{\partial u}{\partial x_j}
= \sum_{j=0}^{N-1} \mu_j(t)\,\frac{\partial u}{\partial x_j}.
\]
The time derivative remains unchanged, and hence, the PDE \eqref{eq:swaption-pde-L} transforms into
\begin{equation}
\frac{\partial u}{\partial t}(t,x)
+ \sum_{j=0}^{N-1} \left(\mu_j(t) - \frac{c_{jj}(t)}{2}\right)  \frac{\partial u}{\partial x_j}(t,x)
+ \frac{1}{2}\sum_{i,j=0}^{N-1} c_{ij}(t) \frac{\partial^2 u}{\partial x_i \partial x_j}(t,x) = 0.
\label{eq:swaption-pde-X-exact}
\end{equation}

\noindent \paragraph{Removing mixed derivative terms} Our next step is the removal of the mixed second-order derivative terms. Under the assumption that the volatilies are piecewise constant, we use a single orthonormal transformation within each time slab where the volatilies are constant to diagonalize the diffusion term. More concretely, in a subinterval $(\tau_k,\tau_{k+1}]$ on which $C(t)\equiv C^{(k)}$ is constant, we have
\begin{equation}
C^{(k)} = Q^{(k)} \Lambda^{(k)} \big(Q^{(k)}\big)^\top,
\label{eq:C-spectral}
\end{equation}
where $Q^{(k)}$ is an orthonormal matrix and $\Lambda^{(k)}=\mathrm{diag}\left(\lambda_1^{(k)},\dots,\lambda_{N}^{(k)}\right)$ is a diagonal matrix with strictly positive entries $\lambda_m^{(k)}$. Let 
\begin{equation}
y := \big(Q^{(k)}\big)^\top x 
\qquad \text{such that} \qquad = Q^{(k)} y,
\label{eq:y-def}
\end{equation}
and define $w(t,y) := u\big(t,Q^{(k)}y\big)$.
Using the chain rule, we have
\[
\nabla_x u = Q^{(k)} \nabla_y w
\qquad \text{and} \qquad 
\sum_{i,j=0}^{N-1} c_{ij}^{(k)} \frac{\partial^2 u}{\partial x_i \partial x_j}
= \sum_{m=1}^{N} \lambda_m^{(k)} \frac{\partial^2 w}{\partial y_m^2}
\]
while the drift term transforms as
\[
\sum_{j=0}^{N-1} \hat{\mu}_j(t) \frac{\partial u}{\partial x_j}
= \hat{\mu}(t)^\top \nabla_x u
= \big(Q^{(k)}{}^\top \hat{\mu}(t)\big)^\top \nabla_y w
= \sum_{m=1}^{N} \hat{\mu}_m^{(k)}(t) \frac{\partial w}{\partial y_m},
\]
where
\[
\hat{\mu}^{(k)}(t) := \big(Q^{(k)}\big)^\top \hat{\mu}(t).
\]
Thus, on $(\tau_k,\tau_{k+1}]$, the resulting PDE becomes
\begin{equation}
\frac{\partial w}{\partial t}(t,y)
+ \sum_{m=1}^{N} \hat{\mu}_m^{(k)}(t)\, \frac{\partial w}{\partial y_m}(t,y)
+ \frac{1}{2}\sum_{m=1}^{N} \lambda_m^{(k)} \frac{\partial^2 w}{\partial y_m^2}(t,y) = 0.
\label{eq:swaption-pde-Y}
\end{equation}

\noindent \paragraph{Removing first-order terms} To finally remove the first-order terms and obtain a pure heat-type equation, we perform a gauge (exponential) transform. On $(\tau_k,\tau_{k+1}]$, let
\begin{equation}
w(t,y) = \exp\!\left(a^{(k)}(t) + \ip{b^{(k)}(t)}{ y}\right)\, z(t,y),
\label{eq:gauge-ansatz}
\end{equation}
where $a^{(k)}(t) \in \real$ and $b^{(k)}(t) \in \real^N$ are to be chosen later on. Substituting \eqref{eq:gauge-ansatz} in \eqref{eq:swaption-pde-Y}, would introduce zero-order and first-order terms in $z(t,y)$, while the diffusion term will remain unchanged. Since we wish to obtain a pure heat-type equation, we put their respective coefficients to $0$ and solve for $a^{(k)}(t)$ and $b^{(k)}(t)$. Once we have that, our swaption PDE reduces to
\begin{equation}
\frac{\partial z}{\partial t}(t,y)
= \frac{1}{2}\sum_{m=1}^{N} \lambda_m^{(k)} \frac{\partial^2 z}{\partial y_m^2}(t,y),
\qquad t\in(\tau_k,\tau_{k+1}],\ t<T_0.
\label{eq:heat-diagonal}
\end{equation}
This is a multi-dimensional \emph{diagonal heat equation} with constant diffusion coefficients $\lambda_m^{(k)}$ on each time slab. The terminal condition for $z$ at $t=T_0$ is obtained from the swaption payoff $g(\exp(x)$ by undoing the transformations 
\[
L\mapsto X=\log L, \qquad
X\mapsto y=(Q^{(k)})^\top X, \qquad \text{and} \qquad
w\mapsto z
\]
on \eqref{eq:gauge-ansatz}. Globally in time, one proceeds slab by slab from $T_0$ down to $0$, matching the transformed solution across the partition times $\{\tau_k\}$.

\subsubsection{Bermudan swaptions}
Let Bermudan exercise times be $0<t_1<\cdots<t_R=T_0$. Form a unified increasing grid
\[
0=s_0<s_1<\cdots<s_K=T_0,
\qquad
\{s_k\}=\{t_r\}\cup\{\tau_m\}
\]
(duplicates merged). We compute backward from $s_k$ to $s_0$. Fix $k\in\{0,\dots,K-1\}$. On $(s_k,s_{k+1})$ the coefficients are constant (the interval lies inside a unique volatility slab), so \eqref{eq:heat-diagonal} applies with some $\Lambda^{(k)}$ and the solution is an explicit Gaussian convolution:
\begin{equation}
z^{(k)}(s_k,y)=
:=\int_{\mathbb{R}^d}G_k(y-\xi;\Delta_k)\,z^{(k)}(s_{k+1},\xi)\,d\xi,
\qquad \Delta_k:=s_{k+1}-s_k,
\label{eq:heat_kernel_step}
\end{equation}
where
\begin{equation}
G_k(u;\Delta)=\prod_{m=1}^d\frac{1}{\sqrt{2\pi \lambda_m^{(k)}\Delta}}
\exp\!\left(-\frac{u_m^2}{2\lambda_m^{(k)}\Delta}\right).
\label{eq:heat_kernel}
\end{equation}
Thus, between grid points the propagation is explicit (but generally high-dimensional). n an interval where the transform parameters are fixed (a given slab), we have the invertible chain
\[
L \mapsto x=\log L \mapsto y=Q_k^\top x,
\qquad
u(t,x)=\exp\!\big(a_k(t)+b_k(t)^\top y\big)\,z^{(k)}(t,y),
\qquad
v(t,L)=u(t,\log L).
\]
Equivalently, at a fixed time $t$ in slab $k$,
\begin{equation}
z^{(k)}(t,y)=\exp\!\big(-a_k(t)-b_k(t)^\top y\big)\,v\big(t,L(y)\big),
\qquad
L(y)=\exp(Q_k y),
\label{eq:z_from_v}
\end{equation}
and
\begin{equation}
v\big(t,L(y)\big)=\exp\!\big(a_k(t)+b_k(t)^\top y\big)\,z^{(k)}(t,y).
\label{eq:v_from_z}
\end{equation}
Now, there are two points to consider: (a) the Bermudan exercise dates, and (b) the point where the volatilities and correlation structure changes. We proceed below to work out both scenarios.

\noindent \paragraph{Exercise dates:}~At an exercise time $s_k=t_r$, the Bermudan condition in physical (undiscounted) value is
\[
V(s_k^-,L)=\max\big(V_{\mathrm{intr}}(s_k,L),\,V_{\mathrm{cont}}(s_k^+,L)\big).
\]
In discounted form $v=V/N$ (under any chosen numeraire $N$), this becomes
\begin{equation}
v(s_k^-,L)=\max\big(g_{\mathrm{intr}}(s_k,L),\,v(s_k^+,L)\big),
\qquad
g_{\mathrm{intr}}(s_k,L):=\frac{V_{\mathrm{intr}}(s_k,L)}{N(s_k)}.
\label{eq:exercise_v}
\end{equation}
Using \eqref{eq:z_from_v}, the corresponding jump condition for $z^{(k)}$ at $s_k$ is
\begin{equation}
z^{(k)}(s_k^-,y)
=
\max\!\left(
\exp\!\big(-a_k(s_k)-b_k(s_k)^\top y\big)\,g_{\mathrm{intr}}(s_k,L(y)),
\ \ z^{(k)}(s_k^+,y)
\right).
\label{eq:exercise_z}
\end{equation}
This is the early-exercise ``free boundary'' condition expressed in the heat coordinates; the nonlinearity (the max) is the essential Bermudan feature and is treated by backward induction (often combined with regression/approximation in high dimensions) \cite{Andersen2000,Glasserman2003}.

\noindent \paragraph{Volatility/ correlation change dates:}~Moving on to the inflection point where the volatilities change ($s_k=\tau_m$), the physical value $v(s_k,L)$ is continuous in time (no jump purely from coefficient relabeling):
\begin{equation}
v(s_k^-,L)=v(s_k^+,L), \qquad \forall L.
\label{eq:continuity_v}
\end{equation}
However, the transforms differ on the left and right because $C^{(k)}$ changes and so does $(Q_k,\Lambda_k,a_k,b_k)$. Let the representation on $(s_{k-1},s_k)$ use parameters $(Q_{k-1},a_{k-1},b_{k-1})$ and on $(s_k,s_{k+1})$ use $(Q_k,a_k,b_k)$. For a fixed physical state $L$, define
\[
x=\log L,\quad y^-:=Q_{k-1}^\top x,\quad y^+:=Q_k^\top x.
\]
Then \eqref{eq:continuity_v} and \eqref{eq:v_from_z} imply
\[
\exp\!\big(a_{k-1}(s_k)+b_{k-1}(s_k)^\top y^-\big)z^{(k-1)}(s_k,y^-)
=
\exp\!\big(a_k(s_k)+b_k(s_k)^\top y^+\big)z^{(k)}(s_k,y^+).
\]
Let the orthogonal matrix $R_k:=Q_k^\top Q_{k-1}$ so that $y^+=R_k y^-$. Then the interface map for the heat-representation is
\begin{equation}
z^{(k-1)}(s_k,y^-)
=
\exp\!\Big(a_k(s_k)-a_{k-1}(s_k)+b_k(s_k)^\top (R_k y^-)-b_{k-1}(s_k)^\top y^-\Big)\,
z^{(k)}(s_k,R_k y^-).
\label{eq:interface_map}
\end{equation}
This interface condition is the precise way to ``propagate across slabs'' when the diagonalizing rotation changes (which is generic unless covariance matrices commute across time) \cite{Rebonato2002,BrigoM2006,Glasserman2003}.

\subsubsection{Cancelable swaps}
A cancelable swap is a swap where one party (usually the fixed-rate payer) has the right to terminate the swap early at a discrete set of dates. If the discrete set if dates when the party can exercise this optionality of terminating the swap coincides exactly with the exercise dates of a Bermudan swaption, then we obtain the following relationship: the value of a cancelable swap is exactly the difference between the value of the swap and the value of the Bermudan swaption. This relationship implies that pricing the cancelable swap can be done by pricing the Bermudan swaption.

\subsection{Hybrid derivatives}
We now move to derivatives that depend on the dynamics of interest-rates and foreign-exchange (FX) rates. Foreign-exchange market is a highly liquid market where the underlying asset is typically the exchange rate between two fiat currencies. A Black-Scholes type SDE  models the exchange rate between two currencies, where each currency's value depends on their respective interest-rate dynamics. It is common to call one currency as domestic currency and the other as foreign currency. Thus, their rate dynamics are usually classified in domestic leg and foreign leg, where it is understood that the FX spot price is given by the Black Scholes equation as we describe below.

\subsubsection{Cross-currency swaptions}
Similar to vanilla swaptions discussed before, a cross-currency swaption gives a party right to enter into a cross-currency swap where the trading parties enter into a swap agreement to pay each other based on floating rate governed by the interest-rate in their respective currencies. This section develops the pricing PDE for a cross-currency swaption (CCS swaption) under a domestic money-market numeraire. The setup includes domestic and foreign LMMs, FX dynamics, quanto drift adjustments, and market-practical assumptions of drift approximation and piecewise-constant volatility. These assumptions allow the CCS swaption pricing PDE to be transformed into a heat-type PDE on each volatility slab, enabling efficient numerical propagation. The framework builds on the single-currency LMM of Brace, Gatarek and Musiela and its multi-currency extensions, as discussed in various references~\cite{BraceGM1997,MiltersenSS1997,Rebonato2002,BrigoM2006,AndersenP2010,BennerZ2008,SchoenmakersH2011}.

To begin with, let us first describe the domestic and foreign LMMs. Consider domestic tenor dates 
\[
0=T^d_0<T^d_1<\cdots<T^d_{N_d+1},\qquad \alpha^d_i=T^d_{i+1}-T^d_i,
\]
and foreign tenor dates
\[
0=T^f_0<T^f_1<\cdots<T^f_{N_f+1},\qquad \alpha^f_j=T^f_{j+1}-T^f_j.
\]
Domestic zero-coupon bonds are denoted $P^d(t,T)$, foreign zero-coupon bonds $P^f(t,T)$. The forward domestic LIBORs are defined by the standard single-currency relation
\[
L_i^d(t)=\frac{1}{\alpha^d_i}\left(\frac{P^d(t,T^d_i)}{P^d(t,T^d_{i+1})}-1\right),
\qquad
P^d(t,T^d_{i+1})=\frac{P^d(t,T^d_i)}{1+\alpha^d_i L^d_i(t)},
\]
and similarly for the foreign $L_j^f(t)$. These definitions allow all domestic and foreign bond prices to be constructed recursively from forward rates. The single-currency LMM structure follows the classical BGM framework \cite{BraceGM1997,MiltersenSS1997,Rebonato2002,BrigoM2006,HullW2000a}.

Under each domestic forward measure $\mathbb{Q}^{d,i+1}$ with numeraire $P^d(t,T^d_{i+1})$, the domestic forward rate is driftless:
\[
dL_i^d(t)=L_i^d(t)\,\sigma_i^d(t)\,dW^{d,i+1}_i(t),
\qquad 
\ip{dW^{d,i+1}_i}{dW^{d,i+1}_k}=\rho^{dd}_{ik}(t)\,dt,
\]
and similarly for the foreign forwards under their own forward measures $\mathbb{Q}^{f,j+1}$. Volatilities and correlations may differ between domestic and foreign curves, and empirical studies support multi-factor correlation structures across forward rates and currencies \cite{Rebonato2002,AndersenP2010,BennerZ2008}.

Now let us move to explain the FX dynamics between the currency pair. Let $S(t)$ denote the FX rate in domestic currency per foreign currency unit. Under the domestic money-market numeraire
\[
B^d(t)=\exp\!\left(\int_0^t r^d(u)\,du\right),
\]
the domestic risk-neutral measure $\mathbb{Q}^d$ is defined so that any domestic-denominated asset discounted by $B^d$ is a martingale. The foreign money-market account $B^f(t)$ is defined analogously using the foreign short rate $r^f(t)$. A standard specification for FX under its foreign risk-neutral measure is
\[
\frac{dS(t)}{S(t)} = (r^d(t)-r^f(t))\,dt + \sigma^S(t)\cdot dW^f(t).
\]
After changing to the domestic measure $\mathbb{Q}^d$, the FX drift picks up quanto terms that depend on the correlation between FX and the foreign numeraire, see \cite{BrigoM2006,AndersenP2010,Jamshidian1997,SchoenmakersC2003,BrigoBN2015}. In particular, one can write under $\mathbb{Q}^d$:
\[
\frac{dS(t)}{S(t)}=\mu^S(t)\,dt+\sigma^S(t)\cdot dW^d(t),
\]
where
\[
\mu^S(t)
=
r^d(t)-r^f(t)-\sum_{j=1}^{N_f}\frac{\alpha^f_j L^f_j(t)}{1+\alpha^f_j L^f_j(t)}\,\sigma^f_j(t)\cdot\sigma^S(t),
\]
and $W^d$ is an $m$-dimensional Brownian motion under $\mathbb{Q}^d$.

\noindent\textbf{Explicit Domestic-Measure Drifts for Domestic and Foreign LMMs.}
The domestic forward-rate drifts under the domestic money-market measure follow the classical BGM structure. Under $\mathbb{Q}^d$, one can write
\[
dL_i^d(t)=L_i^d(t)\left(\mu_i^d(t)\,dt+\sigma_i^d(t)\cdot dW^d(t)\right),
\]
with drift
\[
\mu_i^d(t)
=
\sum_{k=i+1}^{N_d}
\frac{\alpha^d_k L_k^d(t)}{1+\alpha^d_k L_k^d(t)}\,\sigma_i^d(t)\cdot\sigma_k^d(t).
\]
This expression, derived via Radon--Nikodym arguments for measure changes between forward measures and the money-market measure, appears in \cite{BraceGM1997,Jamshidian1997,Rebonato2002,BrigoM2006,AndersenP2010} and is central to LMM implementations.

For foreign rates, under their own forward measures $dL_j^f(t)=L_j^f(t)\,\sigma_j^f(t)\cdot dW^{f,j+1}(t)$, switching to the domestic money-market measure produces both the usual LMM drift and a quanto term arising from FX. Under $\mathbb{Q}^d$,
\[
dL_j^f(t)=L_j^f(t)\left(\mu_j^f(t)\,dt+\sigma_j^f(t)\cdot dW^d(t)\right),
\]
where
\[
\mu_j^f(t)
=
\sum_{\ell=j+1}^{N_f}
\frac{\alpha^f_\ell L_\ell^f(t)}{1+\alpha^f_\ell L_\ell^f(t)}\,\sigma_j^f(t)\cdot\sigma_\ell^f(t)
\;+\;
\sigma_j^f(t)\cdot\sigma^S(t).
\]
The second term is the quanto drift adjustment coupling foreign forward rates to FX volatility. Systematic derivations of these drifts are given in \cite{BrigoM2006,AndersenP2010,Jamshidian1997,SchoenmakersC2003,SchoenmakersH2011,BennerZ2008}.

\bigskip

\noindent\textbf{Piecewise-Constant Volatility and Frozen Drift.}
Market calibration of IR and FX term structures is performed in discrete tenor buckets; volatility surfaces are interpolated piecewise in time or variance. Accordingly, banks implement piecewise-constant volatilities for all stochastic factors, as documented in \cite{Rebonato2002,BrigoM2006,AndersenP2010,HullW2000a,HullW2000b,CarrW2007,BraceGM1997}. FX implied volatility quotes (ATM, risk reversal, butterfly) are specified at standard maturities (e.g.\ 1M, 3M, 6M, 1Y, 2Y, 5Y), and piecewise-constant instantaneous volatility within these buckets preserves integrated variance relevant for CCS products. Because CCS swaptions are primarily driven by medium- and long-dated IR exposures rather than short-dated FX micro-structure, the exact short-time shape of the FX instantaneous volatility curve is less important than the correct time-bucketed variance and correlations.

The exact domestic-measure drifts $\mu_i^d(t)$ and $\mu_j^f(t)$ are nonlinear functions of all future forwards and FX. The frozen-drift approximation replaces these state-dependent drifts with deterministic time functions, typically by freezing the LIBORs and FX in the drift at their initial (or forward) levels. This method, justified numerically in \cite{HunterJJ2001,HullW2000a,HullW2000b} and extended in cross-currency LMMs \cite{BennerZ2008,SchoenmakersH2011}, improves tractability while preserving calibration accuracy for swaption-like payoffs. Under frozen drift and piecewise-constant volatilities, the drift coefficients become functions of time only, and the diffusion matrix of log-variables becomes piecewise constant in time, exactly the regime in which heat-equation reductions are most effective.

\bigskip

\noindent\textbf{CCS Swaption Payoff in Domestic Currency.}
At option maturity $T_0$, a payer CCS swaption gives the right to enter a cross-currency swap receiving the foreign leg and paying the domestic leg. Let the domestic leg value be expressed as a deterministic function of domestic forwards,
\[
V^d_{\text{leg}}(T_0)=F^d(L^d(T_0)),
\]
and the foreign leg value in foreign currency as
\[
V^f_{\text{leg}}(T_0)=F^f(L^f(T_0)).
\]
Its domestic currency value is $S(T_0)\,F^f(L^f(T_0))$. The CCS swaption payoff is therefore
\[
\Pi=\Big(S(T_0)\,F^f(L^f(T_0))-F^d(L^d(T_0))\Big)^+,
\]
a function of the full state $(L^d,L^f,S)$. Denoting $X=(X^1,\dots,X^d)$ as the state vector, define $G(X)$ as the payoff function.

\bigskip

\noindent\textbf{Pricing Under the Domestic Numeraire.}
Under $\mathbb{Q}^d$ and numeraire $B^d$, the discounted price
\[
v(t,x)=\mathbb{E}^{\mathbb{Q}^d}[G(X(T_0))\mid X(t)=x]
\]
is a martingale. The domestic money-market numeraire is the natural choice because the payoff is denominated in domestic currency and foreign cashflows are converted using $S(t)$. Using this numeraire automatically incorporates FX--IR covariances and quanto effects through the drift adjustments described above \cite{BrigoM2006,AndersenP2010,Jamshidian1997,SchoenmakersC2003,BrigoBN2015}.

With frozen drift and piecewise-constant covariance over slabs $(\tau_k,\tau_{k+1}]$, the state evolves as
\[
\frac{dX^a_t}{X^a_t}=\bar\mu_a^{(k)}(t)\,dt+\sigma_a^{(k)}\cdot dW^d_t,
\quad t\in(\tau_k,\tau_{k+1}],
\]
where $\bar\mu_a^{(k)}(t)$ is deterministic and $\sigma_a^{(k)}$ are constant vectors on the slab. The backward PDE for $v$ is
\[
\partial_t v+\sum_a x^a\bar\mu_a(t)\partial_{x^a}v
+\frac12\sum_{a,b}x^a x^b c_{ab}(t)\partial^2_{x^a x^b}v=0,
\qquad v(T_0,x)=G(x),
\]
where $c_{ab}(t)=\sigma_a(t)\cdot\sigma_b(t)$ and $C(t)=[c_{ab}(t)]$ is piecewise constant in time.

\bigskip

\noindent\textbf{Log Transform.}
Define log-coordinates $y^a=\log x^a$ and $u(t,y)=v(t,e^y)$. The Lamperti transform removes the multiplicative diffusion factors. Using standard chain-rule calculations, one obtains
\[
\partial_t u+\sum_a \tilde\mu_a(t)\partial_{y^a}u
+\frac12\sum_{a,b}c_{ab}(t)\partial^2_{y^a y^b}u=0,
\]
with
\[
\tilde\mu_a(t)=\bar\mu_a(t)-\frac12 c_{aa}(t).
\]
On each slab $(\tau_k,\tau_{k+1})$, $C^{(k)}=[c_{ab}^{(k)}]$ is constant.

\bigskip

\noindent\textbf{Diagonalization.}
Let $C^{(k)}=Q^{(k)}\Lambda^{(k)}(Q^{(k)})^\top$ be the eigen-decomposition of the covariance matrix on slab $k$, with orthogonal $Q^{(k)}$ and diagonal $\Lambda^{(k)}=\mathrm{diag}(\lambda_1^{(k)},\dots,\lambda_d^{(k)})$. Setting
\[
z=(Q^{(k)})^\top y,\qquad w(t,z)=u(t,y)=u\big(t,Q^{(k)}z\big),
\]
removes mixed derivatives:
\[
\partial_t w+\sum_{m=1}^d \tilde\mu_m^{(k)}(t)\,\partial_{z_m} w
+\frac12\sum_{m=1}^d \lambda_m^{(k)}\,\partial^2_{z_m z_m} w=0,
\]
with $\tilde\mu^{(k)}(t)=(Q^{(k)})^\top\tilde\mu(t)$.

\bigskip

\noindent\textbf{Gauge Transform and Heat Equation.}
With the exponential ansatz
\[
w(t,z)=\exp(a_k(t)+b_k(t)\cdot z)\,\phi^{(k)}(t,z),
\]
and appropriate choice of $b_k(t)$ and $a_k(t)$ so that first-order and zero-order terms vanish, one arrives at a pure heat-type PDE on slab $k$:
\[
\partial_t \phi^{(k)}(t,z)=\frac12\sum_{m=1}^d \lambda_m^{(k)}\,\partial^2_{z_m z_m}\phi^{(k)}(t,z),
\qquad t\in(\tau_k,\tau_{k+1}),
\]
with terminal condition inherited from the payoff after all transformations. Thus, on each slab, the CCS swaption price is propagated by a multidimensional Gaussian convolution with eigenvalues $\{\lambda_m^{(k)}\}$.

\bigskip

\noindent\textbf{MTM versus Non-MTM CCS Swaptions.}
In a mark-to-market (MTM) CCS, the notional of the foreign leg is periodically reset using the prevailing FX rate. Non-MTM swaps maintain a fixed notional. This distinction changes the structure of $F^f(L^f)$: in MTM CCS, the foreign leg is less sensitive to long-dated FX levels because FX resets reduce accumulated FX mismatch, whereas non-MTM CCS exhibit stronger cross-currency optionality. Consequently, MTM CCS swaptions exhibit weaker FX--IR convexity effects and smaller quanto drift adjustments. Non-MTM CCS swaptions are more sensitive to modelling assumptions on FX volatility and correlations. These differences directly influence the drift terms $\bar\mu_a(t)$ and the terminal payoff $G(x)$, hence altering the PDE and its transformed terminal condition. Detailed discussions of MTM versus non-MTM structures can be found in \cite{AndersenP2010,Rebonato2002}.

\bigskip

\noindent\textbf{Why domestic numeraire} Readers may note that while in European swaptions, we worked with bond measure, in cross-currency swaption, we work with domestic money market numeraire. 
The  cross-currency swaption gives the right to enter a swap where domestic cash flows are exchanged for foreign cash flows, typically with FX conversion at each payment or at maturity. The value of this swaption is ultimately expressed in the domestic currency, since that’s what we're pricing in. Moreover, both domestic and foreign legs may have floating structures. The foreign leg must be converted to domestic via the FX rate, which is a stochastic process. The domestic bond measure is not a natural numeraire for the foreign leg, which has payouts at different times and in a different currency. Therefore, moving to domestic bond numeraire will cause complexities. The FX conversion introduces drift adjustments (quanto effects) that are cleaner when working under a risk-neutral measure for the domestic currency. Hence, we stay under the domestic money market account numeraire because it allows for unified valuation of all cash flows—including the FX-adjusted foreign leg—in one currency.

\bibliographystyle{alpha}
\bibliography{bibliography}

\end{document}